\newtheorem{theorem}{Theorem}
\newtheorem{lemma}{Lemma}
\newtheorem{proposition}{Proposition}
\theoremstyle{definition}
\newtheorem{remark}{Remark}
\newtheorem{assumption}{Assumption}
\newcommand{\red}{\color{red}}
\def\pr{\textnormal{pr}}
\def\AT{{ss}}
\def\CO{{s\bar{s}}}
\def\NT{{\bar{s}\bar{s}}}
\def\DE{{01}}  
\newcommand*{\indep}{%
 \mathbin{%
  \mathpalette{\@indep}{}%
 }%
}
\newcommand*{\nindep}{%
 \mathbin{% % The final symbol is a binary math operator
  \mathpalette{\@indep}{\not}% \mathpalette helps for the adaptation
  % of the symbol to the different math styles.
 }%
}
\newcommand*{\@indep}[2]{%
 % #1: math style
 % #2: empty or \not
 \sbox0{$#1\perp\m@th$}% box 0 contains \indep symbol
 \sbox2{$#1=$}% box 2 for the height of =
 \sbox4{$#1\vcenter{}$}% box 4 for the height of the math axis
 \rlap{\copy0}% first \perp
 \dimen@=\dimexpr\ht2-\ht4-.2pt\relax
 % The equals symbol is centered around the math axis.
 % The following equations are used to calculate the
 % right shift of the second \perp:
 % [1] ht(equals) - ht(math_axis) = lin\pi_width + 0.5 gap
 % [2] right_shift(second_perp) = lin\pi_width + gap
 % The line width is approximated by the default line width of 0.4pt
 \kern\dimen@
 {#2}%
 % {\not} in case of \nindep;
 % the braces convert the relational symbol \not to an ordinary
 % math object without additional horizontal spacing.
 \kern\dimen@
 \copy0 % second \perp
}
\def\Pr{\textnormal{pr}}
\def\T{{ \mathrm{\scriptscriptstyle T} }}
\def\AT{{11}}
\def\CO{{01}}
\def\DE{{10}}
\def\NT{{00}}
\def\LL{{11}}
\def\DD{{00}}
\def\PP{{11}}
\def\PN{{01}}
\def\NP{{10}}
\def\NN{{00}}
\def\E{\mathbb{E}}  
\newcommand*{\addFileDependency}[1]{
  \typeout{(#1)}
  \@addtofilelist{#1}
  \IfFileExists{#1}{}{\typeout{No file #1.}}
}
\begin{document}\hypersetup{linkcolor=black}
\title{\bf \huge {Pseudo-}strata learning via maximizing misclassification reward}   
   
 			\author{Shanshan Luo\textsuperscript{1},    Peng Wu\textsuperscript{1*},   Zhi Geng\textsuperscript{1,2}  
       \\\\	\textsuperscript{1}School of Mathematics and Statistics, Beijing Technology and Business University \\\\ 
  	\textsuperscript{2}School of Mathematical Sciences, Peking University}

\date{} % This removes the date
\maketitle  
\hypersetup{linkcolor=blue}

\begin{abstract} 
Online advertising aims to increase user engagement and maximize revenue, but users respond heterogeneously to ad exposure. Some users purchase only when exposed to ads, while others purchase regardless of exposure, and still others never purchase. 
This heterogeneity can be characterized by latent response types, commonly referred to as principal strata, defined by users' joint potential outcomes under exposure and non-exposure. However, users' true strata are unobserved, making direct analysis infeasible.   
In this article, instead of learning the true strata, we propose a novel approach that learns users' \emph{pseudo-strata} by leveraging information from an outcome (revenue) observed after the response (purchase). We construct pseudo-strata to classify users and introduce misclassification rewards to quantify the expected revenue gain of pseudo-strata-based policies relative to true strata.    
Within a Bayesian classification framework, we learn the pseudo-strata by optimizing the expected revenue. 
To implement these procedures, we introduce identification assumptions and estimation methods, and establish their large-sample properties.   
Simulation studies show that the proposed method achieves more accurate strata classification and substantially higher revenue than baselines. We further illustrate the method using a large-scale industrial dataset from the Criteo Predictive Search Platform. 
\end{abstract}

{\it Keywords:} Bayesian decision theory,  policy learning,    principal stratification, pseudo-type classification

\section{Introduction}
% \bcol{Key issues:
% \begin{itemize}
%     \item 1. Should we emphasize that our method focuses on individual-level {\red type}s? 

%     \item 2. It focuses heavily on personalized policy learning; it would be better to reduce the discussion.

%     \item 3. The uniqueness of our method: infer {\red type} using the information of $Y$. It is important to say the strength of using $Y$. (why use $Y$ could improve the classification accuracy of {\red type}?)

%     \item 4. Following \#3: could we use a toy example to illustrate the key idea or main strengths of our method? 

%     \item 5. Following \#3: Do these strengths also apply to other application scenarios? 
% \end{itemize}
% }

\subsection{Background and Main Contributions}

%{type} \citep{Angrist:1996} or 
In ad recommendation settings, when users are exposed to an ad, they may exhibit diverse purchase behaviors (responses), resulting in varying levels of revenue (outcomes). 
 To characterize such individual heterogeneity, we introduce a latent variable, referred to as principal stratification (strata) \citep{Frangakis:2002}, which is defined by the joint values of potential response variables.  
 This latent variable typically takes four possible values, representing four distinct user types: always buyers, who purchase regardless of ad exposure; never buyers, who never purchase under any condition; persuadable buyers, who purchase only when exposed to the ad; and discouraged buyers, who purchase only when not exposed to the ad.  
Inferring principal strata enables at least three key applications with significant theoretical and practical implications: 

\begin{itemize} 
\item[(i)] \emph{Causal attribution analysis}. 
Knowledge of principal strata enables causal attribution by exploring users who are persuaded, discouraged, or unaffected by ad exposure~ \citep{pearl2009causality,KUROKIandCAI,shingaki2021identification}.  
    
\item[(ii)] \emph{Optimal ad recommendation policy}.  
Knowledge of principal strata enables the development of optimal ad recommendation policies that maximize revenue under resource constraints. 
Platforms can expose ads to persuadable buyers, avoid discouraged buyers, and exclude always and never buyers~\citep{ben2024policy,li2023trustworthy, Wu-etal2025}.

\item[(iii)] \emph{{Principal causal effects}}. 
Naive revenue comparisons between purchasers with and without ad exposure lack causal interpretability due to latent strata heterogeneity. 
     Valid causal inference requires conditioning on latent principal strata~\citep{Ding:2011,Jiang2016,Wang2017iv,jiang2021identification}. 
 \end{itemize}

 % principal strat
Despite these important applications, the true  principal strata are unobserved, rendering direct learning infeasible. 
In this article, instead of learning the true strata, we propose a novel approach that learns users' {pseudo-strata} by leveraging information from an outcome variable (revenue) observed after the response (purchase).
Specifically, we make four main contributions: 
\begin{itemize}
    \item[(i)] We propose a novel conceptual framework for learning {pseudo-strata}. The framework proceeds in three steps. First, we construct pseudo-strata to classify users and define the corresponding optimal  policies. Second, we introduce misclassification rewards to quantify the expected revenue of each policy conditional on users' true {principal strata}. Finally, we define the optimal {pseudo-strata} and learn them by maximizing the expected revenue.
\item[(ii)]  We establish identification results under a new set of  model separation conditions, which disentangle treatment effects across latent response strata, and further ensure that both the misclassification reward and the {pseudo-strata} are identifiable from observed data. 
\item[(iii)]  We develop flexible estimation and inference procedures that accommodate both parametric and semiparametric specifications, and establish their large-sample properties, including consistency and convergence rates.  
\item[(iv)]  Simulation studies show that the proposed method achieves more accurate classification and substantially higher policy revenue than baselines. 
An empirical analysis on a large-scale industrial dataset from the Criteo Predictive Search platform further demonstrates its practical effectiveness, yielding significantly higher revenue. 
\end{itemize}

% ==============================================
\subsection{Related Work} 
%The true principal strata are unobservable, making direct learning infeasible. Instead, 
Instead of learning the true principal strata, much of the existing literature has focused on deriving bounds for the strata probability distribution under minimal assumptions. For example, 
\citet{tian2000probabilities} derived sharp bounds for strata probabilities under the strong ignorability assumption, while \citet{KUROKIandCAI} further tightened these bounds by incorporating covariate information. \citet{dawid2017probability} showed that introducing auxiliary information through a mediator can narrow these bounds, and \citet{Wu-etal-2025-Harm} proposed tighter bounds by imposing constraints on the correlation coefficient between potential outcomes.

Nevertheless, these bounds are often too wide to support precise causal inference; subsequent research has explored conditions under which point identification of strata probabilities becomes achievable. 
\citet{tian2000probabilities} presented identification results under the monotonicity assumption, provided that the marginal causal effects themselves are identifiable. Similarly, \citet{pearl2009causality} derived explicit functional relationships linking causal effects to strata probabilities.  Within the framework of natural direct and indirect effects \citep{pearl2022direct}, \citet{robins2010alternative} demonstrated that under the assumption of no unmeasured confounding, strata probabilities are identifiable if either (i) the two potential response variables are independent, or (ii) one is a deterministic function of the other.  \citet{shingaki2021identification} proposed two classes of identification conditions: one incorporates a single proxy variable together with causal risks, while the other relies on two proxy variables without using causal risks. 
  \citet{Zhang:2009} established that when the outcome variable follows a mixture of normal distributions, the proportions of  strata are also identifiable. Alternative stochastic monotonicity assumptions for two potential purchase outcomes were used by \citet{roy2008modeling} and \citet{lee2010weight} to establish the identification of strata probabilities.

Unlike existing work that focuses on estimating principal strata probabilities (or their bounds), our approach aims to learn pseudo-strata that maximize expected revenue while accounting for potential misclassification of principal strata. This aligns naturally with practical objectives in causal inference, such as quantifying population heterogeneity, characterizing strata-specific causal effects, and designing optimal treatment policies to maximize revenue.

% ===========================================
\section{Notation and Setup}
\label{sec:framework}
Assume that $n$ individuals are independently sampled from an infinite superpopulation. 
 For each individual $i$, let $Z_i$ denote the treatment indicator describing whether the individual is exposed to an ad, $X_i$ denote a vector of $m$-dimensional covariates such as demographic characteristics, $S_i$ denote the response status representing whether the individual would purchase the product, and $Y_i$ denote the revenue corresponding to the purchase behavior. Using the potential outcomes framework, each individual has two potential response statuses: $S_{1i}$ if treated (ad exposure) and $S_{0i}$ if untreated (control, ad non-exposure). 
 Similarly,  Similarly, each individual has two potential revenues: $Y_{1i}$, corresponding to ad exposure, and $Y_{0i}$, corresponding to non-exposure. 
Under the consistency assumption, 
$Y_i = Z_i Y_{1i} + (1 - Z_i) Y_{0i}$ and 
$S_i = Z_i S_{1i} + (1 - Z_i) S_{0i}$.  
 For notational simplicity, we  suppress the subscript $i$ in the subsequent exposition.

In practice, some individuals may view an ad but choose not to purchase, while others may purchase without being exposed to the ad. To capture such heterogeneity in  behavior, we adopt the \textit{{type}s} framework \citep{Angrist:1996} or the \textit{principal stratification} framework \citep{Frangakis:2002,Rubin2006} under which individuals are categorized into four latent strata: 
 always buyers ($G = \LL$, with $S_0 = S_1 = 1$) purchase regardless of treatment; never buyers ($G = \DD$, with $S_0 = S_1 = 0$) never purchase; persuadable buyers ($G = 01$, with $S_0 = 0, S_1 = 1$) purchase only when treated; and discouraged buyers ($G = 10$, with $S_0 = 1, S_1 = 0$) purchase only when untreated.  Let $\pi_{s_0s_1}(X):=\pr(G=s_0s_1\mid X)$ denote the conditional probability that an individual with covariates $X$ belongs to the type $G=s_0s_1$.
 
 Although the {four strata} provide a useful conceptual framework, each individual's true {type} $G$ remains unobserved. Learning this latent variable is critical for three key applications:   
(i) principal causal effects, such as understanding treatment effect heterogeneity across strata \citep{Ding:2011,Jiang2016,Wang2017iv};
(ii)  causal attribution, such as determining which ads drive conversions \citep{KUROKIandCAI,shingaki2021identification}; and
(iii) personalized recommendation policy, such as targeting high-value users to maximize revenue  \citep{ben2024policy,li2023trustworthy,Wu-etal-2025-Harm}. 
In this article, we propose a novel classification-based approach that learns individuals' principal strata while maximizing expected revenue under misclassification uncertainty. Before introducing the proposed method, we discuss a commonly used approach and reveal its limitation in our setting, see Remark \ref{remark1} below. 

\begin{remark}  \label{remark1}
An intuitive approach, referred to as the posterior mode rule, infers an individual's strata by computing the probabilities of belonging to each of the {four strata} (\(\PP, \PN, \NP, \NN\)) and assigning the individual to the stratum with the highest probability (see, e.g., \citealp{tian2000probabilities, KUROKIandCAI, dawid2017probability, pearl2009causality, Zhang:2009, shingaki2021identification}); also see \eqref{eq:post-ps} in Section \ref{sec:policy-studies} for details. 
However, these probabilities are typically derived from $X$ and $S$ alone, without incorporating the information contained in $Y$. Such a classification approach often yields less accurate strata, resulting in suboptimal recommendation policies and substantial economic losses, as demonstrated in our simulation studies and real-world applications, where this method is included as a baseline for numerical comparison. 
\end{remark}

Let ${\mathcal{L}}_{z,s_0s_1}(X) = \mathbb{E}(Y_z \mid G = s_0s_1, X)$ denote the conditional expectation of the potential outcome $Y_z$ given the true principal stratum $G=s_0s_1$ and covariates $X$. In the context of personalized treatment assignment, ${\mathcal{L}}_{z,s_0s_1}(X)$ represents the expected revenue (utility) from assigning treatment $z$ to an individual in stratum $s_0s_1$ with covariates $X$.
%In the literature on principal stratification with utilities \citep{ben2024policy}, these stratum-specific utilities are specified by the researcher based on domain knowledge or policy objectives. In contrast, our approach treats ${\mathcal{L}}_{z,s_0s_1}(X)$ as an estimable quantity that can be learned from observed post-outcome data. 
A key feature of our setting is that the post-treatment outcome (revenue) is only observed for individuals who complete the intermediate response (purchase). This structural constraint naturally implies   
$ {\mathcal{L}}_{0,01}(X) = {\mathcal{L}}_{1,10}(X) = {\mathcal{L}}_{0,00}(X) = {\mathcal{L}}_{1,00}(X) = 0,$ 
which states that individuals who do not purchase under treatment $z$ generate zero revenue (utility). 
We define the marginal expected utility for stratum $G=s_0s_1$ under treatment $z$ as $\mathcal{L}_{z,s_0s_1} := \mathbb{E}(Y_z \mid G = s_0s_1).$
% =================================
%\section{Preliminaries for Pseudo-{\red strata} learning}
\section{Statistical Framework for Pseudo-Strata Learning}
\label{sec: Preliminaries}
\subsection{Optimal Treatment Policy Based on Strata}
\label{sec:optimal-policy}
In this section, we assume that the causal estimands $\pi_{s_0s_1}(X) $ and ${\mathcal{L}}_{z,s_0s_1}(X)$ are known, to illustrate the main ideas of the proposed framework.  
Given covariates $X$, let $c_1(X)$ and $c_0(X)$ denote the known costs of treatment and control, respectively. 
In an ideal scenario where the true principal stratification $G$ were known, the optimal treatment policy $\rho(G, X)$ is defined as follows: 
\begin{itemize}
\item  {Always buyers ($G = \PP$):} For individuals who purchase regardless 
of treatment, the decision rule 
$\rho(\PP,X) = \mathbb{I}\{{\mathcal{L}}_{1,\PP}(X) - c_1(X) \geq 
{\mathcal{L}}_{0,\PP}(X) - c_0(X)\}$ compares net  profits under both conditions. 
Treatment ($Z = 1$) is recommended if its net profit exceeds that of control. 
otherwise, control ($Z = 0$) is recommended.

\item  {Persuadable buyers ($G = \PN$):} For individuals who only 
purchase when treated, the {policy} 
$\rho(\PN,X) = \mathbb{I}\{{\mathcal{L}}_{1,\PN}(X) - c_1(X) \geq -c_0(X)\}$ 
compares the net profit under treatment with the cost of control. Treatment 
is recommended if the net {profit} under treatment exceeds the cost of control; 
Otherwise, control is recommended.

\item  {Discouraged buyers ($G = \NP$):} For individuals who only 
purchase when untreated, the policy $\rho(\NP,X) = \mathbb{I}\{-c_1(X) \geq {\mathcal{L}}_{0,\NP}(X) - c_0(X) \}$
compares the net profit under control with the cost of treatment. Control 
is recommended if its net {profit} exceeds the cost of treatment; otherwise, 
treatment is recommended. 

\item  {Never buyers ($G = \NN$):} For individuals who never purchase, 
the {policy} $\rho(\NN,X) = \mathbb{I}\{-c_1(X) \geq -c_0(X)\}$ 
(equivalently, $c_1(X) \leq c_0(X)$) depends solely on minimizing cost. 
Treatment is recommended if $c_1(X) \leq c_0(X)$; otherwise, control is 
recommended.
\end{itemize}

%optimal 
The above policy $\rho(s_0s_1, X)$ depends on the true principal stratum $G$ for each individual. However, $G$ is unobserved in practice. To proceed, in the next subsection, we construct a pseudo-stratum $\tilde{G}$ to approximate each individual's true stratum $G$. The pseudo-stratum $\tilde{G}$ is a classifier that maps each individual to one of the four strata $\{00, 01, 10, 11\}$ based on their observed characteristics $X$. We suppress $X$ in the notation for $\tilde{G}$ to emphasize that it serves as an approximation to the true $G$. Once each individual is assigned a pseudo-stratum $\tilde{G} = \tilde{s}_0 \tilde{s}_1$, we apply the corresponding treatment policy $\rho(\tilde{s}_0 \tilde{s}_1, X)$ for tailored ad recommendations.

%In practice, an alternative approach to defining the optimal treatment policy is to optimize the value function directly 
%To illustrate the distinction between the two approaches, consider the value function when $\pi_{s_0s_1}(X)$ and $\mathcal{L}_{z,s_0s_1}(X)$ are known:   
\begin{remark}\label{remark2} 
%In practice, an alternative approach to obtaining the optimal policy by optimalizing the value function directly. Specifically, for a given policy $d(X)$, when $\pi_{s_0s_1}(X)$ and $\mathcal{L}_{z,s_0s_1}(X)$  are known, the value function is defined as: 
%In practice, 
 {In terms of policy learning alone}, an alternative approach to obtaining the optimal policy is to optimize the overall revenue directly. Specifically, for a given policy $d(X)$, when $\pi_{s_0s_1}(X)$ and $\mathcal{L}_{z,s_0s_1}(X)$ are known, the overall revenue (or value function) $V(d) $ is defined as 
\begin{equation}
    \label{eq:policy-value}
    \begin{aligned}
   & V(d) = \\
 &    \sum_{  {s}_0 ,{s}_1\in\{0,1\} }\E    \left( \mathbb{I}(G=  s_0s_1)  \Big[   {d}( X)\left\{  \mathcal{L}_{1, s_0s_1}(X) -   c_1(X) \right\} + \{1- {d}( X)\} \left \{\mathcal{L}_{0, s_0s_1}(X)-   c_0(X) \right\}\Big ]    \right )\\&= \sum_{  {s}_0 ,{s}_1\in\{0,1\} } \E \left(   \pi_{s_0s_1}(X)\Big[  {d}( X) \left\{  \mathcal{L}_{1, s_0s_1}(X) -   c_1(X) \right\} + \{1- {d}( X)\} \left \{ \mathcal{L}_{0, s_0s_1}(X)-   c_0 (X)  \right\} \Big ]    \right ).    
 \end{aligned}
\end{equation}
The optimal policy is the one that maximizes $V(d)$ (possibly within a pre-specified policy class). We compare our proposed method with this direct optimization approach in both simulation studies and real data analysis, where our method demonstrates superior performance.
% optimal treatment policy defined only based on $X$.

%  {\citet{chu2023multiply} considered policy learning restricted to the always buyers subpopulation ($G=11$), and the corresponding optimal policy differs from the one that maximizes $V(d)$.
% In addition, \citet{ben2024policy} proposed a policy learning approach
%   in a setting that corresponds to maximizing $V(d)$ with pre-specified utility functions $\mathcal{L}_{z, s_0s_1}(X)$.
% This work differs from \citet{ben2024policy} in two main aspects. 
% First, regarding the research goal, we focus on pseudo-strata learning, with policy learning serving as a downstream task. 
% Second,  our approach proposes new identifiability assumptions and estimation methods for $\mathcal{L}_{z, s_0s_1}(X)$.}
\end{remark}

\subsection{Misclassification Probability and   Reward}
%Conditional
In this subsection, we consider the misclassification probability and the reward induced by the pseudo-strata $\tilde{G}$.  
For any given pseudo-strata $\tilde{G}$, it induces a partition of the covariate space $\mathcal{X}$: $ \tilde{D}_{\tilde{s}_0\tilde{s}_1} = \{ X \in \mathcal{X} : \tilde{G} = \tilde{s}_0\tilde{s}_1\},  $ where $\tilde{D}_{00} \cup \tilde{D}_{01} \cup \tilde{D}_{10} \cup \tilde{D}_{11} = \mathcal{X}$ and the sets are mutually disjoint. 
  Let $\Pr(\tilde{G} = \tilde{s}_0 \tilde{s}_1 \mid G = s_0 s_1)$ denote the probability that individuals with true principal stratum $G = s_0 s_1$ are classified into pseudo-stratum $\tilde{G} = \tilde{s}_0 \tilde{s}_1$. We express this \textit{misclassification probability} as:
$$
\Pr(\tilde{G} = \tilde{s}_0\tilde{s}_1 \mid G = s_0s_1) = \Pr(X\in {\tilde{D}_{\tilde{s}_0\tilde{s}_1}} \mid G = s_0s_1) = \int_{\tilde{D}_{\tilde{s}_0\tilde{s}_1}} f(x \mid G = s_0s_1)\, dx,
$$
where $f(x \mid G = s_0s_1)$ denotes the conditional density of covariates $X$ given true principal stratum $G = s_0s_1$.
%covariates

Figure \ref{fig:misclassification-G11} provides an illustration of the misclassification probabilities $\Pr(\tilde{G}=\tilde{s}_0\tilde{s}_1 \mid G=11)$ for a given true principal stratum $G=11$, assuming that the covariate space $\mathcal{X}$ is partitioned into four regions, $\tilde{D}_{00}$, $\tilde{D}_{01}$, $\tilde{D}_{10}$, and $\tilde{D}_{11}$, corresponding to the four pseudo-strata.  
The conditional density $f(x \mid G=11)$ illustrates how individuals with true principal stratum $G=11$ are distributed across the covariate space, with the shaded area under the curve in each region representing the corresponding misclassification probability $\Pr(\tilde{G}=\tilde{s}_0\tilde{s}_1 \mid G=11)$.
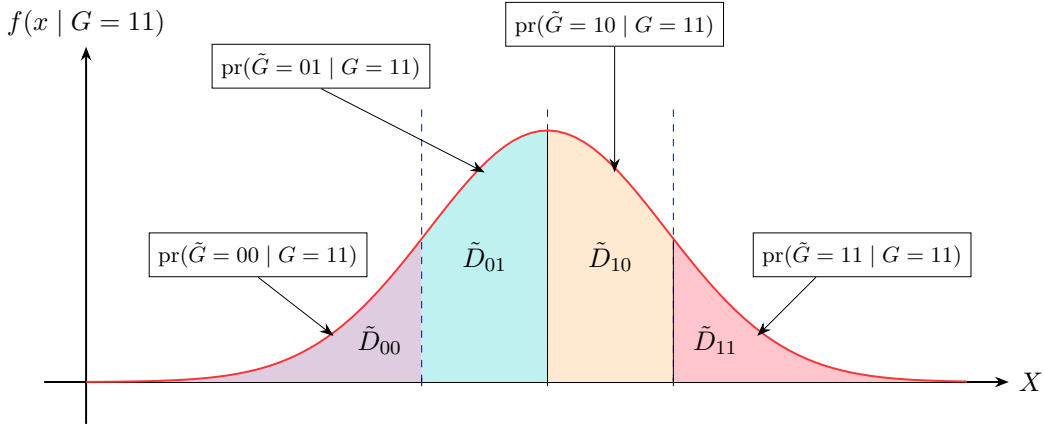
\begin{figure}
\centering% Define ice cream colors
\definecolor{color11}{RGB}{255,182,193}   % 草莓冰淇淋
\definecolor{color10}{RGB}{255,228,196}   % 香草冰淇淋
\definecolor{color01}{RGB}{175,238,238}   % 薄荷冰淇淋
\definecolor{color00}{RGB}{215.5,191,216}   % 葡萄冰淇淋
\resizebox{0.84905\textwidth}{!}{\begin{tikzpicture}[scale=1.2, >={Stealth[length=2mm]}]
% Draw axes
\draw[->,thick] (-0.5,0) -- (11,0) node[right] {$X$};
\draw[->,thick] (0,-0.5) -- (0,4) node[above] {$f(x \mid G=11)$};
% Shade all four regions with different colors
\fill[color00!80,domain=0:4,samples=50] 
  (0,0) -- plot (\x,{3*exp(-0.5*(\x-5.5)^2/2)}) -- (4,0) -- cycle;
\fill[color01!80,domain=4:5.5,samples=50] 
  (4,0) -- plot (\x,{3*exp(-0.5*(\x-5.5)^2/2)}) -- (5.5,0) -- cycle;
\fill[color10!80,domain=5.5:7,samples=50] 
  (5.5,0) -- plot (\x,{3*exp(-0.5*(\x-5.5)^2/2)}) -- (7,0) -- cycle;
\fill[color11!80,domain=7:10.5,samples=50] 
  (7,0) -- plot (\x,{3*exp(-0.5*(\x-5.5)^2/2)}) -- (10.5,0) -- cycle;
% Draw partition regions
\draw[dashed,blue] (4,0) -- (4,3.25);
\draw[dashed,blue] (5.5,0) -- (5.5,3.25);
\draw[dashed,blue] (7,0) -- (7,3.25);
% Label partition regions (inside the regions)
\node at (3.5,0.5) {$\tilde{D}_{{0}{0}}$};
\node at (4.75,1.5) {$\tilde{D}_{{0}{1}}$};
\node at (6.25,1.5) {$\tilde{D}_{{1}{0}}$};
\node at (7.5,0.5) {$\tilde{D}_{11}$};
% Add vertical lines at boundaries of D_10 region
\draw (5.5,0) -- (5.5,{3*exp(-0.5*(5.5-5.5)^2/2)});
\draw (7,0) -- (7,{3*exp(-0.5*(7-5.5)^2/2)});
% Redraw the curve on top to make it visible
\draw[thick,red!80,domain=0:10.5,samples=100] plot (\x,{3*exp(-0.5*(\x-5.5)^2/2)});
% Add arrow pointing to shaded region 
\draw[->] (6.3,4) -- (6.3,2.5);
\node[above,align=center,draw,fill=white] at (6.3,4) {\footnotesize $\pr(\tilde{G}=10\mid G=11)$};
% Add arrow pointing to shaded region 
\draw[->] (2.8,3.5) -- (4.75,2.5);
\node[above,align=center,draw,fill=white] at (2.8,3.5) {\footnotesize $\pr(\tilde{G}=01\mid G=11)$};
% Label the curve
\draw[->] (2.05,1.25) -- (2.95,0.55);
\node[above,align=center,draw,fill=white] at (2.05,1.25) {\footnotesize $\pr(\tilde{G}=00\mid G=11)$};
\draw[->] (2.95+5.75,1.25) -- (2.25+5.75,0.55);
\node[above,align=center,draw,fill=white] at (2.25+7,1.25) {\footnotesize $\pr(\tilde{G}=11\mid G=11)$};
% Label the curve
 
% Label the curve
\node[red!80,above right] at (8.5,2.8) { };
% Add tick marks
\foreach \x in {4,5.5,7} {
    \draw (\x,0.05) -- (\x,-0.05);
}
\end{tikzpicture}}\caption{Illustration of misclassification probabilities  $\pr(\tilde{G}=\tilde{s}_0\tilde{s}_1\mid G=11)$ for a given true principal stratum $G=11$. }

    \label{fig:misclassification-G11}
\end{figure}

%In practice, once an individual's Pseudo-{\red strata} $\tilde{G} =  \tilde{s}_0 \tilde{s}_1 $ is assigned, we can recommend a corresponding treatment policy $\rho(\tilde{s}_0 \tilde{s}_1, X)$.  
In addition to accounting for the probability of misclassification, it is also important to assess the potential revenue when a misclassified individual receives a treatment inconsistent with the true stratification.  
Specifically, suppose an individual's true principal stratum is $G = s_0s_1$, but they are misclassified as $\tilde{G} = \tilde{s}_0\tilde{s}_1$ and consequently receive the policy $\rho(\tilde{s}_0 \tilde{s}_1, X)$ defined in  Section~\ref{sec:optimal-policy}.  The individual's actual outcomes will follow the response pattern of their true principal stratum ${G} =s_0s_1$, but the treatment policy is determined by the misclassified stratum $\tilde{G} = \tilde{s}_0\tilde{s}_1$, leading to a potentially suboptimal revenue realization. 
 We define the expected revenue under such misclassification, referred to as the \textit{conditional misclassification reward}, as follows: 
\begin{equation}
\label{eq:misclassification-reward}
    \begin{aligned}
\mathcal{R}( \tilde{s}_0\tilde{s}_1 \mid   s_0s_1) 
={}& \mathbb{E}\left[\{{\mathcal{L}}_{1,s_0s_1}(X) -{c_1(X)}\}  \rho( {\tilde{s}_0\tilde{s}_1},X) \mid G = s_0s_1\right]  \\
{}& + \mathbb{E} \left[ \{{\mathcal{L}}_{0,s_0s_1}(X)-{c_0(X)}\} \{1 - \rho( {\tilde{s}_0\tilde{s}_1},X)\} \mid G = s_0s_1\right].
\end{aligned}
\end{equation}
%The quantity $\mathcal{R}( \tilde{s}_0\tilde{s}_1 \mid s_0s_1)$ quantifies the expected revenue when individuals whose true {\red principal stratum} is $G = s_0s_1$ are assigned treatment according to the policy designed for Pseudo-{\red strata} $\tilde{G} = \tilde{s}_0\tilde{s}_1$.  Its definition includes three key components: 
The quantity $\mathcal{R}( \tilde{s}_0\tilde{s}_1 \mid s_0s_1)$ includes three key components: 
(i) ${\mathcal{L}}_{z,s_0s_1}(X) - c_z(X)$ represents the net profit from treatment $Z=z$ for true principal stratum $G=s_0 s_1$; (ii) $\rho(\tilde{s}_0\tilde{s}_1, X)$ determines the treatment policy based on the pseudo-strata $\tilde{G}=\tilde{s}_0 \tilde{s}_1$; and (iii) the expectation is conditional on the true principal stratum $s_0 s_1$, ensuring evaluation relative to actual response behavior. 

When $\tilde{G} = G=s_0s_1$, $\mathcal{R}(s_0s_1 \mid s_0s_1)$ attains the maximum expected revenue; this  implicitly suggests that the outcome $Y$ may provide information on the learning principal-strata labels $G$.  
When $\tilde{G} \neq G$, $\mathcal{R}(s_0s_1 \mid s_0s_1)$ quantifies the revenue under misclassification, providing a revenue-based criterion for evaluating the average performance of pseudo-strata $\tilde G$.   
 %making the pseudo-strata more close to the true strata.

% ==========================================
\subsection{Determining Pseudo-Strata by Maximizing the Average Misclassification Reward} 
\begin{figure}[t]
\centering
  \resizebox{0.948605\textwidth}{!}{
\begin{tikzpicture}[
    node distance=0.35cm and 1.5cm,
    box/.style={rectangle, draw, thick, minimum width=3cm, minimum height=0.8cm, align=center},
    roundbox/.style={rectangle, rounded corners=5pt, draw, thick, minimum width=1.8cm, minimum height=0.7cm, align=center},
    predbox/.style={rectangle, rounded corners=5pt, draw, thick, minimum width=3.2cm, minimum height=0.70cm, align=center, font=\small},
    probbox/.style={rectangle, rounded corners=5pt, draw, thick, minimum width=3.5cm, minimum height=0.70cm, align=center, font=\small},
    arrow/.style={-Stealth, thick}
]
% ========== 方案28：冰淇淋 ==========
\definecolor{color11}{RGB}{255,182,193}   % 草莓冰淇淋
\definecolor{color10}{RGB}{255,228,196}   % 香草冰淇淋
\definecolor{color01}{RGB}{175,238,238}   % 薄荷冰淇淋
\definecolor{color00}{RGB}{216,191,216}   % 葡萄冰淇淋

% Title boxes
\node[box] (title1) at (0, 9) {\textbf{True stratum}};
\node[box] (title2) at (5.5, 9) {\textbf{Conditional}\\\textbf{misclassification reward}};
\node[box] (title3) at (11.3, 9) {\textbf{Misclassification}\\\textbf{probability}};

% Average box
\node[box] (avg) at (-3.8, 3) {\textbf{Average}\\\textbf{misclassification}\\\textbf{reward}};

% True labels G
\node[roundbox, fill=color11] (g11) at (0, 6.3) {$G = 11$};
\node[above=0.05cm of g11, font=\small] {$\Pr(G = 11)$};

\node[roundbox, fill=color10] (g10) at (0, 3.5) {$G = 10$};
\node[above=0.05cm of g10, font=\small] {$\Pr(G = 10)$};

\node[roundbox, fill=color01] (g01) at (0, 0.7) {$G = 01$};
\node[below=0.05cm of g01, font=\small] {$\Pr(G = 01)$};

\node[roundbox, fill=color00] (g00) at (0, -2.1) {$G = 00$};
\node[below=0.05cm of g00, font=\small] {$\Pr(G = 00)$};

% Predictions for G=11 (粉色) - 4个预测框
\node[predbox, fill=color11] (p11_11) at (5.5, 7.65) {$\widetilde{G} = 11 : \mathcal{R}(11 \mid 11)$};
\node[predbox, fill=color11] (p11_10) at (5.5, 6.85) {$\widetilde{G} = 10 : \mathcal{R}(10 \mid 11)$};
\node[predbox, fill=color11] (p11_01) at (5.5, 6.05) {$\widetilde{G} = 01 : \mathcal{R}(01 \mid 11)$};
\node[predbox, fill=color11] (p11_00) at (5.5, 5.25) {$\widetilde{G} = 00 : \mathcal{R}(00 \mid 11)$};

% Predictions for G=10 (香草) - 4个预测框
\node[predbox, fill=color10] (p10_11) at (5.5, 4.25) {$\widetilde{G} = 11 : \mathcal{R}(11 \mid 10)$};
\node[predbox, fill=color10] (p10_10) at (5.5, 3.45) {$\widetilde{G} = 10 : \mathcal{R}(10 \mid 10)$};
\node[predbox, fill=color10] (p10_01) at (5.5, 2.65) {$\widetilde{G} = 01 : \mathcal{R}(01 \mid 10)$};
\node[predbox, fill=color10] (p10_00) at (5.5, 1.85) {$\widetilde{G} = 00 : \mathcal{R}(00 \mid 10)$};

% Predictions for G=01 (薄荷) - 4个预测框
\node[predbox, fill=color01] (p01_11) at (5.5, 0.85) {$\widetilde{G} = 11 : \mathcal{R}(11 \mid 01)$};
\node[predbox, fill=color01] (p01_10) at (5.5, 0.05) {$\widetilde{G} = 10 : \mathcal{R}(10 \mid 01)$};
\node[predbox, fill=color01] (p01_01) at (5.5,-0.75) {$\widetilde{G} = 01 : \mathcal{R}(01 \mid 01)$};
\node[predbox, fill=color01] (p01_00) at (5.5, -1.55) {$\widetilde{G} = 00 : \mathcal{R}(00 \mid 01)$};

% Predictions for G=00 (葡萄) - 4个预测框
\node[predbox, fill=color00] (p00_11) at (5.5, -2.55) {$\widetilde{G} = 11 : \mathcal{R}(11 \mid 00)$};
\node[predbox, fill=color00] (p00_10) at (5.5, -3.35) {$\widetilde{G} = 10 : \mathcal{R}(10 \mid 00)$};
\node[predbox, fill=color00] (p00_01) at (5.5, -4.15) {$\widetilde{G} = 01 : \mathcal{R}(01 \mid 00)$};
\node[predbox, fill=color00] (p00_00) at (5.5, -4.95) {$\widetilde{G} = 11 : \mathcal{R}(11 \mid 00)$};

% Probabilities for G=11 (粉色) - 4个概率框（调整后与预测框对齐）
\node[probbox, fill=color11] (pr11_11) at (11.3, 7.65) {$\Pr(\widetilde{G} = 11 \mid G = 11)$};
\node[probbox, fill=color11] (pr11_10) at (11.3, 6.85) {$\Pr(\widetilde{G} = 10 \mid G = 11)$};
\node[probbox, fill=color11] (pr11_01) at (11.3, 6.05) {$\Pr(\widetilde{G} = 01 \mid G = 11)$};
\node[probbox, fill=color11] (pr11_00) at (11.3, 5.25) {$\Pr(\widetilde{G} = 00 \mid G = 11)$};

% Probabilities for G=10 (香草) - 4个概率框
\node[probbox, fill=color10] (pr10_11) at (11.3, 4.25) {$\Pr(\widetilde{G} = 11 \mid G = 10)$};
\node[probbox, fill=color10] (pr10_10) at (11.3, 3.45) {$\Pr(\widetilde{G} = 10 \mid G = 10)$};
\node[probbox, fill=color10] (pr10_01) at (11.3, 2.65) {$\Pr(\widetilde{G} = 01 \mid G = 10)$};
\node[probbox, fill=color10] (pr10_00) at (11.3, 1.85) {$\Pr(\widetilde{G} = 00 \mid G = 10)$};

% Probabilities for G=01 (薄荷) - 4个概率框
\node[probbox, fill=color01] (pr01_11) at (11.3, 0.85) {$\Pr(\widetilde{G} = 11 \mid G = 01)$};
\node[probbox, fill=color01] (pr01_10) at (11.3, 0.05) {$\Pr(\widetilde{G} = 10 \mid G = 01)$};
\node[probbox, fill=color01] (pr01_01) at (11.3, -0.75) {$\Pr(\widetilde{G} = 01 \mid G = 01)$};
\node[probbox, fill=color01] (pr01_00) at (11.3, -1.55) {$\Pr(\widetilde{G} = 00 \mid G = 01)$};

% Probabilities for G=00 (葡萄) - 4个概率框
\node[probbox, fill=color00] (pr00_11) at (11.3, -2.55) {$\Pr(\widetilde{G} = 11 \mid G = 00)$};
\node[probbox, fill=color00] (pr00_10) at (11.3, -3.35) {$\Pr(\widetilde{G} = 10 \mid G = 00)$};
\node[probbox, fill=color00] (pr00_01) at (11.3, -4.15) {$\Pr(\widetilde{G} = 01 \mid G = 00)$};
\node[probbox, fill=color00] (pr00_00) at (11.3, -4.95) {$\Pr(\widetilde{G} = 11 \mid G = 00)$};

% Arrows from avg to G 
\draw[arrow] (avg) -- (g11);
\draw[arrow] (avg) -- (g10);
\draw[arrow] (avg) -- (g01);
\draw[arrow] (avg) -- (g00);

% Arrows from G=11 to predictions
\draw[arrow] (g11.east) -- (p11_11.west);
\draw[arrow] (g11.east) -- (p11_10.west);
\draw[arrow] (g11.east) -- (p11_01.west);
\draw[arrow] (g11.east) -- (p11_00.west);

% Arrows from G=10 to predictions
\draw[arrow] (g10.east) -- (p10_11.west);
\draw[arrow] (g10.east) -- (p10_10.west);
\draw[arrow] (g10.east) -- (p10_01.west);
\draw[arrow] (g10.east) -- (p10_00.west);

% Arrows from G=01 to predictions
\draw[arrow] (g01.east) -- (p01_11.west);
\draw[arrow] (g01.east) -- (p01_10.west);
\draw[arrow] (g01.east) -- (p01_01.west);
\draw[arrow] (g01.east) -- (p01_00.west);

% Arrows from G=00 to predictions
\draw[arrow] (g00.east) -- (p00_11.west);
\draw[arrow] (g00.east) -- (p00_10.west);
\draw[arrow] (g00.east) -- (p00_01.west);
\draw[arrow] (g00.east) -- (p00_00.west);

% Arrows from predictions to probabilities
\draw[arrow] (p11_11.east) -- (pr11_11.west);
\draw[arrow] (p11_10.east) -- (pr11_10.west);
\draw[arrow] (p11_01.east) -- (pr11_01.west);
\draw[arrow] (p11_00.east) -- (pr11_00.west);

\draw[arrow] (p10_11.east) -- (pr10_11.west);
\draw[arrow] (p10_10.east) -- (pr10_10.west);
\draw[arrow] (p10_01.east) -- (pr10_01.west);
\draw[arrow] (p10_00.east) -- (pr10_00.west);

\draw[arrow] (p01_11.east) -- (pr01_11.west);
\draw[arrow] (p01_10.east) -- (pr01_10.west);
\draw[arrow] (p01_01.east) -- (pr01_01.west);
\draw[arrow] (p01_00.east) -- (pr01_00.west);

\draw[arrow] (p00_11.east) -- (pr00_11.west);
\draw[arrow] (p00_10.east) -- (pr00_10.west);
\draw[arrow] (p00_01.east) -- (pr00_01.west);
\draw[arrow] (p00_00.east) -- (pr00_00.west);

% Dotted box around probabilities
\draw[dotted, very thick] ($(pr11_11.north west) + (-0.25, 0.25)$) rectangle ($(pr00_00.south east) + (0.25, -0.25)$);

% 外围大边框 - 扩大范围完全包住所有内容
\draw[thick, rounded corners] (-5.85, -5.8) rectangle (13.8, 10.2);

\end{tikzpicture}
}
    \caption{Structural illustration of the average misclassification reward  \eqref{eq:decision-rule} for a given   label $\tilde{G}$.}
    \label{fig:missclassification-reward}
\end{figure}
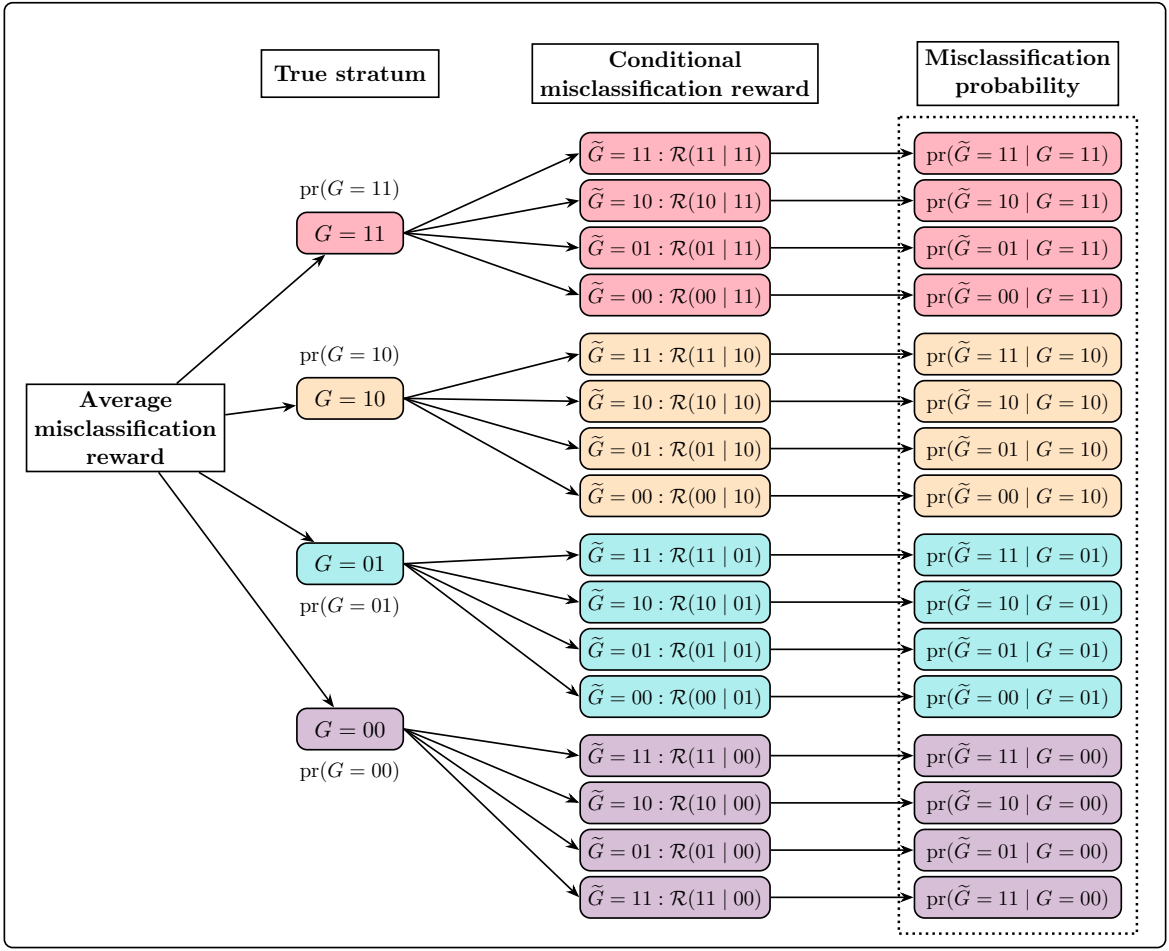

With the concepts of misclassification probabilities and misclassification 
rewards, we introduce the \textit{average misclassification 
reward} for any given pseudo-strata $\tilde{G}$:
%as follows: 
\begin{align}
    \label{eq:decision-rule}
    V(\tilde{G}) = \sum_{s_0,s_1\in\{0,1\}} \Pr(G = s_0s_1)
    \sum_{\tilde{s}_0,\tilde{s}_1\in\{0,1\}} \Pr(\tilde{G} = \tilde{s}_0 \tilde{s}_1 
    \mid G = s_0s_1)\mathcal{R}(\tilde{s}_0 \tilde{s}_1 \mid s_0s_1).
\end{align}
The inner sum computes the expected revenue for individuals with true principal stratum $G = s_0s_1$, averaging over all possible pseudo-strata, each weighted by its misclassification probability. The outer sum averages these expected revenues over all strata, weighted by their marginal probabilities $\Pr(G = s_0s_1)$. 
Figure 
\ref{fig:missclassification-reward} illustrates the structure of the average misclassification reward.  The left column shows the four true principal strata $G \in \{00, 01, 10, 11\}$ (corresponding to the outer sum), each weighted by its marginal probability $\Pr(G = s_0s_1)$.  
Given $G = s_0s_1$, the middle column displays all possible pseudo-strata assignments $\tilde{G} = \tilde{s}_0\tilde{s}_1$ (corresponding to the inner sum), each associated with a conditional misclassification reward $\mathcal{R}(\tilde{s}_0\tilde{s}_1 \mid s_0s_1)$. 
The right column shows the corresponding misclassification probabilities $\Pr(\tilde{G} = \tilde{s}_0\tilde{s}_1 \mid G = s_0s_1)$ that weight each reward. The average misclassification reward $V(\tilde{G})$ aggregates across all paths from true  principal strata to pseudo-strata, combining their misclassification probabilities and rewards.

% An optimal pseudo-{\red strata} $ {G}^*$ is said to follow the optimal Bayesian decision rule if it maximizes the average misclassification reward:
% \begin{equation*} 
%  {G}^* = \arg\max_{\tilde{G}} V(\tilde{G}).
% \end{equation*}
% This rule partitions the covariate space $\mathcal{X}$ into four decision regions, $\{D_{\PP}^*, 
% D_{\PN}^*, D_{\NP}^*, D_{\NN}^*\}$, 
% corresponding to the {\red four strata}. 
% We derive the closed-form solution of the optimal pseudo-{\red strata} and characterize its dependence on both the misclassification rewards and the {\red strata} probabilities.

An optimal pseudo-strata ${G}^*$ is said to follow the optimal Bayesian decision rule if it maximizes the average misclassification reward in \eqref{eq:decision-rule}: $
{G}^* = \arg\max_{\tilde{G}} V(\tilde{G}),$ 
where the value function $V(\tilde{G})$ quantifies the expected reward from assigning individuals to pseudo-strata based on their observed covariates. The optimal decision rule induces a partition of the covariate space $\mathcal{X}$ into four mutually exclusive and exhaustive decision regions, $\{D_{\PP}^*, D_{\PN}^*, D_{\NP}^*, D_{\NN}^*\}$, where each region corresponds to one of the four latent strata:
\begin{equation*}
D_{s_0s_1}^* = \left\{x \in \mathcal{X} : {G}^* = s_0s_1\right\}, \quad (s_0, s_1) \in \{0,1\}^2.
\end{equation*}
Individuals falling in region $D_{s_0s_1}^*$ are assigned to the optimal pseudo-stratum ${G}^* = s_0s_1$. 
In the following theorem, we derive the closed-form solution of the optimal pseudo-stratum and characterize its dependence on both the misclassification rewards and the strata probabilities. 
% The boundaries between these regions are determined by the relative magnitudes of the expected rewards across different {\red type} assignments, accounting for both the prior probability of each {\red type} and the consequences of misclassification. 

\begin{theorem}
\label{thm:maxim-reward}
The optimal Bayesian decision rule assigns each individual with 
covariates $X$ to the pseudo-strata that maximizes the expected revenue: 
\begin{align}
\label{eq:otr-bayesian}
 {G}^* = \underset{\tilde s_0,\tilde s_1\in\{0,1\}}{\arg\max}~ h_{\tilde s_0\tilde s_1}(X) , 
\end{align} 
where $
h_{\tilde s_0\tilde s_1}(X) = \textstyle\sum_{ {s}_0, {s}_1\in\{0,1\}} 
\mathcal{R}(\tilde{s}_0\tilde{s}_1 \mid s_0s_1) \cdot 
\pi_{ {s}_0 {s}_1}(X).$ 
Equivalently, the optimal decision region $D_{s_0s_1}^\ast$, corresponding to the optimal Bayesian decision rule  $ {G}^*$,  
is given by 
\begin{equation*} 
D_{s_0s_1}^\ast = \left\{ X \in \mathcal{X}: h_{s_0s_1}(X) 
\geq h_{\bar{s}_0\bar{s}_1}(X) ~\text{for all}~  (\bar{s}_0, \bar{s}_1) \in \{0,1\}^2 \right\}.
\end{equation*}
\end{theorem}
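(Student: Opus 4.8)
The plan is to reduce the maximization of the average misclassification reward $V(\tilde G)$ to a pointwise maximization over the covariate space, by a standard Bayes-decision-theory argument. First I would expand the misclassification probability in \eqref{eq:decision-rule} through its integral representation, $\Pr(\tilde G = \tilde s_0\tilde s_1 \mid G = s_0 s_1) = \int_{\tilde D_{\tilde s_0 \tilde s_1}} f(x\mid G=s_0s_1)\,dx$, and then interchange the two finite sums so that the outer sum runs over the pseudo-strata $\tilde s_0\tilde s_1$ and the inner sum over the true strata $s_0 s_1$. This yields
\begin{equation*}
V(\tilde G) = \sum_{\tilde s_0,\tilde s_1\in\{0,1\}} \int_{\tilde D_{\tilde s_0\tilde s_1}} \Big[ \sum_{s_0,s_1\in\{0,1\}} \mathcal{R}(\tilde s_0\tilde s_1\mid s_0s_1)\,\Pr(G=s_0s_1)\,f(x\mid G=s_0s_1) \Big]\,dx .
\end{equation*}
Here it is worth recording explicitly that $\mathcal{R}(\tilde s_0\tilde s_1\mid s_0s_1)$ is a fixed scalar that does not depend on the classifier $\tilde G$ (it is built from the fixed policies $\rho(\cdot,\cdot)$ of Section~\ref{sec:optimal-policy}), so the only $\tilde G$-dependence in $V(\tilde G)$ enters through the partition $\{\tilde D_{\tilde s_0\tilde s_1}\}$ — which is precisely what makes the pointwise argument available.

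The key step is to apply Bayes' rule to the inner bracket. Writing $f_X$ for the marginal density of $X$, we have $\Pr(G=s_0s_1)\,f(x\mid G=s_0s_1) = f_X(x)\,\pi_{s_0s_1}(x)$, so the bracket equals $f_X(x)\sum_{s_0,s_1}\mathcal{R}(\tilde s_0\tilde s_1\mid s_0s_1)\,\pi_{s_0s_1}(x) = f_X(x)\,h_{\tilde s_0\tilde s_1}(x)$. Since the regions $\{\tilde D_{\tilde s_0\tilde s_1}\}$ partition $\mathcal{X}$, the sum of integrals collapses to a single integral, $V(\tilde G) = \int_{\mathcal X} f_X(x)\,h_{\tilde G(x)}(x)\,dx$, where $\tilde G(x)$ denotes the label assigned to $x$ by the classifier. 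Because $f_X(x)\ge 0$ and $h_{\tilde G(x)}(x) \le \max_{\tilde s_0,\tilde s_1} h_{\tilde s_0\tilde s_1}(x)$ for every $x$, it follows that $V(\tilde G) \le \int_{\mathcal X} f_X(x)\,\max_{\tilde s_0,\tilde s_1} h_{\tilde s_0\tilde s_1}(x)\,dx$, with equality exactly when $\tilde G(x)\in\arg\max_{\tilde s_0,\tilde s_1} h_{\tilde s_0\tilde s_1}(x)$ for $f_X$-almost every $x$. This identifies the optimal rule \eqref{eq:otr-bayesian} and, reading off the corresponding partition, the decision regions $D^\ast_{s_0s_1}$.

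I do not expect a genuine obstacle here; the argument is the classical mixture/Bayes-risk decomposition, and the substantive content is already in place once the Bayes-rule identity $\Pr(G=s_0s_1)\,f(x\mid G=s_0s_1) = f_X(x)\,\pi_{s_0s_1}(x)$ is invoked. The one point deserving care, rather than difficulty, is the handling of ties: the pointwise maximizer of $h_{\tilde s_0\tilde s_1}(x)$ need not be unique, and the set where it is non-unique may carry positive $f_X$-measure. I would therefore note that any measurable selection from the argmax (for instance, the lexicographically smallest label) attains the supremum, which is why $D^\ast_{s_0s_1}$ is defined with the non-strict inequalities $h_{s_0s_1}(X)\ge h_{\bar s_0\bar s_1}(X)$ and the optimal pseudo-strata are understood up to an arbitrary tie-breaking convention. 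A brief remark that the interchange of the finite sums and the integral is justified since all quantities are integrable (the rewards are finite scalars and the conditional densities integrate to one) completes the bookkeeping.
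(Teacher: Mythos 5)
Your proposal is correct and follows essentially the same route as the paper's proof: the integral representation of the misclassification probability, the interchange of sums, and the Bayes-rule identity $\Pr(G=s_0s_1)\,f(x\mid G=s_0s_1)=f(x)\,\pi_{s_0s_1}(x)$ reducing $V(\tilde G)$ to an integral of $h_{\tilde G(x)}(x)\,f(x)$, which is then maximized pointwise. The only cosmetic difference is that the paper verifies optimality by writing $V(G^\ast)-V(\tilde G)$ as a sum of integrals over the intersections $D^\ast_{s_0s_1}\cap\tilde D_{\tilde s_0\tilde s_1}$ rather than via your single pointwise-max bound, and your added remarks on tie-breaking and measurable selection are sensible extras not needed for the paper's statement.
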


 Theorem~\ref{thm:maxim-reward} presents the explicit form of the optimal pseudo-strata that maximizes the expected revenue. For each pair $(s_0, s_1)$, the function $h_{s_0s_1}(X)$ represents the expected revenue for individuals truly belonging to the true principal stratum $ G=s_0s_1$, 
computed as a weighted average of the rewards  $\mathcal{R}(\tilde{s}_0\tilde{s}_1 \mid s_0s_1)$ across all possible  pseudo-strata, with weights being the corresponding  probabilities $\pi_{\tilde{s}_0\tilde{s}_1}(X)$.
By construction, the optimal ${G}^*$ in \eqref{eq:otr-bayesian} depends on the covariates $X$. 
However, since our primary interest lies in how $G^*$ approximates the true $G$ in terms of misclassification rewards, we suppress the explicit dependence on $X$  for simplicity and clarity.

Our proposed approach conceptually resembles the expected cost minimization (ECM) framework in Bayesian classification theory \citep[Chapter 11]{johnson2002applied}, but it differs fundamentally: the standard ECM classifies observed labels, whereas our approach targets unobserved latent principal strata. 
This distinction results in different classification rules and necessitates novel identification conditions. 

The pseudo-strata learned by our approach may not perfectly align with the true latent strata. This stems from the fact that the true strata are defined at the individual level, whereas the optimal Bayesian decision rule operates at the covariate-defined subpopulation level.  
However, they represent a Bayes-optimal solution that balances revenue maximization and classification error in practice. 
Once the optimal Bayesian decision rule $G^*$ is determined, we can utilize the learned pseudo-strata for multiple applications, see Sections \ref{sec:est-proc}, \ref{sec:experiments}, and \ref{sec:application} for details.

% ===================================================== 
\section{Identification Results}
\label{sec:Identification} 

In this section, we investigate the identification results of the causal estimands $\pi_{s_0s_1}(X)$, ${\mathcal{L}}_{z,s_0s_1}(X)$, and $\mathcal{R}(\tilde{s}_0\tilde{s}_1\mid s_0s_1)$ discussed in the previous section. We begin by introducing several commonly used assumptions in causal inference \citep{rosenbaum1983central}. 

\begin{assumption} [Ignorability]
\label{assumption:RCT}
$ Z\indep (Y_0,Y_1,S_0,S_1)\mid X $.
\end{assumption}  

Assumption~\ref{assumption:RCT} states that, conditional on  covariates $X$, there is no unmeasured confounding between the treatment $Z$ and the purchase status $S$, nor between the treatment $Z$ and the revenue $Y$. 
In practice, collecting a comprehensive set of covariates is crucial for improving the applicability of this assumption. It is worth noting that Assumption~\ref{assumption:RCT} does not preclude the presence of unmeasured confounding between $S$ and  $Y$.

To address the identifiability of $\pi_{s_0s_1}(X)$, we adopt an odds ratio parameterization to characterize the dependence structure between potential purchase statuses $S_0$ and $S_1$ \citep{zhang2013assessing,ciocuanea2025sensitivity,tong2025semiparametric, Wu-etal-2025-Harm}: $$\theta(X) = \frac{\pi_{00}(X)\pi_{11}(X)}{\pi_{10}(X)\pi_{01}(X)}.$$
This parameter captures the conditional association between $S_0$ and $S_1$ given $X$. We consider the  assumption for the odds ratio below. 
\begin{assumption}
\label{assump:odds}
The odds ratio function $ \theta(  X) $ is known.
\end{assumption}

Assumption \ref{assump:odds} links the two potential purchase statuses $S_0$ and  $S_1 $. Although $\theta(X) $ is not identifiable from the observed data, we recommend treating it as a sensitivity parameter by varying it over an informative range rather than fixing it to a specific value. The choice of this range can be guided by domain knowledge, with common specifications including: 
(i) Conditional independence ($\theta(X) = 1$): it corresponds to $S_1 \indep S_0 \mid X$, implying that the two potential outcomes are independent given covariates. 
(ii) Monotonicity ($\theta(X) \to +\infty$): %the constraint $S_1 \geq S_0$ rules out discouraged buyers by imposing $\pi_{10}(X) = 0$, while $S_0 \geq S_1$ rules out persuadable buyers by imposing $\pi_{01}(X) = 0$. 
$S_1 \geq S_0$ or $S_0 \geq S_1$, where the former rules out discouraged buyers by imposing $\pi_{10}(X) = 0$, and the latter rules out persuadable buyers by imposing $\pi_{01}(X) = 0$. 
%i.e., individuals who would purchase under control but not under treatment. 
(iii) Positive association ($\theta(X) > 1$): potential outcomes $S_0$ and $S_1$ are positively associated, meaning individuals likely to purchase under control tend to also be likely to purchase under treatment. 
(iv) Negative association ($\theta(X) < 1$): potential outcomes $S_0$ and $S_1$ are negatively associated, meaning individuals likely to purchase under control tend to be less likely to purchase under treatment.

Although  Assumptions \ref{assumption:RCT} and \ref{assump:odds}  ensure  the identifiability of $\pi_{s_0s_1}(X)$ \citep{zhang2013assessing,ciocuanea2025sensitivity,tong2025semiparametric, Wu-etal-2025-Harm}, it does not suffice for identifying ${\mathcal{L}}_{z,s_0s_1}(X)$. 
To facilitate identification, we decompose the covariates as $X = (C, A)$ without loss of generality, where this partition enables us to exploit covariates that exhibit differential relationships with the strata $G$. Building on this decomposition, we propose a separable model assumption that allows the effects of $C$ and $A$ on $Y$ to be modeled separately. 
{
\begin{assumption}[Additive model]

\label{assumption:NoInteraction} 
Assume that  %the outcome regression functions 
${\mathcal{L}}_{0,\PP}(X)$, ${\mathcal{L}}_{1,\PP}(X)$, ${\mathcal{L}}_{1,\PN}(X)$, and ${\mathcal{L}}_{0,\NP}(X)$ follow additive models of the form:
\[
{\mathcal{L}}_{z,s_0s_1}(X) = \mu_{z,s_0s_1,a}^\T q(A) + {\mathcal{L}}_{z,s_0s_1}^\ast(C),
\]
where $q(A) = \{q_1(A), \ldots, q_p(A)\}^\T$ is a $p$-dimensional vector of known  functions of covariates $A$, $\mu_{z,s_0s_1,a} \in \mathbb{R}^p$ is an unknown parameter vector, and ${\mathcal{L}}_{z,s_0s_1}^\ast(C)$ is an unspecified nonparametric function of $C$.

\end{assumption}

%the outcome regression functions 
Assumption~\ref{assumption:NoInteraction} imposes additive separability between $A$ and $C$ in the model for ${\mathcal{L}}_{z,s_0s_1}(X)$, while  retaining modeling flexibility through the nonparametric component ${\mathcal{L}}_{z,s_0s_1}^\ast(C)$. This assumption generalizes several existing identification strategies, which emerge as special cases: 
\begin{itemize}
    \item[(i)] Instrumental variable: When $\mu_{0,\PP,a} = \mu_{1,\PP,a} = \mu_{1,\PN,a} = \mu_{0,\NP,a} = 0$, the covariate $A$ has no direct effect on the outcome $Y$, recovering the identification framework of \citet{Ding:2011}, \citet{Jiang2016}, and \citet{Wang2017iv}. 

    \item[(ii)] Linear model specification: When $q(A) = A$ and ${\mathcal{L}}_{z,s_0s_1}^\ast(C)$ is linear in $C$, it  reduces to the fully parametric linear model  specification studied by \citet{Ding:2011,Luo-multiarm-2021}.

\item[(iii)]  {No-interaction constraint:} \citet{Wang2017iv} assumed that, conditional on covariates $C$, neither $Z$ nor $G$ modifies the effect of covariates $A$ on  $Y$. In our setting, this would impose
$ {\mathcal{L}}_{1,01}(X) - {\mathcal{L}}_{1,00}(X) = {\mathcal{L}}_{1,11}(X) - {\mathcal{L}}_{1,10}(X) = {\mathcal{L}}_{0,11}(X) - {\mathcal{L}}_{0,10}(X)$ 
for all $X$, implying treatment effects are constant across strata. We do not impose this restriction. 
\end{itemize}

In practice, we recommend selecting covariates for $A$ based on two criteria: 
(i) choose variables that affect $Y$ primarily through strata $G$ rather than via direct causal pathways \citep{Ding:2011,Jiang2016,Wang2017iv}, analogous to the exclusion restriction in instrumental variables \citep{Angrist:1996}; 
(ii) select variables whose effects on $Y$ do not interact with $C$ and for which the functional form $q(A)$ can be reliably specified. 
Additionally, we empirically recommend that $q(A)$ employ flexible specifications such as polynomial terms, spline basis functions, or generalized linear model forms to capture as much information from $A$ as possible.

\begin{theorem}
\label{thm:identify}
Suppose Assumptions \ref{assumption:RCT}, \ref{assump:odds},  \ref{assumption:NoInteraction} hold, and for all real numbers $b_1, b_2, b_3, b_4$, the following conditions hold:
\begin{equation} \label{eq:iden-exp} \begin{aligned} & b_1 \cdot \pi_{\PP}(X) + b_2 \cdot \pi_{\PN}(X) + b_3 \cdot q(A) \pi_{\PP}(X) + b_4 \cdot q(A) \pi_{\PN}(X) = 0  \Rightarrow b_1 = b_2 = b_3 = b_4 = 0, \\ & b_1 \cdot \pi_{\PP}(X) + b_2 \cdot \pi_{\NP}(X) + b_3 \cdot q(A) \pi_{\PP}(X) + b_4 \cdot q(A) \pi_{\NP}(X) = 0   \Rightarrow b_1 = b_2 = b_3 = b_4 = 0. \end{aligned} \end{equation}
Then, (a) the ${\mathcal{L}}_{0,\PP}(X)$, ${\mathcal{L}}_{1,\PP}(X)$, ${\mathcal{L}}_{1,\PN}(X)$, and ${\mathcal{L}}_{0,\NP}(X)$ are identifiable; and (b) the Bayesian decision rule \eqref{eq:otr-bayesian} is identifiable.
\end{theorem}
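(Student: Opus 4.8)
The plan is to strip the observed-data law down to two ``mixture'' identities, invert the additive structure of Assumption~\ref{assumption:NoInteraction} with the help of the rank condition \eqref{eq:iden-exp} to get part~(a), and then propagate identification through the definitions of $\mathcal{R}(\tilde s_0\tilde s_1\mid s_0s_1)$ and $h_{\tilde s_0\tilde s_1}(X)$ to get part~(b).

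\emph{Step 1 (mixture identities).} I would first recall that $\pi_{s_0s_1}(X)$ is identifiable under Assumptions~\ref{assumption:RCT}--\ref{assump:odds}: the relations $\pr(S=1\mid Z=1,X)=\pi_{11}(X)+\pi_{01}(X)$, $\pr(S=1\mid Z=0,X)=\pi_{11}(X)+\pi_{10}(X)$, the constraint $\sum_{s_0,s_1}\pi_{s_0s_1}(X)=1$, and the known $\theta(X)$ give four equations determining the four $\pi_{s_0s_1}(X)$ (a quadratic in $\pi_{11}(X)$). Next, using consistency ($S=S_Z$, $Y=Y_Z$) and Assumption~\ref{assumption:RCT}, and noting that within stratum $G=s_0s_1$ one has $S_z=s_z$ so only strata with $s_z=1$ survive,
\begin{align*}
\E(YS\mid Z=1,X)&=\E(Y_1S_1\mid X)=\pi_{11}(X){\mathcal{L}}_{1,11}(X)+\pi_{01}(X){\mathcal{L}}_{1,01}(X),\\
\E(YS\mid Z=0,X)&=\E(Y_0S_0\mid X)=\pi_{11}(X){\mathcal{L}}_{0,11}(X)+\pi_{10}(X){\mathcal{L}}_{0,10}(X).
\end{align*}
The left sides are functionals of the observed law and the $\pi$'s are identified, so each identity determines a $\pi$-weighted sum of two target functions.

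\emph{Step 2 (part (a)).} I would substitute ${\mathcal{L}}_{z,s_0s_1}(X)=\mu_{z,s_0s_1,a}^\T q(A)+{\mathcal{L}}_{z,s_0s_1}^\ast(C)$ into the first identity and prove uniqueness of this decomposition. If two admissible tuples reproduce $\E(YS\mid Z=1,X)$, subtracting gives, a.e.\ in $X=(C,A)$,
\[
\Delta{\mathcal{L}}_{1,11}^\ast(C)\,\pi_{11}(X)+\Delta{\mathcal{L}}_{1,01}^\ast(C)\,\pi_{01}(X)+q(A)^\T\Delta\mu_{1,11,a}\,\pi_{11}(X)+q(A)^\T\Delta\mu_{1,01,a}\,\pi_{01}(X)=0 .
\]
Freezing $C=c$ turns $\Delta{\mathcal{L}}_{1,11}^\ast(c),\Delta{\mathcal{L}}_{1,01}^\ast(c)$ into constants, so this is exactly the first line of \eqref{eq:iden-exp} (with $b_1,b_2$ those constants and $b_3,b_4$ the $p$-vectors $\Delta\mu_{1,11,a},\Delta\mu_{1,01,a}$), and the rank condition forces all to vanish; since the $\Delta\mu$'s are $c$-free and the $\Delta{\mathcal{L}}^\ast(c)$'s vanish for every $c$, ${\mathcal{L}}_{1,11}(X)$ and ${\mathcal{L}}_{1,01}(X)$ are identified. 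The same argument on the second identity, using the second line of \eqref{eq:iden-exp}, identifies ${\mathcal{L}}_{0,11}(X)$ and ${\mathcal{L}}_{0,10}(X)$. Together with the structural zeros ${\mathcal{L}}_{0,01}(X)={\mathcal{L}}_{1,10}(X)={\mathcal{L}}_{0,00}(X)={\mathcal{L}}_{1,00}(X)=0$, every conditional mean appearing downstream is identified.

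\emph{Step 3 (part (b)).} Each policy $\rho(\tilde s_0\tilde s_1,X)$ of Section~\ref{sec:optimal-policy} is a fixed indicator of inequalities among the now-identified ${\mathcal{L}}_{z,\tilde s_0\tilde s_1}(X)$ and the known costs $c_z(X)$, hence identified; the conditional density $f(x\mid G=s_0s_1)=\pi_{s_0s_1}(x)f(x)/\E\{\pi_{s_0s_1}(X)\}$ is identified by Bayes' rule. Thus $\mathcal{R}(\tilde s_0\tilde s_1\mid s_0s_1)$ in \eqref{eq:misclassification-reward}, an integral of identified integrands against an identified law, is identified; hence so are $h_{\tilde s_0\tilde s_1}(X)=\sum_{s_0,s_1}\mathcal{R}(\tilde s_0\tilde s_1\mid s_0s_1)\pi_{s_0s_1}(X)$ and $G^*=\arg\max_{\tilde s_0\tilde s_1}h_{\tilde s_0\tilde s_1}(X)$. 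The main obstacle is Step~2: correctly collapsing the bivariate functional equation to the $A$-only form that \eqref{eq:iden-exp} governs by conditioning on $C$ to neutralize the nonparametric components, and checking that the rank condition covers enough $C$-slices to first pin down the finite-dimensional $\mu$'s and then the nonparametric pieces; the rest is bookkeeping (which strata enter each mixture, the structural zeros, and Bayes' rule).
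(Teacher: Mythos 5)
Your proposal is correct and follows essentially the same route as the paper's proof: it identifies the $\pi_{s_0s_1}(X)$ via the odds-ratio parameterization, reduces the observed revenue among purchasers in each arm to a two-component $\pi$-weighted mixture (your $\E(YS\mid Z=z,X)$ identities are just the paper's $\E(Y\mid Z=z,S=1,X)$ equations multiplied by $e_z(X)$), and then fixes $C=c$ so that the additive structure plus the rank condition \eqref{eq:iden-exp} pins down $\mu_{z,s_0s_1,a}$ and $\mathcal{L}^\ast_{z,s_0s_1}(c)$ — your uniqueness-by-differencing argument is equivalent to the paper's full-rank linear system at the $2p+2$ points $(c,a_j)$. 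Part (b) likewise matches the paper's Lemma on the decision rule, with your Bayes'-rule rewriting of $f(x\mid G=s_0s_1)$ corresponding to the paper's iterated-expectation formula for $\mathcal{R}(\tilde{s}_0\tilde{s}_1\mid s_0s_1)$.
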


Theorem \ref{thm:identify} essentially requires that the function sets  
$\{\pi_{\PP}(X), \pi_{\PN}(X), q(A)\pi_{\PP}(X), q(A)\pi_{\PN}(X)\}$ and $\{\pi_{\PP}(X), \pi_{\NP}(X), q(A)\pi_{\PP}(X), q(A)\pi_{\NP}(X)\}$ 
are linearly uncorrelated. A key advantage is that this linear independence condition \eqref{eq:iden-exp} can be empirically verified \citep{zhang2013assessing,ciocuanea2025sensitivity,tong2025semiparametric}, since $\pi_{\PP}(X)$, $\pi_{\PN}(X)$, and $\pi_{\NP}(X)$ are identifiable from observed data under Assumptions~\ref{assumption:RCT} and~\ref{assump:odds}. 
These conditions ensure the uniqueness of the solution and thus the identifiability of the optimal Bayesian decision rule.

Notably, beyond establishing identifiability, as discussed below  Assumption \ref{assump:odds}, Theorem~\ref{thm:identify} also provides a principled framework for sensitivity analysis. By specifying informative priors on the odds function $\theta(X)$ based on domain knowledge, researchers can systematically assess the robustness of their conclusions. We implement this framework by conducting sensitivity analysis under various specifications of $\theta(X)$ in both our simulation and real-data application. 
 %\bcol{[Comment: it would be better to provide more details.]}

% ==================================================
\section{Estimation and Inference}
\label{sec:estimation}

\subsection{Estimation Procedure}
\label{sec:est-proc}
We present the proposed estimation method for learning $G^*$. Let $e_0(X) = \Pr(S = 1 \mid Z = 0, X)$, and $e_1(X) = \Pr(S = 1 \mid Z = 1, X)$. The estimation procedures consist of the following five steps:
 \begin{itemize} 

  \item[{\it Step 1}.] Obtain the estimate \( \hat{\pi}_{s_0s_1}(X) \) for \( \pi_{s_0s_1}(X) \), where \( z, s_0, s_1 \in \{0, 1\} \). In practice, we consider two methods: 
         (i)  Given consistent estimators \( \hat{e}_0(X) \) and \( \hat{e}_1(X) \) for \( e_0(X) \) and \( e_1(X) \), compute \( \hat{\pi}_{s_0s_1}(X) \) using the equation~\eqref{eq:iden-pi11} in Supplementary Material;
        (ii) Use the EM algorithm for maximum likelihood estimation as proposed in \ref{EMalgorithm} of Supplementary Material. 
   
    \item[ {\it Step 2}.] Obtain the estimate \( \hat{{\mathcal{L}}}_{z,s_0s_1}(X) \) for \( {\mathcal{L}}_{z,s_0s_1}(X) \), where \( z, s_0, s_1 \in \{0, 1\} \). In particular, as indicated by equation \eqref{eq:residual-form} in Supplementary Material, 
    the ${\mathcal{L}}_{z,s_0s_1}(X)$ can be solved based on the following residual-based moment conditions: 
\[
\begin{aligned}
m_0(X; \beta_0) &= \mathbb{I}(Z=0,S=1)\, B (X) \left\{ Y - \sum_{s_1 \in \{0,1\}} \frac{\hat{\pi}_{1s_1}(X)}{\hat{e}_0(X)}\, {\mathcal{L}}_{0,1s_1}(X; \beta_{0,1s_1}) \right\}, \\
m_1(X; \beta_1) &= \mathbb{I}(Z=1,S=1)\, B (X) \left\{ Y - \sum_{s_0 \in \{0,1\}} \frac{\hat{\pi}_{s_01}(X)}{\hat{e}_1(X)}\, {\mathcal{L}}_{1,s_01}(X; \beta_{1,s_01}) \right\},
\end{aligned}
\]
where \( {\mathcal{L}}_{0,1s_1}(X; \beta_{0,1s_1}) \) and \( {\mathcal{L}}_{1,s_01}(X; \beta_{1,s_01}) \) denote working models for \( {\mathcal{L}}_{0,1s_1}(X) \) and \( {\mathcal{L}}_{1,s_01}(X) \), respectively, with unknown parameters \( \beta_{0,1s_1} \) and \( \beta_{1,s_01} \). The term \( B(X) \) is some user-specified functions whose dimension is at least as large as that of the parameter vectors.

   \item[{ \it Step 3.}]  
Given covariates $X=x$,  the treatment policy $\hat\rho( {\tilde{s}_0\tilde{s}_1},x)$ is determined based on the {policy} proposed in Section \ref{sec: Preliminaries}: 
\begin{equation*} 
    \hat\rho( {\tilde{s}_0\tilde{s}_1},x) = \mathbb{I}\left\{ \hat{{\mathcal{L}}}_{1,\tilde{s}_0\tilde{s}_1}(X) -   c_1(X) \geq \hat{{\mathcal{L}}}_{0,\tilde{s}_0\tilde{s}_1}(X) -  c_0(X)\right\}.
\end{equation*}
\item[{ \it Step 4.}] Applying the law of iterated expectations to \eqref{eq:misclassification-reward}, and using the estimated policy  $\hat{\rho}( {\tilde{s}_0\tilde{s}_1}, x)$ from { \it Step 3}, the estimated misclassification reward $\hat{\mathcal{R}}(\tilde{s}_0\tilde{s}_1 \mid s_0s_1)$ is calculated as:
\begin{align*} 
\begin{aligned} \hat{\mathcal{R}}&(\tilde{s}_0\tilde{s}_1\mid s_0s_1)
 = \mathbb{P}_n \left[ \begin{matrix}
          \{\hat{{\mathcal{L}}}_{1,s_0s_1}(X)- c_{1}(X)\}\hat\pi_{s_0s_1}(X)\hat\rho( {\tilde{s}_0\tilde{s}_1},X) \\+\{\hat{{\mathcal{L}}}_{0,s_0s_1}(X)- c_{0}(X)\}\hat\pi_{s_0s_1}(X)\{1-\hat\rho( {\tilde{s}_0\tilde{s}_1},X)\}  
     \end{matrix} \right] \Bigg/{\mathbb{P}_n\left\{\hat\pi_{s_0s_1}(X)\right\}},
\end{aligned}
\end{align*}   
where $\mathbb{P}_n(\cdot)$ denotes the sample average operator.

     \item[{ \it Step 5.}]  
Given the  observed covariates $X=x$, the estimated Bayesian decision rule, as stated in Theorem~\ref{thm:maxim-reward}, is given by
\begin{align*} 
\hat {G}^\ast   = \underset{  {s}_0 ,{s}_1\in\{0,1\}}{\mathrm{argmax}}\sum_{\tilde{s}_0 ,\tilde{s}_1\in\{0,1\} }{\hat{\mathcal{R}}}(\tilde{s}_0\tilde{s}_1\mid s_0s_1) \cdot \hat\pi_{\tilde{s}_0 \tilde{s}_1}(x).\end{align*}

 \end{itemize}

We give additional remarks on the implementation of {\it Step} 2. In practice, when parametric forms are assumed for \( {\mathcal{L}}_{z,s_0s_1}(X) \), the generalized method of moments \citep{Hansen:1982} provides a natural estimation framework. For nonparametric or semiparametric specifications, such as when ${\mathcal{L}}_{0,\PP}(X)$, ${\mathcal{L}}_{1,\PP}(X)$, ${\mathcal{L}}_{1,\PN}(X)$, and ${\mathcal{L}}_{0,\NP}(X)$ are modeled using sieve methods, we can estimate them using the sieve minimum distance method of \citet{AiChen2003} or the penalized sieve minimum distance method of \citet{ChenPouzo2015}. 
%estimation can be carried out using the sieve minimum distance method of \citet{AiChen2003} or the penalized sieve minimum distance method of \citet{ChenPouzo2015}.  

%downstream 
Beyond the estimation of pseudo-strata described above, we now discuss two important tasks that can be implemented using the estimated quantities from our estimation procedures:

  (i) \emph{Estimation of principal causal effects.} 
Given $\hat{\pi}_{s_0s_1}(X)$ from \textit{Step 1} and  $\hat{\mathcal{L}}_{z,s_0s_1}(X)$ from \textit{Step 2}, for $(z, s_0s_1) \in {(1,\PP), (1,\PN), (0,\PP), (0,\NP)}$, we estimate the marginal expected outcomes $\mathcal{L}_{z,s_0s_1}$  as follows:
\begin{equation}\label{ese:pse}
    \hat{\mathcal{L}}_{z,s_0s_1} =  {\mathbb{P}_n \left\{ \hat{\mathcal{L}}_{z,s_0s_1}(X) \cdot \hat{\pi}_{s_0s_1}(X) \right\}}\big/{\mathbb{P}_n\left\{\hat{\pi}_{s_0s_1}(X)\right\}}.
\end{equation}  
The principal causal effects within %{\red principal stratification} 
$G = 11$ is then given by $\hat{\mathcal{L}}_{1,11} - \hat{\mathcal{L}}_{0,11}$.
%\col{[This estimation seems not realted to the learned {\blue pseudo-strata}.]}
 {We will use \eqref{ese:pse} to evaluate the proposed estimation method for $L_{z, s_0s_1}(X)$.}
%identifiability assumptions and the 

% an estimated pseudo-strata
\textit{(ii) Personalized treatment policy.} 
Given $\hat{G}^\ast = {s}_0{s}_1$ obtained from \textit{Step 5}, the pseudo-strata-induced treatment policy is given by \begin{equation}
    \label{policy-rule}
    \hat{\rho}({s}_0{s}_1, x) = \mathbb{I}\left\{ \hat{\mathcal{L}}_{1,{s}_0{s}_1}(x) - c_1(x) \geq \hat{\mathcal{L}}_{0,{s}_0{s}_1}(x) - c_0(x)\right\},
\end{equation}
% \begin{equation*} 
%     \hat{\rho}({s}_0{s}_1, x) = \mathbb{I}\left\{ \hat{\mathcal{L}}_{1,{s}_0{s}_1}(x) - c_1(x) \geq \hat{\mathcal{L}}_{0,{s}_0{s}_1}(x) - c_0(x)\right\}.
% \end{equation*}
 %This policy
which assigns treatment if and only if the net profit under treatment exceeds that under control.
%(outcome minus cost)

 % =============================================================
 \subsection{Inference}

We present the theoretical asymptotic properties of the proposed estimation method. First, we establish the convergence rate of the estimator $\hat{\mathcal{R}} (\tilde{s}_0\tilde{s}_1\mid s_0s_1)$, which forms the basis 
for analyzing the properties of the decision rule $\hat{G}^\ast$. The regularity conditions are summarized below.   
\begin{assumption}
\label{rate-cond} 
%We consider the 
{Suppose   the following conditions hold:}
\begin{itemize}
    \item[(i)] The \( e_0(X), e_1(X), {\mathcal{L}}_{0,s_0 s_1}(X), {\mathcal{L}}_{1,s_0 s_1}(X) \) and their estimates $ \hat{e}_0(X), \hat{e}_1(X), \hat{{\mathcal{L}}}_{0,s_0 s_1}(X),$  $\hat{{\mathcal{L}}}_{1,s_0 s_1}(X)$ are uniformly bounded by a constant \( M > 0 \), i.e.,
    \[
    \begin{gathered}
        \|\hat{e}_{z}(X)\|_{\infty} \le M, ~ ~    \| {e}_{z}(X)\|_{\infty} \le M, ~~ 
        \|\hat{{\mathcal{L}}}_{z,s_0 s_1}(X)\|_{\infty} \le M, ~ ~  \|\hat{{\mathcal{L}}}_{z,s_0 s_1}(X) \|_{\infty} \le M,
    \end{gathered}
    \]
  {where $||\cdot||_\infty$ denotes the $L_\infty$ norm.}  
    
    \item[(ii)] For \( z \in \{0,1\} \) and \( s_0, s_1  \in \{0,1\} \), the estimated functions \( \hat{e}_z \) and \( \hat{{\mathcal{L}}}_{z,s_0 s_1} \) converge uniformly to their population counterparts at the following rates:
    \[
    \begin{aligned}
        \|\hat{e}_{z} (X)- e_{z}(X)\|_{\infty} &= O_p(n^{-\gamma_z}), ~
        \|\hat{{\mathcal{L}}}_{z,s_0 s_1} (X)- {\mathcal{L}}_{z,s_0 s_1}(X)\|_{\infty} &= O_p(n^{-\gamma_{z,s_0 s_1}}),
    \end{aligned} 
    \]
where all $\gamma_z > 0 $ and $  \gamma_{z,s_0 s_1} > 0.$
    \item[(iii)]  For each  \( s_0, s_1  \in \{0,1\} \), the density of the random variables  ${\mathcal{L}}_{1,s_0 s_1}(X) - {\mathcal{L}}_{0,s_0 s_1}(X)$  is uniformly bounded over $x \in \mathcal{X}$.

\end{itemize}
\end{assumption}

Assumptions \ref{rate-cond}(i) and (ii) present regularity conditions on the nuisance components involved in the estimation process, where \ref{rate-cond}(i) requires that these functions and their estimators be uniformly bounded by a positive constant, and \ref{rate-cond}(ii) specifies the uniform convergence rates of the estimators toward their population targets. 
Assumption \ref{rate-cond}(iii) introduces an additional, though mild, regularity condition requiring that the conditional expectation  ${\mathcal{L}}_{1,s_0 s_1}(X) - {\mathcal{L}}_{0,s_0 s_1}(X)$ have a uniformly bounded density. This assumption is a special case of the margin condition, which is widely used in the literature on estimating non-smooth functionals \citep{Audibert-etal-2007, Luedtke-etal-2016, 2018Who, Kennedy-2019}.  Further discussion on the convergence rate of the working models is provided in \ref{discussion-convengenc} of Supplementary Material. 
%Part (ii) of Assumption \ref{rate-cond}

\begin{theorem}\label{thm:rate} 
Under Assumptions \ref{assumption:RCT}-\ref{rate-cond}, the estimated misclassification reward satisfies
\[
\left| \hat{\mathcal{R}}(\tilde{s}_0 \tilde{s}_1 \mid s_0 s_1) - \mathcal{R}(\tilde{s}_0 \tilde{s}_1 \mid s_0 s_1) \right| 
\leq O_p\left( n^{ - \gamma } \right),
\]where $\gamma = \min \left\{ \min_{z \in \{0,1\}} \gamma_z, \min_{(z, s_0 s_1)  } \gamma_{z,s_0 s_1}, 1/2\right\}$.
\end{theorem}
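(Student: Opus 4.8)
The plan is to decompose the error $\hat{\mathcal{R}}(\tilde{s}_0\tilde{s}_1\mid s_0s_1) - \mathcal{R}(\tilde{s}_0\tilde{s}_1\mid s_0s_1)$ into three pieces: (a) the error from replacing the true nuisances $(\pi_{s_0s_1}, {\mathcal{L}}_{z,s_0s_1}, e_z)$ with their estimates inside the integrand and the policy $\hat\rho$; (b) the error from replacing the policy-indicator $\rho(\tilde{s}_0\tilde{s}_1,X) = \mathbb{I}\{{\mathcal{L}}_{1,\tilde{s}_0\tilde{s}_1}(X) - c_1(X) \ge {\mathcal{L}}_{0,\tilde{s}_0\tilde{s}_1}(X) - c_0(X)\}$ with $\hat\rho$; and (c) the Monte-Carlo error from replacing the population expectation (via $\mathbb{P}_n$) with its limit. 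Since $\hat{\mathcal{R}}$ is a ratio $\mathbb{P}_n\{N\}/\mathbb{P}_n\{\hat\pi_{s_0s_1}(X)\}$, I would first handle the ratio structure: under Assumption~\ref{rate-cond}(i) the denominator $\mathbb{P}_n\{\hat\pi_{s_0s_1}(X)\}$ converges to $\Pr(G=s_0s_1) > 0$ up to $O_p(n^{-\min\{\gamma_z\}} + n^{-1/2})$ terms, so by a standard first-order expansion of $a/b$ it suffices to bound the numerator error and the denominator error separately, each at rate $O_p(n^{-\gamma})$, and the ratio inherits the same rate.

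For piece (c), the centered sample average $\mathbb{P}_n\{\cdot\} - \mathbb{E}\{\cdot\}$ of a uniformly bounded integrand (bounded by Assumption~\ref{rate-cond}(i) and the boundedness of $c_z$) is $O_p(n^{-1/2})$ by Chebyshev/CLT, contributing the $1/2$ in the definition of $\gamma$. For piece (a), I would write the integrand as a product of factors each of which is replaced by its estimate, and use a telescoping argument: the difference of products is a sum of terms in which exactly one factor is a difference $\hat\pi - \pi$, $\hat{\mathcal{L}} - {\mathcal{L}}$, or $\hat e - e$ and the rest are bounded; invoking Assumption~\ref{rate-cond}(ii) and the $L_\infty$ bounds, each such term is $O_p(n^{-\gamma_z})$ or $O_p(n^{-\gamma_{z,s_0s_1}})$ in the appropriate norm, hence $O_p(n^{-\gamma})$ after taking expectations. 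Note that $\hat\pi_{s_0s_1}(X)$ is built from $\hat e_0, \hat e_1$ (and the known $\theta(X)$) via the smooth identification formula of \emph{Step~1}, so $\|\hat\pi - \pi\|_\infty = O_p(n^{-\min_z\gamma_z})$ follows from Assumption~\ref{rate-cond}(ii) and the chain rule / Lipschitz continuity of that formula on the region where $e_z$ is bounded away from the relevant boundary.

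The main obstacle is piece (b): bounding $\mathbb{E}\big[|\hat\rho(\tilde{s}_0\tilde{s}_1,X) - \rho(\tilde{s}_0\tilde{s}_1,X)| \cdot (\text{bounded weight}) \mid G=s_0s_1\big]$. The indicator is discontinuous, so a naive uniform bound only gives $O_p(1)$. This is exactly where Assumption~\ref{rate-cond}(iii), the margin/bounded-density condition, enters. The event $\hat\rho \ne \rho$ is contained in $\{|{\mathcal{L}}_{1,\tilde s_0\tilde s_1}(X) - {\mathcal{L}}_{0,\tilde s_0\tilde s_1}(X) - (c_1(X)-c_0(X))| \le \|\hat{\mathcal{L}}_{1,\tilde s_0\tilde s_1} - {\mathcal{L}}_{1,\tilde s_0\tilde s_1}\|_\infty + \|\hat{\mathcal{L}}_{0,\tilde s_0\tilde s_1} - {\mathcal{L}}_{0,\tilde s_0\tilde s_1}\|_\infty\}$, i.e. the set where the policy-defining margin is within $O_p(n^{-\gamma_{z,\tilde s_0\tilde s_1}})$ of zero; under the bounded-density assumption this set has probability $O_p(n^{-\gamma_{z,\tilde s_0\tilde s_1}})$, so the contribution of piece (b) is $O_p(n^{-\min_{z}\gamma_{z,\tilde s_0\tilde s_1}})$, again dominated by $n^{-\gamma}$. (I would absorb the conditioning on $G=s_0s_1$ by noting the conditional density of the margin is also bounded, or by bounding the unconditional version after multiplying through by $\pi_{s_0s_1}(X)\le 1$.) Collecting pieces (a), (b), (c) and the ratio expansion gives the claimed bound with $\gamma = \min\{\min_z\gamma_z, \min_{z,s_0s_1}\gamma_{z,s_0s_1}, 1/2\}$.
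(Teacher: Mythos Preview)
Your proposal is correct and follows essentially the same route as the paper. The paper packages your three pieces into a general lemma (their Lemma~S5) for ratios of the form $\mathbb{P}_n[\hat f_2\hat f_3\,\mathbb{I}(\hat f_1>0)]/\mathbb{P}_n[\hat f_3]$, whose proof decomposes the numerator error into exactly your (a), (b), (c): their $\mathcal{A}_{1n}$ is the telescoping product error, $\mathcal{A}_{2n}$ is the indicator-swap error handled via the inclusion $\{\hat\rho\ne\rho\}\subset\{|f_1|\le|\hat f_1-f_1|\}$ together with the bounded-density (margin) condition, and $\mathcal{A}_{3n}$ is the $O_p(n^{-1/2})$ CLT term; the Lipschitz argument you give for $\|\hat\pi-\pi\|_\infty=O_p(n^{-\min_z\gamma_z})$ is isolated as their Lemma~S4.
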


 Theorem \ref{thm:rate} shows that the estimation error of the misclassification reward is dominated by the slowest convergence rate among estimators of $\mathcal{L}_{0,s_0 s_1}(X)$ and $e_z(X)$ for $z, s_0, s_1 \in \{0, 1\}$. In particular, when we construct these estimators under correctly specified parametric models, $\hat{\mathcal{R}}(\tilde{s}_0 \tilde{s}_1 \mid s_0 s_1)$ achieves a convergence rate of $O_p(n^{-1/2})$.

Define  $\mathcal{Q}_{s_0s_1}(x) = \sum_{\tilde{s}_0, \tilde{s}_1 \in \{0,1\}} \mathcal{R}(\tilde{s}_0 \tilde{s}_1 \mid s_0s_1) \cdot \pi_{\tilde{s}_0 \tilde{s}_1}(x),$ and the true maximizer set as $
\mathcal{S}(x) = \arg\max_{s_0s_1} \mathcal{Q}_{s_0s_1}(x),$ 
which may contain more than one element (i.e., the maximum may not be unique). The following proposition provides the selection consistency.
\begin{proposition}
\label{prop:selection}
Under Assumptions \ref{assumption:RCT}-\ref{rate-cond},  and for any \( s_0s_1 \notin \mathcal{S}(x) \), there exists some \( \bar{s}_0\bar{s}_1 \in \mathcal{S}(x) \) such that $
\mathcal{Q}_{\bar{s}_0\bar{s}_1}(x) - \mathcal{Q}_{s_0s_1}(x) \ge \delta$ {for some} $ \delta > 0,$ 
we have
\[
\lim_{n \to \infty} \pr(\hat{G}^\ast \in \mathcal{S}(x) \mid X = x) = 1.
\]
\end{proposition}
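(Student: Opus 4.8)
The plan is to show that the estimated objective $\hat{\mathcal{Q}}_{s_0s_1}(x) := \sum_{\tilde{s}_0,\tilde{s}_1}\hat{\mathcal{R}}(\tilde{s}_0\tilde{s}_1 \mid s_0s_1)\cdot\hat{\pi}_{\tilde{s}_0\tilde{s}_1}(x)$ converges uniformly (over the finite index set $\{0,1\}^2$) to $\mathcal{Q}_{s_0s_1}(x)$, and then invoke the margin/separation condition $\mathcal{Q}_{\bar{s}_0\bar{s}_1}(x)-\mathcal{Q}_{s_0s_1}(x)\ge\delta$ to rule out selection of any suboptimal label with probability tending to one. First I would write, for each pair $(s_0,s_1)$,
\[
\bigl|\hat{\mathcal{Q}}_{s_0s_1}(x)-\mathcal{Q}_{s_0s_1}(x)\bigr|
\le \sum_{\tilde{s}_0,\tilde{s}_1}\Bigl\{\bigl|\hat{\mathcal{R}}(\tilde{s}_0\tilde{s}_1\mid s_0s_1)-\mathcal{R}(\tilde{s}_0\tilde{s}_1\mid s_0s_1)\bigr|\cdot\bigl|\hat{\pi}_{\tilde{s}_0\tilde{s}_1}(x)\bigr|
+\bigl|\mathcal{R}(\tilde{s}_0\tilde{s}_1\mid s_0s_1)\bigr|\cdot\bigl|\hat{\pi}_{\tilde{s}_0\tilde{s}_1}(x)-\pi_{\tilde{s}_0\tilde{s}_1}(x)\bigr|\Bigr\}.
\]
By Theorem \ref{thm:rate} the first summand is $O_p(n^{-\gamma})$; the rewards $\mathcal{R}(\tilde{s}_0\tilde{s}_1\mid s_0s_1)$ are bounded under Assumption \ref{rate-cond}(i) (they are expectations of differences of bounded functions); and the probabilities $\hat{\pi}_{\tilde{s}_0\tilde{s}_1}(x)$ are bounded by $1$. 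Hence, provided $\|\hat{\pi}_{\tilde{s}_0\tilde{s}_1}-\pi_{\tilde{s}_0\tilde{s}_1}\|_\infty = o_p(1)$ (which follows from consistency of $\hat{e}_0,\hat{e}_1$ in Assumption \ref{rate-cond}(ii) together with the identification map in Assumption \ref{assump:odds}), the right-hand side is $o_p(1)$, uniformly over the four pairs.

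Next I would argue the consistency conclusion directly. Fix $x$ and suppose $s_0s_1\notin\mathcal{S}(x)$; pick $\bar{s}_0\bar{s}_1\in\mathcal{S}(x)$ with $\mathcal{Q}_{\bar{s}_0\bar{s}_1}(x)-\mathcal{Q}_{s_0s_1}(x)\ge\delta$. On the event $\mathcal{E}_n = \{\max_{s_0,s_1}|\hat{\mathcal{Q}}_{s_0s_1}(x)-\mathcal{Q}_{s_0s_1}(x)|<\delta/2\}$ we have
\[
\hat{\mathcal{Q}}_{\bar{s}_0\bar{s}_1}(x)-\hat{\mathcal{Q}}_{s_0s_1}(x) > \mathcal{Q}_{\bar{s}_0\bar{s}_1}(x)-\mathcal{Q}_{s_0s_1}(x)-\delta > 0,
\]
so the argmax $\hat{G}^\ast$ cannot equal $s_0s_1$. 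Taking a union over the finitely many suboptimal labels, $\{\hat{G}^\ast\notin\mathcal{S}(x)\}\subseteq\mathcal{E}_n^c$, and since $\pr(\mathcal{E}_n^c)\to 0$ by the uniform convergence established above, we conclude $\pr(\hat{G}^\ast\in\mathcal{S}(x)\mid X=x)\to 1$.

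The main obstacle is the uniform-consistency step for $\hat{\pi}_{\tilde{s}_0\tilde{s}_1}$: the proposition's stated assumptions give convergence rates only for $\hat{e}_z$ and $\hat{\mathcal{L}}_{z,s_0s_1}$, so I need to propagate these through the (nonlinear) identification formulas for $\pi_{s_0s_1}(X)$ under the fixed odds ratio $\theta(X)$—this requires a Lipschitz/continuous-mapping argument on a region bounded away from the degenerate boundary of the $\pi$-simplex, which is where Assumption \ref{rate-cond}(i)'s uniform bounds and an implicit overlap-type condition do the work. A secondary subtlety is that the conditioning on $X=x$ should be read as a pointwise statement at a fixed $x$ (consistent with how $\mathcal{S}(x)$ is defined), so no additional uniformity over $x\in\mathcal{X}$ is needed beyond what Theorem \ref{thm:rate}, which is stated for the marginal reward, already supplies; if a version uniform in $x$ were wanted, Assumption \ref{rate-cond}(iii) would be the relevant ingredient to control the measure of near-tie regions.
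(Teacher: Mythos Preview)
Your proposal is correct and follows essentially the same route as the paper: the paper first proves pointwise consistency of $\hat{\mathcal{Q}}_{s_0s_1}(x)$ via the identical add-and-subtract decomposition you wrote (packaged as a separate lemma), then sets $\xi=\delta/2$ and argues exactly as you do that on the event $\{\max_{s_0s_1}|\hat{\mathcal{Q}}_{s_0s_1}(x)-\mathcal{Q}_{s_0s_1}(x)|<\xi\}$ every suboptimal label is strictly dominated. The obstacle you flag---propagating the rates for $\hat{e}_z$ through the nonlinear identification map for $\pi_{s_0s_1}$---is precisely what the paper handles in a preliminary lemma via the Lipschitz/continuous-differentiability argument you anticipated.
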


% Proposition \ref{prop:selection} establishes the selection consistency of the proposed optimal pseudo-{\red type}. The key additional condition ensures strict separation: the objective function value for the true maximizer \( \bar{s}_0\bar{s}_1 \in \mathcal{S}(x) \) strictly exceeds that of any suboptimal candidate \( s_0s_1 \notin \mathcal{S}(x) \). This result ensures that, as sample size increases,  the estimated Pseudo-{\red strata} $\hat{G}^\ast$ will asymptotically belong to the optimal decision set $\mathcal{S}(x)$ with high probability for any covariate value $X = x$.

  %optimal 
Proposition \ref{prop:selection} establishes the selection consistency of the pseudo-strata estimator. The key condition imposes a strict separation requirement: every suboptimal candidate \( s_0s_1 \notin \mathcal{S}(x) \) must be uniformly dominated by at least one element \( \bar{s}_0\bar{s}_1 \in \mathcal{S}(x) \) in the optimal set, with a margin of at least $\delta > 0$. Under this  condition,  $\hat{G}^\ast$ asymptotically selects from the optimal decision set $\mathcal{S}(x)$ with probability one as the sample size grows, for any covariate configuration $X = x$.
%the estimated pseudo-strata

% ==================================================
       \section{Simulation Studies} 
    \label{sec:experiments} 
 \subsection{Simulation Settings}
 \label{sec:ER-sim} 
% \bcol{following Assumptions \ref{assumption:RCT}, \ref{assump:odds} and  \ref{assumption:NoInteraction} [remove?]}   
%In this section, 

%$A \sim N(-0.25, 1)$ and $C \sim N(0.25, 1)$. 
We conduct simulation studies to evaluate the finite-sample performance of the proposed method. The covariates $X=\{A,C\}$ 
are generated from a multivariate normal distribution 
$N( (0.25,-0.25)^\T, I_2)$, where $I_2$ is the $2\times2$ identity matrix. 
The treatment $Z$ is generated using a logistic model: 
\[
\Pr(Z=1 \mid  {X}) = \mathrm{expit}(0.15 - \delta A - \delta C),
\]
where $\delta \in \{0, 1\}$ controls the treatment assignment mechanism. When $\delta = 1$, treatment assignment depends on covariates $A$ and $C$; when $\delta = 0$, it is randomized. 
We specify a multinomial logistic model for the principal strata $G \in \{\NN, \NP, \PN, \PP\}$, with $G=\NN$ as the baseline stratification. The corresponding true strata probabilities are:
% \begin{equation}
%     \label{sensitivity:logistic} 
%         \begin{gathered}
% \Pr(G = \NN \mid X) \propto 1, \quad \Pr(G = \NP \mid X) \propto \exp(-0.3 - A + 0.5C), \\
% \Pr(G = \PN \mid X) \propto \exp(0.4 + A - C), \quad
% \Pr(G = \PP \mid X) \propto \exp(0.1 - 0.5C + \eta),
% \end{gathered} 
% \end{equation}
\begin{equation}
    \label{sensitivity:logistic} 
        \begin{gathered}
\Pr(G = \NN \mid X) \propto 1, ~ \Pr(G = \NP \mid X) \propto \exp(\iota_{\NP} + \iota_{\NP,A} A + \iota_{\NP,C} C), \\
\Pr(G = \PN \mid X) \propto \exp(\iota_{\PN} + \iota_{\PN,A} A + \iota_{\PN,C} C), ~
\Pr(G = \PP \mid X) \propto \exp(\iota_{\PP} + \iota_{\PP,C} C + \eta),
\end{gathered} 
\end{equation}
where $\iota_{\NP} = -0.3$, $\iota_{\NP,A} = -1$, $\iota_{\NP,C} = 0.5$, $\iota_{\PN} = 0.4$, $\iota_{\PN,A} = 1$, $\iota_{\PN,C} = -1$, $\iota_{\PP} = 0.1$, $\iota_{\PP,C} = -0.5$, and $\eta$ is the sensitivity parameter.
 The observed intermediate variable $S$  is then determined by the consistency assumption.   Additionally, by verifying the definition of the odds function, we have $\theta(X) = \exp(\eta)$; the verification details are provided in \ref{ssec:verification-thetaX} of Supplementary Material.

To mimic the non-negative nature of outcomes in the application (Section~\ref{sec:application}), the potential outcomes $Y_0$ and $Y_1$ are generated from non-negative distributions. Specifically, let $\varepsilon_0, \varepsilon_1  {\sim} U(-1, 1)$ independently. For $G = \NN$, $Y_0 = Y_1 = 0$; for $G = \NP$, $Y_0 = \exp(1.5 + A) + \exp(0.5 - 1.1C) + \varepsilon_0$ and $Y_1 = 0$; for $G = \PN$, $Y_0 = 0$ and $Y_1 = \exp(1 + A) + \exp(1.5 + 1.15C) + \varepsilon_1$; for $G = \PP$, $Y_0 = \exp(1.5 + A) + \exp(1 + 1.2C) + \varepsilon_0$ and $Y_1 = \exp(1 + A) + \exp(1 + 0.5C) + \varepsilon_1$.  The observed outcome $Y$ is determined by the consistency assumption. Assumptions \ref{assumption:RCT}--\ref{assumption:NoInteraction} hold in the above setups.}

We implement the proposed method described in Section~\ref{sec:est-proc}. In Step 1, we employ a multinomial logistic model for estimating $\pi_{s_0s_1}(X)$ with $s_0,s_1\in\{0,1\}$; implementation details are provided in \ref{sec:logistic-models} of Supplementary Material. In Step 2, we specify the outcome regression models ${\mathcal{L}}_{z,s_0s_1}(X)$ using an additive exponential form: 
\begin{equation}
    \label{eq:exp-models}
{\mathcal{L}}_{z,s_0s_1}(X) = \exp(\beta_{z,s_0s_1,0} + A) + \exp(\beta_{z,s_0s_1,1} + \beta_{z,s_0s_1,2} C),
\end{equation}
where the first exponential term captures the effect of covariate $A$, and the second term models the influence of covariate $C$, with all parameters varying across treatment levels and strata. 

In Supplementary Material, we also provide additional simulation studies under the same settings, except that potential outcomes $Y_0$ and $Y_1$ are generated using a   linear model. The results are similar to those in the main text, further confirming the robustness of the proposed method.
% as in the main text

 \subsection{Simulation Results: Principal Causal Effects}
 
% estimating the marginal expected outcomes
Table~\ref{tab: result-simulation-ER} reports the bias and standard error of the estimates of ${\mathcal{L}}_{z,s_0s_1}$ obtained using equation~\eqref{ese:pse} over 200 replications,  where $(z, s_0s_1) \in \{(1,\PP), (1,\PN), (0,\PP), (0,\NP)\}$.
From Table~\ref{tab: result-simulation-ER}, 
 the proposed method demonstrates robust and consistent performance under various combinations of parameters $(\delta, \eta)$. 
As the sample size increases from $n = 500$ to $n = 20000$, the bias converges to zero and the standard error decreases substantially, confirming the asymptotic properties of the estimators. 
When $\eta = 0$, corresponding to the conditional independence case $S_0 \indep S_1 \mid X$, the method exhibits smaller bias and SE even in small sample scenarios. When $\eta = 0.25$, corresponding to the conditional dependence case $S_0 \not\indep S_1 \mid X$ with positive correlation as discussed in Assumption~\ref{assump:odds}, 
The proposed method also performs well, although the standard error increases slightly due to the additional uncertainty introduced by the dependence structure. 
%the proposed method performs stably well, although the standard error  increases slightly due to the additional uncertainty from the dependence structure. 
%remains consistent for all four  parameters

\begin{table}[t]
\caption{Simulation results for causal estimands ${\mathcal{L}}_{1,\AT}$, ${\mathcal{L}}_{1,\PN}$, ${\mathcal{L}}_{0,\AT}$, and ${\mathcal{L}}_{0,\NP}$, with bias ($\times 100$) and standard error ($\times 100$, in parentheses) reported.}
 \label{tab: result-simulation-ER} 
\centering
\setlength{\tabcolsep}{3pt}
\resizebox{0.9965878\columnwidth}{!}{%   
\begin{tabular}{ccccccccccc}
\toprule
                        &  & \multicolumn{4}{c}{$\delta=0,\eta=0$}                     &  & \multicolumn{4}{c}{$\delta=0,\eta=0.25$}                  \\ \cline{3-6} \cline{8-11} 
Sample Size             &  & $n=$ 500     & $n=$ 2000    & $n=$ 5000    & $n=$ 20000   &  & $n=$ 500     & $n=$ 2000    & $n=$ 5000    & $n=$ 20000   \\
${\mathcal{L}}_{1,\AT}$ &  & -0.06 (1.06) & 0.01 (0.55)  & -0.02 (0.31) & -0.02 (0.15) &  & -0.07 (0.87) & 0.01 (0.43)  & -0.02 (0.24) & -0.01 (0.13) \\
${\mathcal{L}}_{1,\PN}$ &  & 0.07 (0.66)  & 0.01 (0.34)  & 0.02 (0.21)  & 0.01 (0.1)   &  & 0.09 (0.72)  & 0.02 (0.36)  & 0.02 (0.22)  & 0.01 (0.11)  \\
${\mathcal{L}}_{0,\AT}$ &  & 0.06 (0.42)  & 0.02 (0.23)  & 0.02 (0.14)  & 0.00 (0.06)     &  & 0.06 (0.36)  & 0.01 (0.20)   & 0.02 (0.12)  & 0.00 (0.06)     \\
${\mathcal{L}}_{0,\NP}$ &  & -0.01 (0.48) & -0.01 (0.23) & 0.00 (0.16)     & 0.01 (0.07)  &  & -0.04 (0.57) & 0.00 (0.28)     & 0.00 (0.17)     & 0.01 (0.08)  \\ \toprule
                        &  & \multicolumn{4}{c}{$\delta=0.25,\eta=0$}                  &  & \multicolumn{4}{c}{$\delta=0,\eta=0.25$}                  \\ \cline{3-6} \cline{8-11} 
Sample Size             &  & $n=$ 500     & $n=$ 2000    & $n=$ 5000    & $n=$ 20000   &  & $n=$ 500     & $n=$ 2000    & $n=$ 5000    & $n=$ 20000   \\
${\mathcal{L}}_{1,\AT}$ &  & -0.17 (1.13) & -0.04 (0.53) & -0.02 (0.33) & -0.01 (0.15) &  & -0.09 (0.88) & -0.04 (0.46) & -0.03 (0.25) & 0.00 (0.13)     \\
${\mathcal{L}}_{1,\PN}$ &  & 0.12 (0.71)  & 0.05 (0.32)  & 0.02 (0.22)  & 0.00 (0.10)      &  & 0.09 (0.73)  & 0.06 (0.37)  & 0.03 (0.22)  & 0.00 (0.11)     \\
${\mathcal{L}}_{0,\AT}$ &  & 0.04 (0.41)  & 0.02 (0.2)   & 0.01 (0.13)  & 0.00 (0.07)     &  & 0.06 (0.35)  & 0.01 (0.18)  & 0.02 (0.11)  & -0.01 (0.06) \\
${\mathcal{L}}_{0,\NP}$ &  & -0.02 (0.48) & 0.00 (0.22)     & 0.00 (0.16)     & 0.01 (0.07)  &  & -0.05 (0.52) & 0.00 (0.27)     & 0.00 (0.17)     & 0.02 (0.08)  \\ \toprule
\end{tabular}
}
\end{table}

% ============================
%\subsection{Pseudo-{\red strata} Analysis and Policy Learning}
\subsection{{Simulation Results:} Strata Classification Accuracy}
 \label{sec:policy-studies}
%In this section, 
 %In this section, 
Since true principal strata $G$ are observable in simulation studies, we compare the proposed method with the posterior mode rule in terms of  strata classification accuracy. As described in Remark \ref{remark1}, the posterior mode rule assigns each individual to the principal stratum with the highest posterior probability: 
\begin{equation}
    \label{eq:post-ps}
    \hat{G}^\circ = \underset{ \tilde s_0, \tilde s_1 \in \{0,1\}}{\mathrm{argmax}}~ \hat\pi_{\tilde{s}_0 \tilde{s}_1}(X),
\end{equation}
where $\hat{\pi}_{\tilde{s}_0 \tilde{s}_1}(X)$ for $\tilde s_0, \tilde s_1 \in \{0, 1\}$ are estimated using the method in Section \ref{sec:est-proc}. Throughout all procedures, we set $c_0(X) = c_1(X) \equiv 0$. 

Figure~\ref{fig:case2-prop-plots} compares the strata classification accuracy between the proposed method (\texttt{proposed}, shown in blue) and the posterior mode approach (\texttt{posterior}, shown in red) under various combinations of parameters $(\eta, \delta)$, with a fixed sample size of $n = 20000$ over 200 replications. The classification accuracy is evaluated by computing $\Pr(\hat{G}^* = G)$ and $\Pr(\hat{G}^\circ = G)$. Across all scenarios, both methods achieve accuracy around 39\%--43\%. The \texttt{proposed} method consistently outperforms the \texttt{posterior} method: when $\eta = 0$, both methods achieve similar accuracy of approximately 42.5\%; however, when $\eta = 0.25$, the \texttt{proposed} method maintains accuracy around 40.5\%, while the \texttt{posterior} method drops to approximately 39\%, demonstrating the superior performance of the proposed approach in recovering true  strata.
% The upper panel in Figure~\ref{fig:case2-ER-plots}   presents   {\red type} classification accuracy between the proposed method (\(\texttt{proposed}\)) and the posterior mode approach (\(\texttt{posterior}\)) under various combinations of the parameters ( \(\eta, \delta) \), with a fixed sample size of \( n = 20000 \).  The upper panel in Figure~\ref{fig:case2-ER-plots} compares the classification accuracy of the latent {\red type}s by evaluating $\Pr(\hat{G}^* = G)$ and $\Pr(\hat{G}^\circ = G)$. The \texttt{proposed} method consistently identifies a higher proportion of true {\blue principal strata} across all scenarios, outperforming the \texttt{posterior} method. 
% These results demonstrate that the proposed method excels in both maximizing policy revenues and accurately recovering latent structures, offering a robust solution for Pseudo-{\red strata} learning that jointly optimizes classification accuracy and economic revenues.
\begin{figure}[t]
    \centering
    \includegraphics[width=0.9499\linewidth]{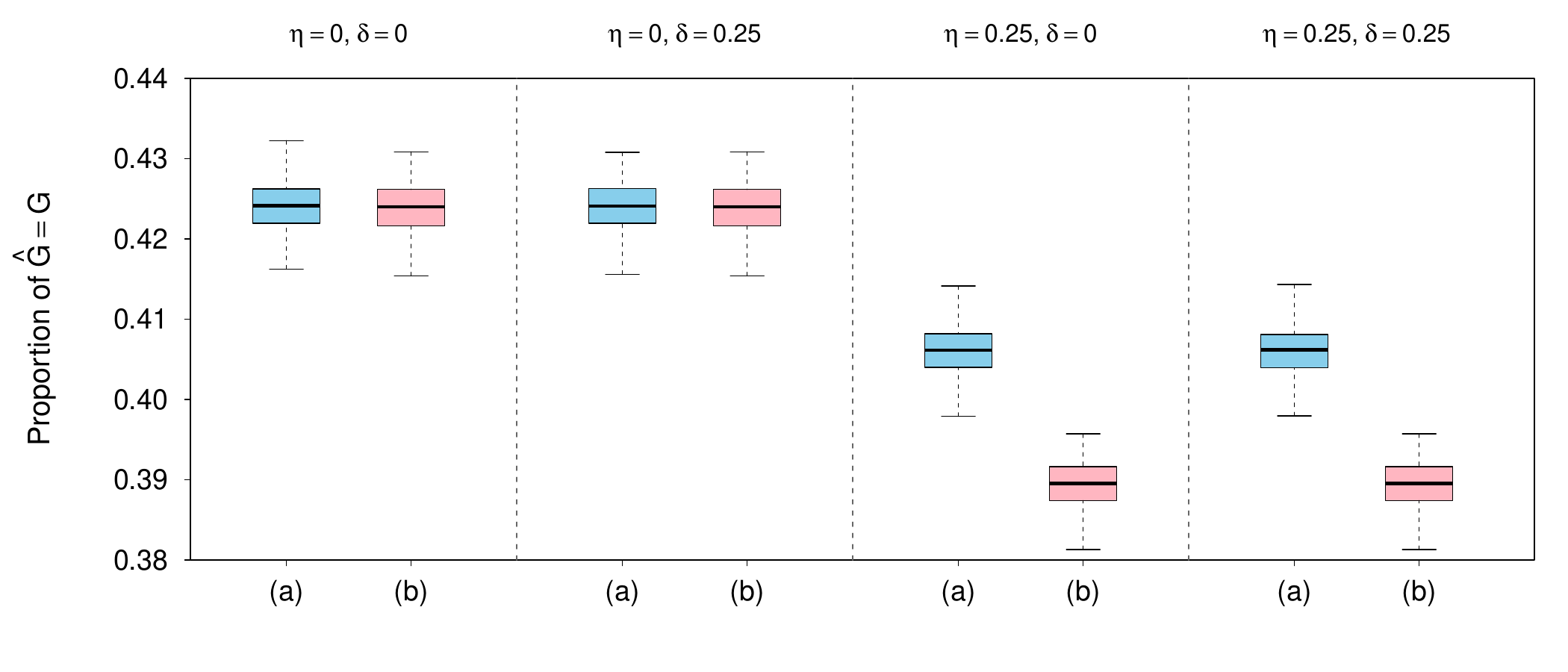}
   \caption{Comparison of principal strata classification accuracy under different parameters $(\eta, \delta)$. Blue boxes represent the proposed method, and red boxes represent the posterior mode method.}
    \label{fig:case2-prop-plots}
\end{figure}

 \begin{figure}[t]
    \centering
    \includegraphics[width=0.9499\linewidth]{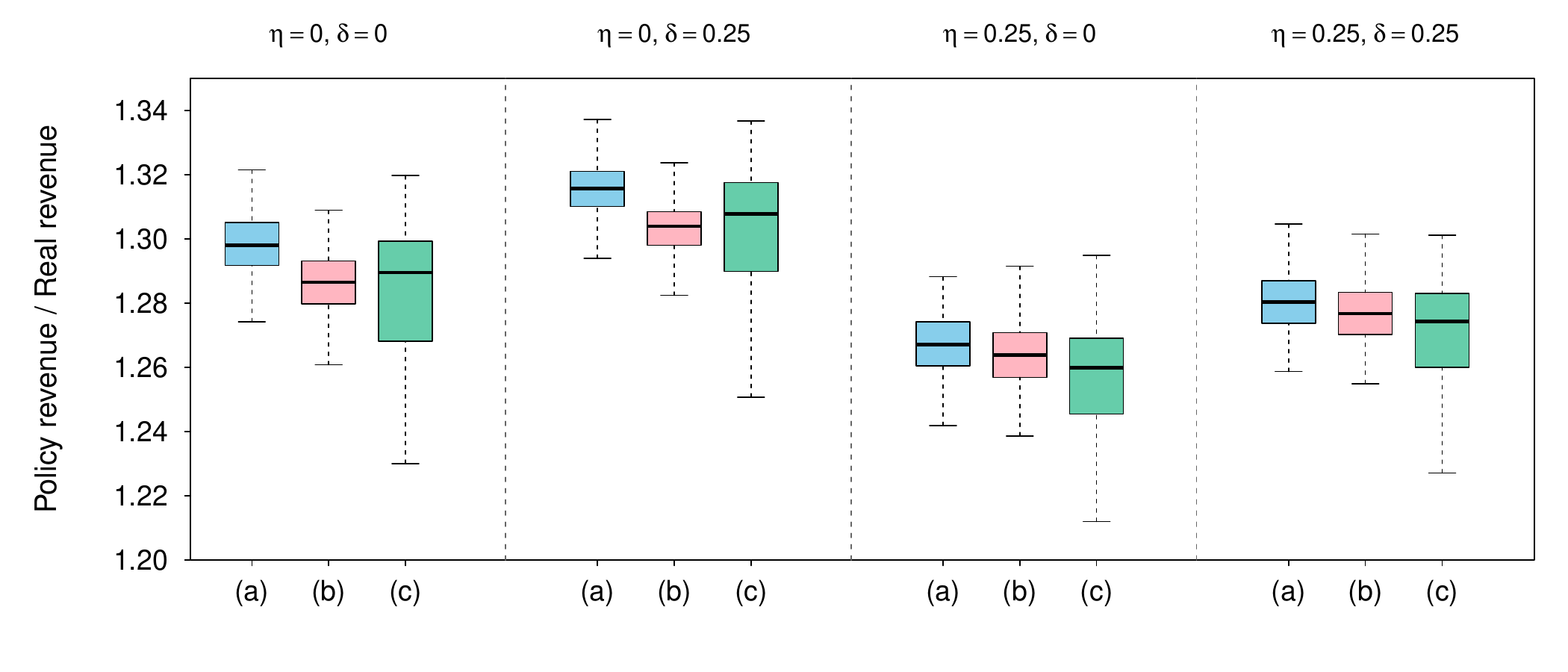}
\caption{Comparison of policy revenue under different parameter configurations ($\eta$, $\delta$). Red boxes represent the proposed method, blue boxes represent the posterior mode method, and green boxes represent the direct optimization method.}
    \label{fig:case2-ER-plots}
\end{figure}

% ================================================
\subsection{{Simulation Results:} Policy Revenue}

In this section, we compare three policy learning methods in terms of policy revenue.  
 The first two methods follow the same two-step process: 
  \begin{itemize}%[leftmargin=20pt]
      \item[\it Step 1:] estimating pseudo-strata using the approaches  described in Section~\ref{sec:policy-studies};
      \item[\it Step 2:]  assigning treatment according to the policy rule~\eqref{policy-rule}, which selects the treatment assignment  with higher expected net profit for each pseudo-strata.   
  \end{itemize}
The first two methods differ only in Step 1:  
the first one (\texttt{proposed}) uses the proposed pseudo-strata learning approach, while the second one  (\texttt{posterior}) uses the posterior-mode approach.

%proposed method (\texttt{proposed}, shown in blue) and the posterior mode approach (\texttt{posterior}, shown in red) 

%In addition to the pseudo-{\red type}-based approaches, we consider a
The third policy learning method (\texttt{direct}) directly optimizes the value function without estimating pseudo-strata. As introduced in Remark~\ref{remark2}, we substitute the estimated quantities $\hat{\pi}_{s_0s_1}(X)$ and $\hat{\mathcal{L}}_{z,s_0s_1}(X)$ into the value function~\eqref{eq:policy-value} to obtain the empirical value $\hat{V}(d)$. We then search for the optimal policy within a linear policy class: $
\hat{d}^*(X) = {\arg\max}_{d \in \Pi}~ \hat{V}(d),$
where $\Pi = \{d(X) = \mathbb{I}\{X^\T \beta > 0\} : \beta \in \mathbb{R}^p\}$ represents linear decision rules parameterized by  $\beta$. Treatment is assigned when the linear score $X^\T\beta$ is positive. We employ grid search over a discretized parameter space to identify the optimal $\beta^*$ that maximizes $\hat{V}(d)$.
%coefficient vector

Figure~\ref{fig:case2-ER-plots} compares policy revenue across the three methods under various parameter configurations $(\eta, \delta)$ with sample size $n = 20000$ over 200 replications. The figure reports the ratio of policy revenue to actual revenue, where policy revenue denotes the expected outcomes under each method's treatment policy, and actual revenue corresponds to the mean of observed outcomes $Y$. 
From Figure~\ref{fig:case2-ER-plots}, the \texttt{proposed} achieves the highest median revenue (1.28--1.32) across all settings, outperforming both \texttt{posterior} (1.26--1.30) and \texttt{direct} (1.25--1.31). 
 The superior performance of \texttt{proposed} relative to \texttt{direct} may stem from its higher stratum classification accuracy.

 \section{Application}
 \label{sec:application}
 \subsection{Background}

Since 2023, Criteo has released anonymized click and conversion logs from its Predictive Search platform to support research on conversion modeling in sponsored search advertising \citep{tallis2018reacting,jeunen2022consequences}. The dataset covers 90 days of real user traffic, amounting to approximately 6.4 GB, and records user clicks during advertising campaigns, along with whether those clicks eventually resulted in a purchase within a 30-day window. Each record captures a user's click on a product advertisement shown after the user expressed purchase intent through an online search engine. The dataset includes product features (such as target age group, brand, gender, and price), click time (uniformly time-shifted), user attributes, and device information. It also contains an indicator of whether a conversion occurred and the time interval between the click and the conversion.

In this study, we excluded all records containing missing values and randomly sampled 20000 complete entries for modeling and analysis. We selected nine categorical variables closely related to user-product interaction and conversion behavior, denoted as $X_1$ through $X_9$. Specifically, $X_1$ indicates the target age group for the product (e.g., adult, teen, or child), $X_2$ represents the device  type  used by the user at the time of the click (e.g., desktop, mobile, or tablet),  $X_3$ corresponds to the product’s intended gender (e.g., male, female, or unisex), $X_4$ represents the product brand,  $X_5$ through $X_8$ denote four hierarchical product category levels, ranging from broad to specific, and $X_9$ indicates the country in which the advertisement was displayed or the product was sold. According to the official documentation, all categorical variables have been hashed to protect the privacy of users and advertisers, rendering their exact meanings unidentifiable. Thus, we encoded these variables as integers to facilitate statistical modeling.

%The three key variables $(Z, S, Y)$ are as follows:
We use three key variables for subsequent analysis. 
 The first variable is the binary treatment indicator $Z$, which captures user engagement intensity. Specifically, we define $Z = 1$ if the number of product clicks in the past week is less than 20, i.e., $Z = \mathbb{I}(\texttt{nb clicks 1week} < 20)$. The threshold of 20 is approximately the median in the dataset and helps distinguish low-frequency from high-frequency users. When the number of clicks is below 20 ($Z = 1$), it may reflect limited interest or low ad exposure. In contrast, $Z = 0$ indicates high-frequency engagement, which, considering that ad exposures are typically triggered by search intent, may suggest intensive ad delivery, strong product interest, or both. 
The second variable $S$ indicates whether a conversion occurred, that is, whether the user made a purchase. 
The third variable $Y$ is a continuous outcome defined as product revenue. 
% The revenue $Y$ is only observed when $S = 1$.

% ==================================================
 \subsection{{Estimation Results:} Principal Causal Effects}\label{app:Point estimation} 
 %In this section, 
 \begin{figure}[t]
    \centering
        \includegraphics[width=0.89590495\linewidth]{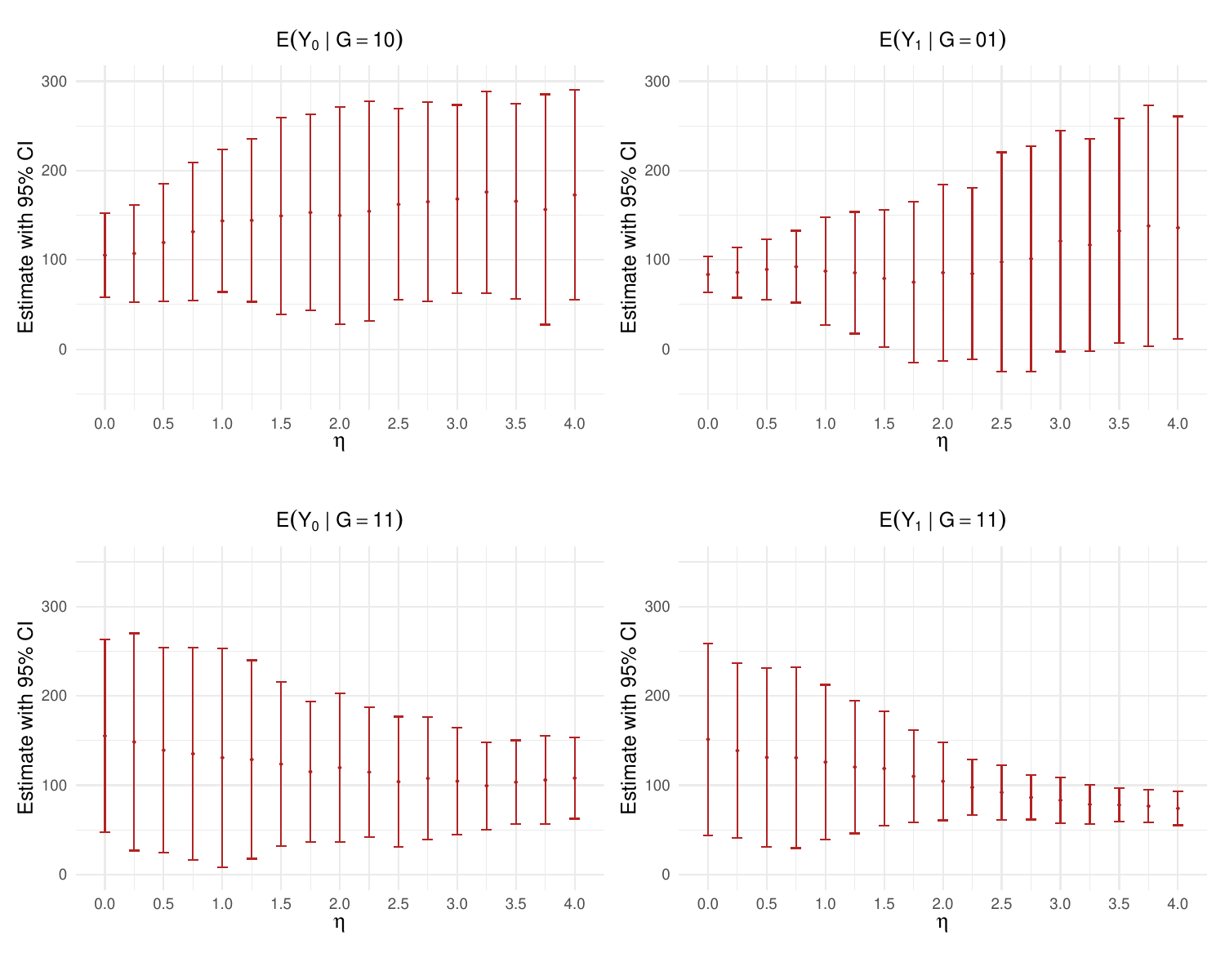}
\caption{Sensitivity analysis for principal causal effects, where $\eta$ is the sensitivity parameter in \eqref{sensitivity:logistic}. } 
    \label{fig:enter-label-realdata}
\end{figure}

We examine whether user engagement intensity, measured by how actively users click on ads, influences the platform's overall revenue. We apply the estimation procedure described in the simulation studies, estimating the principal causal effect quantities ${\mathcal{L}}_{z,s_0s_1}$ (marginal expected outcomes) via the estimators $\hat{\mathcal{L}}_{z,s_0s_1}$ in \eqref{ese:pse} for $(z, s_0s_1) \in {(1,\PP), (1,\PN), (0,\PP), (0,\NP)}$. 
Following the simulation setup in \eqref{sensitivity:logistic}, we specify the odds ratio function in Assumption~\ref{assump:odds} as $\theta(X) = \exp(\eta)$. As outlined in Assumption~\ref{assump:odds}, we vary $\eta$ as a sensitivity parameter to examine how the degree of conditional dependence between $S_0$ and $S_1$ affects the conclusions.

We anticipate a positive association between the two potential purchase statuses. Users who would purchase under low  engagement $(S_1 = 1)$ likely possess strong intrinsic purchase intent and are also more responsive to intensive advertising $(S_0 = 1)$, as the additional exposure serves to reinforce their existing interest and further increase conversion probability. Conversely, users who would not purchase under low engagement ($S_1=0$) typically lack intrinsic motivation and are also unlikely to be persuaded to purchase by intensive ad exposure ($S_0=1$). Under this positive association, individuals with high intrinsic purchase intent exhibit greater responsiveness to intensive advertising compared to those with low intrinsic intent. We therefore examine the sensitivity parameter $\eta$ ranging from 0 (conditional independence) to 3.5 (strong positive association).%, corresponding to $\theta(X) = \exp(\eta)$ varying from 1 (independence) to approximately 33.12 (strong positive dependence).

 \begin{figure}[t]
    \centering
    \includegraphics[width=0.8\linewidth]{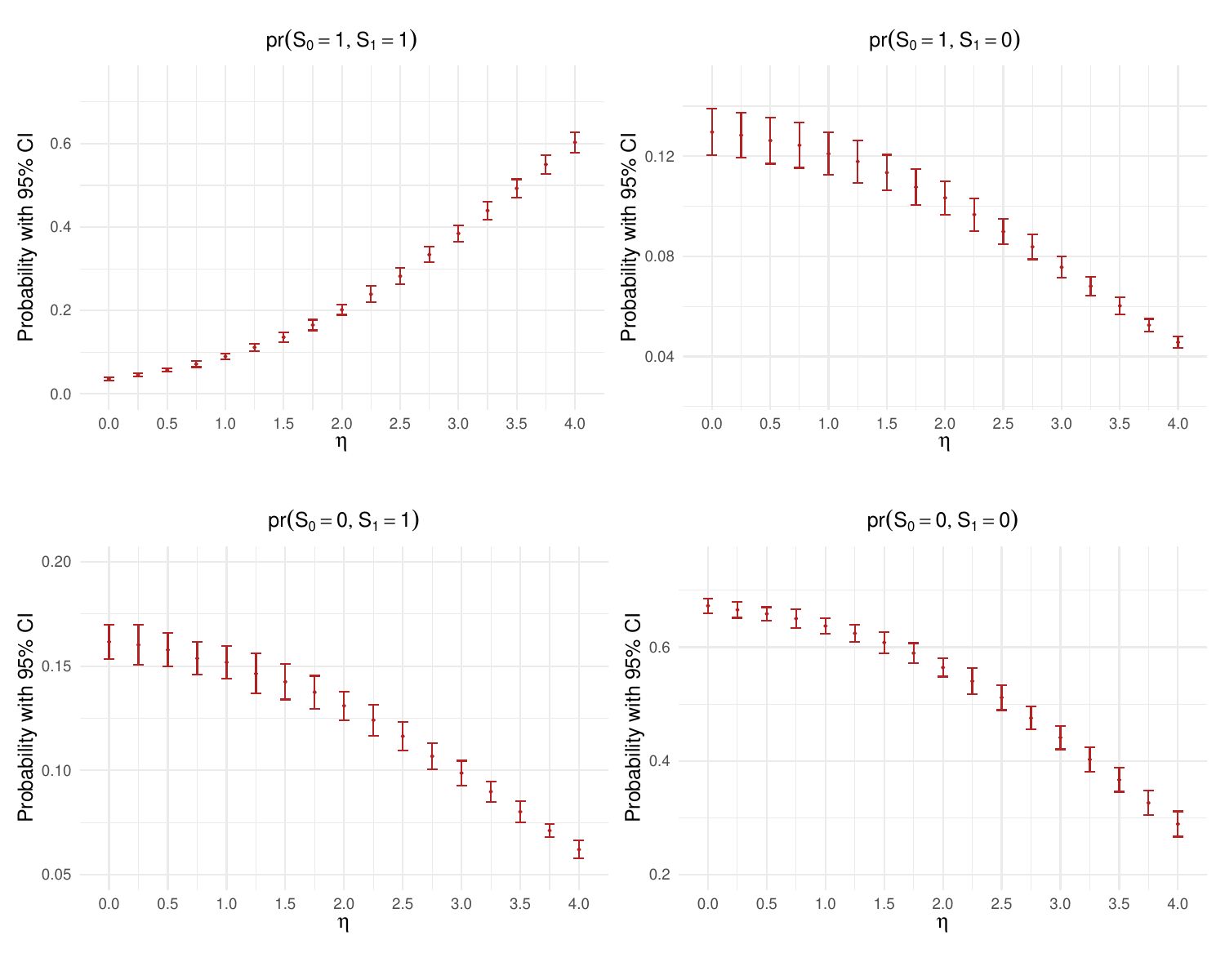}
\caption{Sensitivity analysis of the proportions of the principal strata.}
    \label{fig:enter-label-prob}
\end{figure}
 
To model the conditional outcome means ${\mathcal{L}}_{z,s_0s_1}(X)$, we treat the product's target gender ($X_3$) as the binary variable $A$ in Assumption~\ref{assumption:NoInteraction}, with the remaining covariates as $C$. We assume that gender's effect on revenue $Y$ does not interact with other covariates $C$, the effects of device   type, brand, or age on revenue remain consistent across gender groups. This no-interaction assumption is plausible in our setting and ensures Assumption~\ref{assumption:NoInteraction} is satisfied. We therefore adopt the separable model in equation~\eqref{eq:exp-models} for estimation.
%\bcol{[Discuss]}
% \bcol{[Pay more attention to this paragraph.]}

Figure~\ref{fig:enter-label-realdata} presents the estimates of ${\mathcal{L}}_{z,s_0s_1}$ under varying values of the sensitivity parameter $\eta$ in \eqref{sensitivity:logistic} over 200 bootstrap resamples. Additional numerical results are provided in Section~\ref{sec:addition-results-application} of Supplementary Material. Across all panels, the point estimates of ${\mathcal{L}}_{z,s_0s_1}$ remain relatively stable as $\eta$ varies, and the corresponding 95\% confidence intervals almost always exclude zero, suggesting that expected revenue remains robustly positive.  However, the behavior of confidence interval width varies across different strata: for ${\mathcal{L}}_{1,01}$ and ${\mathcal{L}}_{1,11}$, the intervals widen substantially as $\eta$ increases, reflecting greater uncertainty as conditional dependence strengthens; in contrast, for ${\mathcal{L}}_{0,10}$ and ${\mathcal{L}}_{0,11}$, the confidence intervals remain relatively stable across different $\eta$. %In particular, as the identification assumption is gradually relaxed (i.e., as $\lvert \eta \rvert$ increases), the uncertainty in the estimates increases accordingly.

% In summary, while the direction of the effect (positive expected revenue) remains unchanged, the strength and certainty of the inference depend heavily on the validity of the identification assumptions. These findings highlight the importance of sensitivity analysis in causal inference, especially in complex environments such as digital advertising where unmeasured confounding and user heterogeneity are prevalent.

% These results reveal significant heterogeneity across strata. For the always buyers ($G = 11$, lower panels), expected revenue is consistently high but notably higher under control than under treatment. This pattern naturally reflects their high intrinsic engagement: they actively seek and purchase products without ads and may naturally exhibit higher revenue. The persuadable subgroup ($G = 01$, upper-right) exhibits the relative strong treatment effect, with revenue rising from near zero under control to 50--250 under treatment, making advertising crucial for this group. For the discouraged subgroup ($G = 10$, upper-left), control revenue ranges from 50--250, but treatment effects remain highly uncertain, suggesting ads may suppress purchases though estimates are imprecise.

These results reveal the expected outcomes for each stratum under their observable treatment assignments. For always buyers ($G = 11$, lower panels), expected revenue ranges from approximately 50--150 under control and 100--250 under treatment, showing consistently positive outcomes regardless of ad exposure. The persuadable subgroup ($G = 01$, upper-right) shows expected revenue of 100--200 under treatment. For the discouraged subgroup ($G = 10$, upper-left), expected revenue under control ranges from 50--150. Across all strata, as $\eta$ increases, confidence intervals for always buyers narrow substantially, while those for persuadables and discouraged buyers widen considerably. This pattern directly reflects the estimated strata  proportions discussed in the next paragraph: always-buyers increase from 5\% to over 60\%, while persuadables and discouraged buyers remain small (below 16\%) and gradually decline.

In Figure \ref{fig:enter-label-prob}, we present sensitivity analysis for the estimated strata proportions $\pi_{s_0s_1}$ under varying $\eta$  over 200 bootstrap resamples. %The results demonstrate that as $\eta$ increases from 0 to 4, always buyers rise from 5\% to 60\% and never buyers decline from 70\% to 30\%, while persuadables and discouraged remain relatively small (5--15\%).
 As positive dependence strengthens ($\eta$ increases from 0 to 4), the proportion of always buyers $\pi_{11}$ rises from 5\% to over 60\%, while never buyers $\pi_{00}$ decline from 65\% to 30\%. The persuadable $\pi_{01}$ and discouraged $\pi_{10}$ subgroups remain small (5--15\%) but show opposite trends: persuadables slightly increase while discouraged decrease.

% ===========================================
\subsection{{Estimation Results:} Policy Revenue} 
 \begin{figure}[t]
    \centering
    \includegraphics[width=0.995\linewidth]{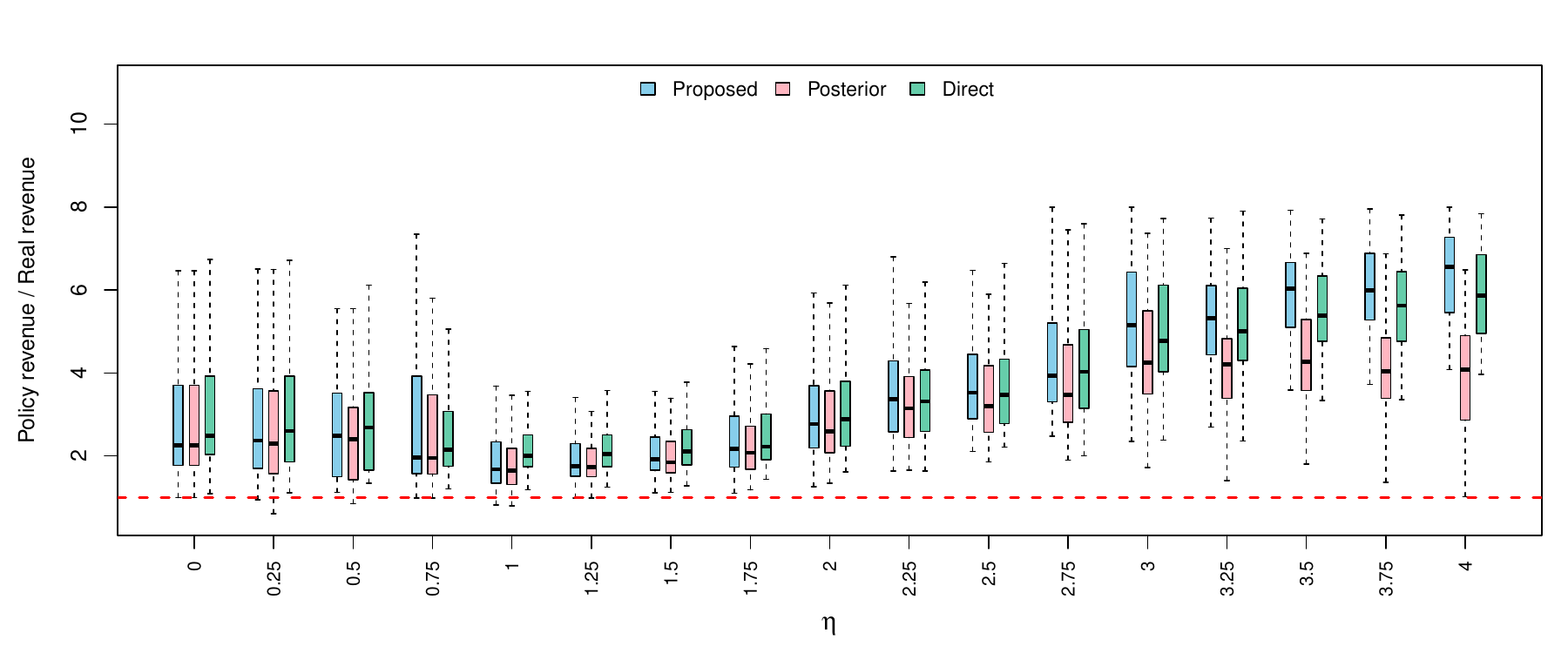}
\caption{Sensitivity analysis of revenue ratios for three methods. Blue: proposed method; Red: posterior mode method; Green: direct optimization.}
    \label{fig:enter-label-ratio}
\end{figure}

Following the approach in Section~\ref{sec:policy-studies}, we evaluate the learned pseudo-strata using policy revenue. Figure~\ref{fig:enter-label-ratio} compares the ratio of policy revenue to actual revenue across three methods over 200 bootstrap resamples: the proposed pseudo-strata learning method (blue boxes), the posterior mode rule (red boxes), and the direct policy optimization approach (green boxes), under different values of the sensitivity parameter $\eta$ in \eqref{sensitivity:logistic}. The horizontal dashed red line at 1 represents the benchmark where policy revenue equals actual revenue.

When $\eta <1$, all three methods perform comparably, with policy revenue ratios between 1.5 and 4. However, the results show substantial instability due to challenges in estimating principal strata proportions. As shown in Figure \ref{fig:enter-label-prob}, never buyers comprise approximately 65\% of the population when $\eta <1$, while always buyers, discouraged buyers, and persuadable buyers together account for only about 30\%. This small proportion of responsive individuals leads to high estimation variance. 
As $\eta$ increases from 0 to  2.25, the three methods begin to differ. At $\eta = 2.25$, the proposed method shows clear advantages, coinciding with a more balanced population structure: always buyers increase to roughly 25\% while never buyers decrease to about 55\%, leading to more stable estimation overall. 
For $\eta >3$, the proposed method outperforms the other two methods.  However, the results again become unstable in this region, likely because discouraged buyers and persuadable buyers together account for only about 10\% of the population. Nonetheless, the proposed method maintains competitive performance across all scenarios. This aligns with our simulation findings: the proposed method's explicit modeling of the latent strata enables better policy recommendations, while the direct optimization and posterior mode approaches cannot fully exploit this  structure.

% However, under the increasingly positive dependence structure, the relative performance of the three methods shifts dramatically. The proposed method begins to outperform both comparison approaches, with the gap widening as $\eta$ increases. This pattern aligns with our empirical predictions, as the proposed method explicitly accounts for the positive association between potential response {\red type}s as specified in Assumption~\ref{assump:odds}. At $\eta = 1.5$ (moderate positive association, $\theta(X) \approx 4.48$), revenue ratios are comparable across methods (around 2--2.5), but by $\eta = 2$, the proposed method achieves ratios around 3--4.5 compared to 2.5--3.5 for the alternatives. For $\eta > 2$ (very strong positive association, $\theta(X) > 7.39$), the proposed method demonstrates substantial advantages, with revenue ratios rising to 4--8 while the comparison methods achieve 3--6.

% In practice, when the goal is purely revenue maximization and the true dependence structure is unknown, we recommend comparing both the proposed pseudo-{\red strata} learning and direct optimization methods through sensitivity analysis and selecting the better performer. However, the proposed method with learned {\red strata}  $\tilde{G}$ becomes essential when prior information about $\theta(X)$ is available, or when understanding latent heterogeneity is important for exploring and targeting distinct user segments with tailored interventions.

% ==========================================
\section{Discussion}
\label{sec: disc}
In this paper, we propose a novel method for inferring individuals' strata, defined by the joint value of the potential outcomes of a post-treatment response variable. 
The core idea is to incorporate an outcome variable observed after the principal strata to guide the learning process, while also accounting for the {strata} probabilities within a Bayesian decision framework.   
While the learned pseudo-strata may not perfectly match the true latent strata, our method incorporates misclassification rewards into the learning objective, ensuring that even imperfect pseudo-strata lead to effective treatment policies and substantial revenue gains in practice.

Future research directions include relaxing the current identification assumptions and adopting more flexible model structures to enhance the method's applicability across a wider range of scenarios.  The exploration of these issues is beyond the scope of this paper, and we consider them important avenues for future research.

\section*{Acknowledgement}
 Peng Wu is the corresponding author. The authors thank Prof. Qingyuan Zhao and Dr. Yue Zhang for their valuable discussions.

\section*{Supplementary Material}
\label{SM}
The Supplementary Material includes all technical proofs,  {along with additional results for the simulation and  application.} 

\bibliographystyle{chicago}
\bibliography{paper-ref}

@article{Wu-etal-2025-Harm,
	Author = {Peng Wu and Peng Ding and Zhi Geng and Yue Liu},
	Date-Added = {2021-01-18 08:10:26 +0000},
	Date-Modified = {2021-01-18 08:11:10 +0000},
	Journal = {arXiv preprint arXiv:2402.10537},
	Title = {Quantifying Individual Risk for Binary Outcome},
	Year = {2025}}

@article{Ding:2011,
	title={Identifiability and estimation of causal effects by principal stratification with outcomes truncated by death},
	author={Ding, Peng and Geng, Zhi and Yan, Wei and Zhou, Xiao-Hua},
	journal={Journal of the American Statistical Association},
	volume={106},
	number={496},
	pages={1578--1591},
	year={2011},
	publisher={Taylor \& Francis}
}

@article{Zhang:2009,
	title={Likelihood-based analysis of causal effects of job-training programs using principal stratification},
	author={Zhang, Junni L and Rubin, Donald B and Mealli, Fabrizia},
	journal={Journal of the American Statistical Association},
	volume={104},
	number={485},
	pages={166--176},
	year={2009},
	publisher={Taylor \& Francis}
}

@article{Hansen:1982,
	title={Large sample properties of generalized method of moments estimators},
	author={Hansen, Lars Peter},
	journal={Econometrica},
	volume={50},
	pages={1029--1054},
	year={1982},
	publisher={JSTOR}
}

@article{Wang2017iv,
	title={Identification and estimation of causal effects with outcomes truncated by death},
	author={Wang, Linbo and Zhou, Xiao-Hua and Richardson, Thomas S},
	journal={Biometrika},
	volume={104},
	number={3},
	pages={597--612},
	year={2017},
	publisher={Oxford University Press}
}

@article{Luo-multiarm-2021,
	author = {Luo, Shanshan and Li, Wei and He, Yangbo},
	title = {Causal inference with outcomes truncated by death in multiarm studies},
	journal = {Biometrics}, 
	pages = {in press},
	year = {2021},
	keywords = {bounds, developmental toxicity, estimation, multiarm study, truncation-by-death},
	doi = {https://doi.org/10.1111/biom.13554},
	url = {https://onlinelibrary.wiley.com/doi/abs/10.1111/biom.13554},
	eprint = {https://onlinelibrary.wiley.com/doi/pdf/10.1111/biom.13554},
	abstract = {Abstract It is challenging to evaluate causal effects when the outcomes of interest suffer from truncation-by-death in many clinical studies; that is, outcomes cannot be observed if patients die before the time of measurement. To address this problem, it is common to consider average treatment effects by principal stratification, for which, the identifiability results and estimation methods with a binary treatment have been established in previous literature. However, in multiarm studies with more than two treatment options, estimation of causal effects becomes more complicated and requires additional techniques. In this article, we consider identification, estimation, and bounds of causal effects with multivalued ordinal treatments and the outcomes subject to truncation-by-death. We define causal parameters of interest in this setting and show that they are identifiable either using some auxiliary variable or based on linear model assumption. We then propose a semiparametric method for estimating the causal parameters and derive their asymptotic results. When the identification conditions are invalid, we derive sharp bounds of the causal effects by use of covariates adjustment. Simulation studies show good performance of the proposed estimator. We use the estimator to analyze the effects of a four-level chronic toxin on fetal developmental outcomes such as birth weight in rats and mice, with data from a developmental toxicity trial conducted by the National Toxicology Program. Data analyses demonstrate that a high dose of the toxin significantly reduces the weights of pups.}
}

@article{Wu-etal2025,
  title={Safe Individualized Treatment Rules with Controllable Harm Rates},
  author={Peng Wu and Qing Jiang and Shanshan Luo and  Zhi Geng},
  journal={arXiv preprint arXiv:2505.05308},
  year={2025}
}

@article{Audibert-etal-2007,
	author = {Jean-Yves Audibert and Alexandre B. Tsybakov},
	date-added = {2020-08-09 01:02:30 +0000},
	date-modified = {2020-08-11 11:11:26 +0000},
	journal = {The Annals of Statistics},
	pages = {608-633},
	title = {Fast learning rates for plug-in classifiers},
	volume = {35},
         number = {2}, 
	year = {2007}}

@article{Kennedy-2019,
	author = {Edward H. Kennedy},
	date-added = {2020-08-09 01:02:30 +0000},
	date-modified = {2020-08-11 11:11:26 +0000},
	journal = {The Annals of Statistics},
	pages = {645-656},
	title = {Nonparametric Causal Effects Based on Incremental Propensity Score Interventions},
	volume = {114},
         number = {526}, 
	year = {2019}}

@article{Luedtke-etal-2016,
	author = {Alexander R. Luedtke and Mark J. van der Laan},
	date-added = {2020-08-09 01:02:30 +0000},
	date-modified = {2020-08-11 11:11:26 +0000},
	journal = {The Annals of Statistics},
	pages = {713-742},
	title = {Statistical inference for the mean outcome under a possibly non-unique optimal treatment strategy},
	volume = {44},
         number = {2}, 
	year = {2016}}

@article{dawid2017probability,
  title={The probability of causation},
  author={Dawid, A Philip and Musio, Monica and Murtas, Rossella},
  journal={Law, Probability and Risk},
  volume={16},
  number={4},
  pages={163--179},
  year={2017},
  publisher={Oxford University Press}
}

@incollection{pearl2022direct,
  title={Direct and indirect effects},
  author={Pearl, Judea},
  booktitle={Probabilistic and causal inference: the works of Judea Pearl},
  pages={373--392},
  year={2022}
}

@article{robins2010alternative,
  title={Alternative graphical causal models and the identification of direct effects},
  author={Robins, James M and Richardson, Thomas S},
  journal={Causality and psychopathology: Finding the determinants of disorders and their cures},
  volume={84},
  pages={103--158},
  year={2010},
  publisher={Oxford University Press Oxford, UK}
}

@article{KUROKIandCAI,
author = {Manabu Kuroki and Zhihong Cai},
title = {Statistical Analysis of ‘Probabilities of Causation’ Using Co-variate Information},
journal = {Scandinavian Journal of Statistics},
volume = {38},
number = {3},
pages = {564-577},
keywords = {causal effect, identification, monotonicity, potential outcome},
doi = {https://doi.org/10.1111/j.1467-9469.2011.00730.x},
url = {https://onlinelibrary.wiley.com/doi/abs/10.1111/j.1467-9469.2011.00730.x},
eprint = {https://onlinelibrary.wiley.com/doi/pdf/10.1111/j.1467-9469.2011.00730.x},
abstract = {Abstract.  This article deals with two problems concering the probabilities of causation defined by Pearl (Causality: models, reasoning, and inference, 2nd edn, 2009, Cambridge University Press, New York) namely, the probability that one observed event was a necessary (or sufficient, or both) cause of another; one is to derive new bounds, and the other is to provide the covariate selection criteria. Tian \& Pearl (Ann. Math. Artif. Intell., 28, 2000, 287–313) showed how to bound the probabilities of causation using information from experimental and observational studies, with minimal assumptions about the data-generating process, and identifiable conditions for these probabilities. In this article, we derive narrower bounds using covariate information that is available from those studies. In addition, we propose the conditional monotonicity assumption so as to further narrow the bounds. Moreover, we discuss the covariate selection problem from the viewpoint of the estimation accuracy, and show that selecting a covariate that has a direct effect on an outcome variable cannot always improve the estimation accuracy, which is contrary to the situation in linear regression models. These results provide more accurate information for public policy, legal determination of responsibility and personal decision making.},
year = {2011}
}

@article{tian2000probabilities
,
  title={Probabilities of causation: Bounds and identification},
  author={Tian, Jin and Pearl, Judea},
  journal={Annals of Mathematics and Artificial Intelligence},
  volume={28},
  number={1},
  pages={287--313},
  year={2000},
  publisher={Springer}
}

@article{Zhang2016np,
 ISSN = {13697412, 14679868},
 URL = {http://www.jstor.org/stable/24775357},
 abstract = {The estimation of average treatment effects based on observational data is extremely important in practice and has been studied by generations of statisticians under different frameworks. Existing globally efficient estimators require non-parametric estimation of a propensity score function, an outcome regression function or both, but their performance can be poor in practical sample sizes. Without explicitly estimating either function, we consider a wide class of calibration weights constructed to attain an exact three-way balance of the moments of observed covariates among the treated, the control and the combined group. The wide class includes exponential tilting, empirical likelihood and generalized regression as important special cases, and extends survey calibration estimators to different statistical problems and with important distinctions. Global semiparametric efficiency for the estimation of average treatment effects is established for this general class of calibration estimators. The results show that efficiency can be achieved by solely balancing the covariate distributions without resorting to direct estimation of the propensity score or outcome regression function. We also propose a consistent estimator for the efficient asymptotic variance, which does not involve additional functional estimation of either the propensity score or the outcome regression functions. The variance estimator proposed outperforms existing estimators that require a direct approximation of the efficient influence function.},
 author = {Kwun Chuen Gary Chan and Sheung Chi Phillip Yam and Zheng Zhang},
 journal = {Journal of the Royal Statistical Society. Series B (Statistical Methodology)},
 number = {3},
 pages = {673--700},
 publisher = {[Royal Statistical Society, Wiley]},
 title = {Globally efficient non-parametric inference of average treatment effects by empirical balancing calibration weighting},
 urldate = {2025-07-14},
 volume = {78},
 year = {2016}
}

@article{2018Who,
  title={Who should be Treated? Empirical Welfare Maximization Methods for Treatment Choice},
  author={ Kitagawa, T.  and  Tetenov, A. },
  journal={Econometrica},
  volume={86},
  number={2},
  year={2018},
  pages={591--616}
}

@book{pearl2009causality,
  title={Causality},
  author={Pearl, Judea},
  year={2009},
  publisher={Cambridge university press}
}

@article{rosenbaum1983central,
  title={The central role of the propensity score in observational studies for causal effects},
  author={Rosenbaum, Paul R and Rubin, Donald B},
  journal={Biometrika},
  volume={70},
  number={1},
  pages={41--55},
  year={1983},
  publisher={Oxford University Press}
}

@inproceedings{jeunen2022consequences,
  title={CONSEQUENCES—Causality, Counterfactuals and Sequential Decision-Making for Recommender Systems},
  author={Jeunen, Olivier and Joachims, Thorsten and Oosterhuis, Harrie and Saito, Yuta and Vasile, Flavian},
  booktitle={Proceedings of the 16th ACM Conference on Recommender Systems},
  pages={654--657},
  year={2022}
}

@article{roy2008modeling,
  title={Principal stratification with predictors of compliance for randomized trials with 2 active treatments},
  author={Roy, Jason and Hogan, Joseph W and Marcus, Bess H},
  journal={Biostatistics},
  volume={9},
  number={2},
  pages={277--289},
  year={2008},
  publisher={Oxford University Press}
}

@article{lee2010weight,
  title={Causal effects of treatments for informative missing data due to progression/death},
  author={Lee, Keunbaik and Daniels, Michael J and Sargent, Daniel J},
  journal={Journal of the American Statistical Association},
  volume={105},
  number={491},
  pages={912--929},
  year={2010},
  publisher={Taylor \& Francis}
}

@article{tallis2018reacting,
  title={Reacting to Variations in Product Demand: An Application for Conversion Rate (CR) Prediction in Sponsored Search},
  author={Tallis, Marcelo and Yadav, Pranjul},
  journal={arXiv preprint arXiv:1806.08211},
  year={2018}
}

@article{johnson2002applied,
  title={Applied multivariate statistical analysis},
  author={Johnson, Richard Arnold and Wichern, Dean W and others},
  year={2002},
  publisher={Prentice hall Upper Saddle River, NJ}
}

@article{Hirano2003,
author = {Hirano, Keisuke and Imbens, Guido W. and Ridder, Geert},
title = {Efficient Estimation of Average Treatment Effects Using the Estimated Propensity Score},
journal = {Econometrica},
volume = {71},
number = {4},
pages = {1161-1189},
keywords = {Propensity score, treatment effects, semiparametric efficiency, sieve estimator},
doi = {https://doi.org/10.1111/1468-0262.00442},
url = {https://onlinelibrary.wiley.com/doi/abs/10.1111/1468-0262.00442},
eprint = {https://onlinelibrary.wiley.com/doi/pdf/10.1111/1468-0262.00442},
abstract = {We are interested in estimating the average effect of a binary treatment on a scalar outcome. If assignment to the treatment is exogenous or unconfounded, that is, independent of the potential outcomes given covariates, biases associated with simple treatment-control average comparisons can be removed by adjusting for differences in the covariates. Rosenbaum and Rubin (1983) show that adjusting solely for differences between treated and control units in the propensity score removes all biases associated with differences in covariates. Although adjusting for differences in the propensity score removes all the bias, this can come at the expense of efficiency, as shown by Hahn (1998), Heckman, Ichimura, and Todd (1998), and Robins, Mark, and Newey (1992). We show that weighting by the inverse of a nonparametric estimate of the propensity score, rather than the true propensity score, leads to an efficient estimate of the average treatment effect. We provide intuition for this result by showing that this estimator can be interpreted as an empirical likelihood estimator that efficiently incorporates the information about the propensity score.},
year = {2003}
}

@article{AiChen2003,
 ISSN = {00129682, 14680262},
 URL = {http://www.jstor.org/stable/1555539},
 author = {Chunrong Ai and Xiaohong Chen},
 journal = {Econometrica},
 number = {6},
 pages = {1795--1843},
 publisher = {[Wiley, Econometric Society]},
 title = {Efficient Estimation of Models with Conditional Moment Restrictions Containing Unknown Functions},
 urldate = {2025-07-13},
 volume = {71},
 year = {2003}
}

@article{ChenPouzo2015,
author = {Chen, Xiaohong and Pouzo, Demian},
title = {Sieve Wald and QLR Inferences on Semi/Nonparametric Conditional Moment Models},
journal = {Econometrica},
volume = {83},
number = {3},
pages = {1013-1079},
keywords = {Nonlinear nonparametric instrumental variables, penalized sieve minimum distance, irregular functional, sieve variance estimators, sieve Wald, sieve quasi likelihood ratio, generalized residual bootstrap, local power, Wilks phenomenon},
doi = {https://doi.org/10.3982/ECTA10771},
url = {https://onlinelibrary.wiley.com/doi/abs/10.3982/ECTA10771},
eprint = {https://onlinelibrary.wiley.com/doi/pdf/10.3982/ECTA10771},
abstract = {This paper considers inference on functionals of semi/nonparametric conditional moment restrictions with possibly nonsmooth generalized residuals, which include all of the (nonlinear) nonparametric instrumental variables (IV) as special cases. These models are often ill-posed and hence it is difficult to verify whether a (possibly nonlinear) functional is root-n estimable or not. We provide computationally simple, unified inference procedures that are asymptotically valid regardless of whether a functional is root-n estimable or not. We establish the following new useful results: (1) the asymptotic normality of a plug-in penalized sieve minimum distance (PSMD) estimator of a (possibly nonlinear) functional; (2) the consistency of simple sieve variance estimators for the plug-in PSMD estimator, and hence the asymptotic chi-square distribution of the sieve Wald statistic; (3) the asymptotic chi-square distribution of an optimally weighted sieve quasi likelihood ratio (QLR) test under the null hypothesis; (4) the asymptotic tight distribution of a non-optimally weighted sieve QLR statistic under the null; (5) the consistency of generalized residual bootstrap sieve Wald and QLR tests; (6) local power properties of sieve Wald and QLR tests and of their bootstrap versions; (7) asymptotic properties of sieve Wald and SQLR for functionals of increasing dimension. Simulation studies and an empirical illustration of a nonparametric quantile IV regression are presented.},
year = {2015}
}

@article{Jiang2016,
	author = {Jiang, Zhichao and Ding, Peng and Geng, Zhi},
	title = {Principal causal effect identification and surrogate end point evaluation by multiple trials},
	journal = {Journal of the Royal Statistical Society: Series B (Statistical Methodology)},
	volume = {78},
	number = {4},
	pages = {829-848},
	keywords = {Causal inference, Causal necessity, Causal sufficiency, Clinical trial, Criterion for surrogate end points},
	doi = {https://doi.org/10.1111/rssb.12135},
	url = {https://rss.onlinelibrary.wiley.com/doi/abs/10.1111/rssb.12135},
	eprint = {https://rss.onlinelibrary.wiley.com/doi/pdf/10.1111/rssb.12135},
	abstract = {Summary Principal stratification is a causal framework to analyse randomized experiments with a post-treatment variable between the treatment and end point variables. Because the principal strata defined by the potential outcomes of the post-treatment variable are not observable, we generally cannot identify the causal effects within principal strata. Motivated by a real data set of phase III adjuvant colon cancer clinical trials, we propose approaches to identifying and estimating the principal causal effects via multiple trials. For the identifiability, we remove the commonly used exclusion restriction assumption by stipulating that the principal causal effects are homogeneous across these trials. To remove another commonly used monotonicity assumption, we give a necessary condition for the local identifiability, which requires at least three trials. Applying our approaches to the data from adjuvant colon cancer clinical trials, we find that the commonly used monotonicity assumption is untenable, and disease-free survival with 3-year follow-up is a valid surrogate end point for overall survival with 5-year follow-up, which satisfies both causal necessity and causal sufficiency. We also propose a sensitivity analysis approach based on Bayesian hierarchical models to investigate the effect of the deviation from the homogeneity assumption.},
	year = {2016}
}

@article{chen2012estimation,
  title={Estimation of nonparametric conditional moment models with possibly nonsmooth generalized residuals},
  author={Chen, Xiaohong and Pouzo, Demian},
  journal={Econometrica},
  volume={80},
  number={1},
  pages={277--321},
  year={2012},
  publisher={Wiley Online Library}
}

@article{lipsitz1991generalized,
  title={Generalized estimating equations for correlated binary data: using the odds ratio as a measure of association},
  author={Lipsitz, Stuart R and Laird, Nan M and Harrington, David P},
  journal={Biometrika},
  volume={78},
  number={1},
  pages={153--160},
  year={1991},
  publisher={Oxford University Press}
}

@article{Frangakis:2002,
	title={Principal stratification in causal inference},
	author={Frangakis, Constantine E and Rubin, Donald B},
	journal={Biometrics},
	volume={58},
	number={1},
	pages={21--29},
	year={2002},
	publisher={Wiley Online Library}
}

@article{Angrist:1996,
	title={Identification of causal effects using instrumental variables},
	author={Angrist, Joshua D and Imbens, Guido W and Rubin, Donald B},
	journal={Journal of the American Statistical Association},
	volume={91},
	number={434},
	pages={444--455},
	year={1996},
	publisher={Taylor \& Francis}
}

@article{Rubin2006,
author = {Donald B. Rubin},
title = {{Causal Inference Through Potential Outcomes and Principal Stratification: Application to Studies with “Censoring” Due to Death}},
volume = {21},
journal = {Statistical Science},
number = {3},
publisher = {Institute of Mathematical Statistics},
pages = {299 -- 309},
keywords = {missing data, quality of life, Rubin Causal Model, truncation due to death},
year = {2006},
doi = {10.1214/088342306000000114},
URL = {https://doi.org/10.1214/088342306000000114}
}

@article{ben2024policy,
  title={Policy learning with asymmetric counterfactual utilities},
  author={Ben-Michael, Eli and Imai, Kosuke and Jiang, Zhichao},
  journal={Journal of the American Statistical Association},
  volume={119},
  number={548},
  pages={3045--3058},
  year={2024},
  publisher={Taylor \& Francis}
}

@inproceedings{li2023trustworthy,
  title={Trustworthy policy learning under the counterfactual no-harm criterion},
  author={Li, Haoxuan and Zheng, Chunyuan and Cao, Yixiao and Geng, Zhi and Liu, Yue and Wu, Peng},
  booktitle={International Conference on Machine Learning},
  pages={20575--20598},
  year={2023},
  organization={PMLR}
}

@article{jiang2021identification,
  title={Identification of causal effects within principal strata using auxiliary variables},
  author={Jiang, Zhichao and Ding, Peng},
  journal={Statistical Science},
  volume={36},
  number={4},
  pages={493--508},
  year={2021},
  publisher={Institute of Mathematical Statistics}
}

@inproceedings{shingaki2021identification,
  title={Identification and estimation of joint probabilities of potential outcomes in observational studies with covariate information},
  author={Shingaki, Ryusei and Kuroki, Manabu},
  booktitle={Proceedings of the 35th International Conference on Neural Information Processing Systems},
  pages={26475--26486},
  year={2021}
}

@article{tong2025semiparametric,
  title={Semiparametric principal stratification analysis beyond monotonicity},
  author={Tong, Jiaqi and Kahan, Brennan and Harhay, Michael O and Li, Fan},
  journal={arXiv preprint arXiv:2501.17514},
  year={2025}
}

@article{ciocuanea2025sensitivity,
  title={Sensitivity analysis for the probability of benefit in randomized controlled trials with a binary treatment and a binary outcome},
  author={Cioc{\u{a}}nea-Teodorescu, Iuliana and Gabriel, Erin E and Sj{\"o}lander, Arvid},
  journal={Biostatistics},
  volume={26},
  number={1},
  pages={kxaf011},
  year={2025},
  publisher={Oxford University Press}
}

@article{zhang2013assessing,
  title={Assessing the heterogeneity of treatment effects via potential outcomes of individual patients},
  author={Zhang, Zhiwei and Wang, Chenguang and Nie, Lei and Soon, Guoxing},
  journal={Journal of the Royal Statistical Society Series C: Applied Statistics},
  volume={62},
  number={5},
  pages={687--704},
  year={2013},
  publisher={Oxford University Press}
}

\vspace{0.5cm}
\noindent

 \newpage

\renewcommand{\theproposition}{S\arabic{proposition}}
\renewcommand{\thetheorem}{S\arabic{theorem}}
\renewcommand{\theassumption}{S\arabic{assumption}}
\renewcommand{\thesection}{S\arabic{section}}
\renewcommand{\theequation}{S\arabic{equation}}
\renewcommand{\thelemma}{S\arabic{lemma}} 
\renewcommand{\thetable}{S\arabic{table}}  
\renewcommand{\thefigure}{S\arabic{figure}}

\renewcommand{\theHtheorem}{S\arabic{theorem}}
\renewcommand{\theHsection}{S\arabic{section}}
 
\setcounter{section}{0} 
 \begin{center}
     \huge{\bf Supplementary Material}
 \end{center}

The Supplementary Material includes all technical proofs,  {along with additional results for the simulation and  application.} 

% \section*{Appendix 2}
\section{The proof of Theorem \ref{thm:maxim-reward}}

 \begin{proof}
   For any pseudo-stratum $\tilde G  = \tilde{s}_0\tilde{s}_1$, there exists a corresponding region ${\tilde{D}}_{\tilde{s}_0\tilde{s}_1}$ in the covariate space.  
Let ${\tilde{D}} = \{{\tilde{D}}_{00}, {\tilde{D}}_{01}, {\tilde{D}}_{10}, {\tilde{D}}_{\PP}\}$ be the partition defined by the  pseudo-stratum $\tilde{G}$.

When the true stratum of an individual is $G = s_0s_1$, but the decision rule classifies the individual into $\tilde{G} = \tilde{s}_0\tilde{s}_1$ (i.e., the baseline covariates $X $ fall into region ${\tilde{D}} _{\tilde{s}_0\tilde{s}_1}$, with $\tilde{G} \ne G$), then a misclassification occurs. 
We then consider the misclassification probability of assigning a sample truly belonging to true stratum $G = s_0s_1$ to the pseudo-stratum $\tilde{G} = \tilde{s}_0\tilde{s}_1$ as $\Pr(\tilde{G} = \tilde{s}_0\tilde{s}_1 \mid G = s_0s_1 )$. Clearly, we have:
$$
\Pr(\tilde{G} = \tilde{s}_0\tilde{s}_1 \mid G = s_0s_1 ) = \int_{{\tilde{D}}_{\tilde{s}_0\tilde{s}_1}} f (X\mid G={s_0s_1}) \, dX,
$$where $f(X \mid G = s_0s_1)$ denotes the conditional density of the covariates $X$ given the true stratum  $G = s_0s_1$.
% \begin{align*}
%  \Pr(\tilde{G} = \tilde{s}_0\tilde{s}_1  \mid G = s_0s_1; {\tilde{D}})   &= \int_{{\tilde{D}}_{s_0s_1}} f_i(X)  dX \quad (j \ne i).  
% \end{align*}  
From  \eqref{eq:decision-rule} in the main text, we have: 
\begin{align*}
 V(\tilde{G} ) &= \sum_{  {s}_0 ,{s}_1\in\{0,1\} } \Pr(G =s_0s_1)\sum_{\tilde{s}_0 ,\tilde{s}_1\in\{0,1\} } \Pr(\tilde{G} = \tilde s_0 \tilde s_1 \mid G = s_0s_1)\mathcal{R}(\tilde{G} = \tilde s_0 \tilde s_1\mid G={s_0s_1})\\
 & = \sum_{  {s}_0 ,{s}_1\in\{0,1\} } \pr(G={s_0s_1}) \sum_{\tilde{s}_0 ,\tilde{s}_1\in\{0,1\} } \mathcal{R}(\tilde{G} = \tilde s_0 \tilde s_1\mid G={s_0s_1}) \int_{{\tilde{D}}_{\tilde s_0 \tilde s_1}} f(X\mid G={s_0s_1}) dX\\&
= \sum_{\tilde{s}_0 ,\tilde{s}_1\in\{0,1\} } \int_{{\tilde{D}}_{\tilde s_0 \tilde s_1}} \sum_{  {s}_0 ,{s}_1\in\{0,1\} } \pr(G={s_0s_1}) \mathcal{R}(\tilde{G} = \tilde s_0 \tilde s_1\mid G={s_0s_1}) f(X\mid G={s_0s_1}) dX\\&
= \sum_{\tilde{s}_0 ,\tilde{s}_1\in\{0,1\} } \int_{{\tilde{D}}_{\tilde s_0 \tilde s_1}} r_{\tilde s_0 \tilde s_1 }(X) dX,
\end{align*}
where\begin{align*}
    r_{\tilde s_0 \tilde s_1}(X)&=\sum_{  {s}_0 ,{s}_1\in\{0,1\} } \pr(G={s_0s_1}) \mathcal{R}(\tilde{G} = \tilde s_0 \tilde s_1\mid G={s_0s_1}) f(X\mid G={s_0s_1})  
    \\&=\sum_{  {s}_0 ,{s}_1\in\{0,1\} } \pr(G={s_0s_1}\mid X) \mathcal{R}(\tilde{G} = \tilde s_0 \tilde s_1\mid G={s_0s_1}) f(X)  
    \\&= h_{\tilde s_0 \tilde s_1}(X)f(X) ,
\end{align*}
and $$
h_{\tilde s_0\tilde s_1}(X) = \textstyle\sum_{ {s}_0, {s}_1\in\{0,1\}} 
\mathcal{R}(\tilde{s}_0\tilde{s}_1 \mid s_0s_1) \cdot 
\pi_{ {s}_0 {s}_1}(X).$$ 
Therefore, for any pseudo-stratum $\tilde{G}$ and its  partition ${\tilde{D}}= \{{\tilde{D}}_{00}, \dots,{\tilde{D}}_{\PP}\}$ of $\mathbb{R}^m$, the average misclassfication reward is:
\begin{align*}
  V(\tilde{G})= \sum_{\tilde{s}_0 ,\tilde{s}_1\in\{0,1\} } \int_{\tilde{D}_{\tilde s_0 \tilde s_1}} r_{\tilde s_0 \tilde s_1 }(X) dX= \sum_{\tilde{s}_0 ,\tilde{s}_1\in\{0,1\} } \int_{{\tilde{D}}_{\tilde s_0 \tilde s_1}} h_{\tilde s_0 \tilde s_1}(X)f(X)dX.
\end{align*}
Recall that the optimal decision region $D^\ast_{s_0s_1}$ is defined as   $$D_{s_0s_1} ^\ast= \left\{ X : h_{s_0s_1}(X) \geq h_{\bar{s}_0\bar{s}_1}(X) \quad \text{for all} \quad  \bar{s}_0, \bar{s}_1\in\{0,1\}  \right\} .$$ Thus, for any $\tilde{G}$, we have
\begin{align*}
    V(G^\ast ) -  V(\tilde{G})&= \sum_{ s_0, s_1\in\{0,1\} } \int_{D^\ast_{ s_0  s_1}} r_{ s_0  s_1 }(X) dX -   \sum_{\tilde s_0,\tilde s_1\in\{0,1\} } \int_{{\tilde{D}}_{\tilde s_0 \tilde s_1}} r_{\tilde s_0 \tilde s_1 }(X) dX\\
&=\sum_{ s_0, s_1\in\{0,1\} }  \sum_{\tilde s_0,\tilde s_1\in\{0,1\} }\int_{D^\ast_{ s_0  s_1}  \cap {\tilde{D}}_{\tilde s_0 \tilde s_1}} \{r_{ s_0  s_1}(X) - r_{\tilde s_0 \tilde s_1 }(X)\} dX\\
&=\sum_{ s_0, s_1\in\{0,1\} }  \sum_{\tilde s_0,\tilde s_1\in\{0,1\} }\int_{D^\ast_{ s_0  s_1}  \cap {\tilde{D}}_{\tilde s_0 \tilde s_1}} \{h_{ s_0  s_1}(X) - h_{\tilde s_0 \tilde s_1 }(X)\} f(X)dX.
\end{align*}
From the definition of $D_{  s_0  s_1}^\ast $, we have $h_{  s_0   s_1} (X) \geq h_{\tilde s_0 \tilde s_1 }(X)$ for all $X \in D^\ast_{  s_0 s_1} $. Hence,
\[
 V(G^\ast ) -  V(\tilde{G})\geq 0,
\]
which implies $
 V(G^\ast )  = \max_{\tilde{G}} V(\tilde{G}).$ 
Thus, $G^\ast$ is the optimal Bayesian decision rule.
 \end{proof}
\section{The proof of Theorem  \ref{thm:identify}}
\subsection{The identification of proportions of true strata}    The identification of the proportions of true strata has been established in several prior works \citep{zhang2013assessing,ciocuanea2025sensitivity,tong2025semiparametric,Wu-etal-2025-Harm}. For completeness, we provide a proof in Lemma \ref{lem:prop-ps}.
\begin{lemma}\label{lem:prop-ps}
    Under Assumptions \ref{assumption:RCT} and \ref{assump:odds}, the proportions of true strata are identifiable.
\end{lemma}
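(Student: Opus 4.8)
\textbf{Proof plan for Lemma~\ref{lem:prop-ps}.}

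The plan is to express the four stratum probabilities $\pi_{s_0s_1}(X)$ as functions of the observed quantities $e_0(X)=\pr(S=1\mid Z=0,X)$ and $e_1(X)=\pr(S=1\mid Z=1,X)$ together with the known odds ratio $\theta(X)$. First I would use Assumption~\ref{assumption:RCT} (ignorability) to identify $e_0(X)$ and $e_1(X)$ from the observed data, and to rewrite them in terms of the latent strata: since $S_0=\mathbb{I}(G\in\{11,10\})$ and $S_1=\mathbb{I}(G\in\{11,01\})$, ignorability gives $e_0(X)=\pr(S_0=1\mid X)=\pi_{11}(X)+\pi_{10}(X)$ and $e_1(X)=\pr(S_1=1\mid X)=\pi_{11}(X)+\pi_{01}(X)$. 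Together with the normalization $\pi_{11}(X)+\pi_{10}(X)+\pi_{01}(X)+\pi_{00}(X)=1$, this yields three linear equations in the four unknowns.

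Next I would bring in Assumption~\ref{assump:odds}, which supplies the fourth (nonlinear) equation $\theta(X)=\pi_{00}(X)\pi_{11}(X)/\{\pi_{10}(X)\pi_{01}(X)\}$ with $\theta(X)$ known. Substituting the three linear relations — writing $\pi_{10}(X)=e_0(X)-\pi_{11}(X)$, $\pi_{01}(X)=e_1(X)-\pi_{11}(X)$, and $\pi_{00}(X)=1-e_0(X)-e_1(X)+\pi_{11}(X)$ — into the odds-ratio identity produces a single equation in $\pi_{11}(X)$. When $\theta(X)=1$ this is linear and gives $\pi_{11}(X)=e_0(X)e_1(X)$ directly; when $\theta(X)\neq 1$ it is a quadratic in $\pi_{11}(X)$, whose coefficients depend only on the identified quantities $e_0(X),e_1(X),\theta(X)$. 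I would then argue that exactly one of the two roots lies in the feasible interval $[\max\{0,e_0(X)+e_1(X)-1\},\,\min\{e_0(X),e_1(X)\}]$ required for all four probabilities to be nonnegative, so $\pi_{11}(X)$ — and hence all $\pi_{s_0s_1}(X)$ — is identified. Finally, averaging over the distribution of $X$ (itself identified from the observed data) identifies the marginal proportions $\pi_{s_0s_1}=\E\{\pi_{s_0s_1}(X)\}$.

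The main obstacle is the root-selection step: one must verify that the quadratic always has a unique admissible root in the feasible region, ruling out the degenerate cases (e.g.\ boundary values of $e_0(X),e_1(X)$, or $\theta(X)\to\infty$ corresponding to monotonicity). This is handled by a sign analysis of the quadratic at the endpoints of the feasible interval, using that the product of the roots and the leading coefficient have determined signs; the monotonicity limits $\theta(X)\to\infty$ are treated separately and correspond to $\pi_{10}(X)=0$ or $\pi_{01}(X)=0$, recovering the classical identification. Everything else is routine algebra, and the conclusion for marginal proportions follows immediately by taking expectations.
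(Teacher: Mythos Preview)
Your proposal is correct and follows essentially the same approach as the paper: both use ignorability to express $e_0(X)$ and $e_1(X)$ as the marginals $\pi_{11}(X)+\pi_{10}(X)$ and $\pi_{11}(X)+\pi_{01}(X)$, then bring in the known odds ratio $\theta(X)$ to solve for $\pi_{11}(X)$ (linear when $\theta(X)=1$, quadratic otherwise), and back out the remaining three probabilities. The paper simply cites the closed-form solution from \citet{lipsitz1991generalized} rather than deriving the quadratic and performing the root-selection analysis you describe, but the underlying argument is identical.
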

\begin{proof}
Under the treatment ignorability assumption (Assumption~\ref{assumption:RCT}), for any given odds function $\theta(X)$ in Assumption \ref{assump:odds}, the proportion of individuals in the always buyer stratum $G=11$ can be identified using Equation (6) in \citet{lipsitz1991generalized} as follows: 
\begin{equation}
\label{eq:iden-pi11}
\pi_{\PP}(X) =\left\{\begin{array}{ll}
\dfrac{ \theta (X) + \{1 - \theta (X) \}\{e_0(X) + e_1(X)\} -\theta (X)  \sqrt{\delta(X)} }
{ 1 - \theta (X)  },& \theta (X) \neq 1,\\\addlinespace[1mm]
e_0(X)  e_1(X),& \theta (X) = 1.
\end{array}\right. 
\end{equation}
where $ 
\delta(X) = \left[1 + \{\theta^{-1}(X) - 1\}\{e_0(X) + e_1(X)\}\right]^2 
- 4\theta(X)\{\theta^{-1}(X) - 1\}e_0(X)e_1(X).$ 

Once $\pi_{\PP}(X)$ is identified in \eqref{eq:iden-pi11}, the remaining proportions can be obtained as:
\begin{equation*}
    % \label{eq:prop-eq}
    \begin{gathered}
    \pi_{\CO}(X)=e_1(X)-\pi_\AT(X),~ \pi_{\DE}(X)=e_0(X)-\pi_\AT(X),\\ \pi_{\NT}(X)=1-\pi_\AT(X)-\pi_\CO(X)-\pi_{\DE}(X).
\end{gathered}
\end{equation*}
\end{proof}
\subsection{The identification of conditional outcome expectations} \begin{lemma}\label{lem:prop-pce}
   Under Assumptions \ref{assumption:RCT}, {\red \ref{assumption:NoInteraction}}, \ref{assump:odds}, and condition \eqref{eq:iden-exp},  the conditional outcome mean ${\mathcal{L}}_{z,s_0s_1}(X)$ is identifiable for $(z,s_0s_1)\in\{(1,\PP),(1,\PN),(0,\PP),(0,\NP)\}$.
\end{lemma}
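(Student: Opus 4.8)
The plan is to establish Lemma \ref{lem:prop-pce} by reducing the identification of each ${\mathcal{L}}_{z,s_0s_1}(X)$ to a linear system in observable moments, with the rank condition \eqref{eq:iden-exp} guaranteeing a unique solution. First I would invoke Lemma \ref{lem:prop-ps}, which under Assumptions \ref{assumption:RCT} and \ref{assump:odds} gives identifiability of all $\pi_{s_0s_1}(X)$ (hence of $e_0(X), e_1(X)$ as well, which are directly identified from the observed data). The key observable quantity is the conditional expectation of the outcome among those who purchase under a given treatment arm. Fix $z=1$: by the consistency assumption and Assumption \ref{assumption:RCT} (ignorability), $\E(SY \mid Z=1, X) = \E(S_1 Y_1 \mid X)$, and decomposing over the latent strata that have $S_1=1$ (namely $G\in\{11,01\}$) yields
\[
\E(SY \mid Z=1, X) = \pi_{\PP}(X)\,{\mathcal{L}}_{1,\PP}(X) + \pi_{\PN}(X)\,{\mathcal{L}}_{1,\PN}(X),
\]
using ${\mathcal{L}}_{1,10}(X)={\mathcal{L}}_{1,00}(X)=0$. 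Symmetrically, for $z=0$,
\[
\E(SY \mid Z=0, X) = \pi_{\PP}(X)\,{\mathcal{L}}_{0,\PP}(X) + \pi_{\NP}(X)\,{\mathcal{L}}_{0,\NP}(X),
\]
since the strata with $S_0=1$ are $G\in\{11,10\}$ and ${\mathcal{L}}_{0,01}(X)={\mathcal{L}}_{0,00}(X)=0$. Each of these is a single scalar equation in two unknown functions, so it does not determine them individually without further structure.

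The additive structure in Assumption \ref{assumption:NoInteraction} is what breaks the tie. Writing ${\mathcal{L}}_{1,\PP}(X) = \mu_{1,\PP,a}^\T q(A) + {\mathcal{L}}_{1,\PP}^\ast(C)$ and ${\mathcal{L}}_{1,\PN}(X) = \mu_{1,\PN,a}^\T q(A) + {\mathcal{L}}_{1,\PN}^\ast(C)$ and substituting into the $z=1$ moment equation gives
\[
\E(SY \mid Z=1, X) = {\mathcal{L}}_{1,\PP}^\ast(C)\pi_{\PP}(X) + {\mathcal{L}}_{1,\PN}^\ast(C)\pi_{\PN}(X) + \mu_{1,\PP,a}^\T q(A)\pi_{\PP}(X) + \mu_{1,\PN,a}^\T q(A)\pi_{\PN}(X).
\]
Now I would argue uniqueness of the representation: suppose two sets of functions $\bigl({\mathcal{L}}_{1,\PP}^\ast, {\mathcal{L}}_{1,\PN}^\ast, \mu_{1,\PP,a}, \mu_{1,\PN,a}\bigr)$ produce the same left-hand side. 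Taking the difference, with $b_1(C) = {\mathcal{L}}_{1,\PP}^{\ast(1)}(C) - {\mathcal{L}}_{1,\PP}^{\ast(2)}(C)$, etc., and $b_3, b_4$ the corresponding differences of the $\mu$ vectors, the difference equals
\[
b_1(C)\pi_{\PP}(X) + b_2(C)\pi_{\PN}(X) + b_3^\T q(A)\pi_{\PP}(X) + b_4^\T q(A)\pi_{\PN}(X) = 0
\]
for (almost) all $X=(C,A)$. Condition \eqref{eq:iden-exp} --- applied conditionally on $C$, treating $b_1(C), b_2(C)$ as the scalars $b_1, b_2$ and $b_3, b_4$ as the vector coefficients --- forces all four differences to vanish, so the decomposition is unique and hence ${\mathcal{L}}_{1,\PP}(X)$ and ${\mathcal{L}}_{1,\PN}(X)$ are identified. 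The same argument with the second display of \eqref{eq:iden-exp} handles ${\mathcal{L}}_{0,\PP}(X)$ and ${\mathcal{L}}_{0,\NP}(X)$ from the $z=0$ moment equation.

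The main obstacle is stating the linear-independence step cleanly: condition \eqref{eq:iden-exp} as written in the theorem quantifies over real constants $b_1,\dots,b_4$, whereas the difference of two additive representations produces coefficients $b_1,b_2$ that are functions of $C$. I would need to either (i) read \eqref{eq:iden-exp} as holding pointwise in $C$ (i.e., for each fixed value of $C$, the four functions of $A$ given by $\{\pi_{\PP}(C,\cdot), \pi_{\PN}(C,\cdot), q(\cdot)\pi_{\PP}(C,\cdot), q(\cdot)\pi_{\PN}(C,\cdot)\}$ are linearly independent in $A$), which is the natural reading that makes the nonparametric-in-$C$, parametric-in-$A$ structure identifiable, or (ii) impose a mild measurability/integrability condition and argue via a completeness-type statement. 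A secondary, purely bookkeeping point is to justify the strata decompositions of $\E(SY\mid Z=z,X)$ carefully --- this follows from consistency plus Assumption \ref{assumption:RCT} by conditioning on $G$ inside the expectation and using the structural zeros ${\mathcal{L}}_{0,01}={\mathcal{L}}_{1,10}={\mathcal{L}}_{z,00}=0$. Finally, part (b) of Theorem \ref{thm:identify} --- identifiability of the Bayesian decision rule \eqref{eq:otr-bayesian} --- then follows immediately, since $\rho(\tilde s_0\tilde s_1, X)$ is a known function of the now-identified ${\mathcal{L}}_{z,s_0s_1}(X)$ and the known costs, $\mathcal{R}(\tilde s_0\tilde s_1\mid s_0s_1)$ in \eqref{eq:misclassification-reward} is then identified via the law of iterated expectations using the identified $\pi_{s_0s_1}(X)$ and conditional densities, and $h_{\tilde s_0\tilde s_1}(X)$ is a continuous function of these identified objects.
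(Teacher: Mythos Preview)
Your proposal is correct and follows essentially the same route as the paper: decompose the observable conditional mean $\E(Y\mid Z=z,S=1,X)$ (equivalently your $\E(SY\mid Z=z,X)$, which differs only by the identified factor $e_z(X)$) as a mixture over strata, substitute the additive model from Assumption~\ref{assumption:NoInteraction}, and invoke the linear-independence condition~\eqref{eq:iden-exp} to force uniqueness. The paper's proof makes this concrete by fixing $C=c$ and evaluating at $2p+2$ values of $A$ to produce a square linear system whose full rank is asserted from~\eqref{eq:iden-exp}; your observation that this really requires reading~\eqref{eq:iden-exp} pointwise in $C$ is exactly the implicit step the paper takes without comment.
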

\begin{proof} We present a unified proof for the identification of the conditional outcome expectations, as stated in Theorem \ref{thm:identify}. For any covariate vector $X=(A,C)$, define
\[
\begin{aligned}
    \omega_1(x) = \pr(G = \PP \mid Z = 1, S = 1, x) = {\pi_{\PP}(x)}/{e_1(x)},\\
    \omega_0(x) = \pr(G = \PP \mid Z = 0, S = 1, x) = {\pi_{\PP}(x)}/{e_0(x)}.
\end{aligned}
\]
The numerator can be identified based on Lemma \ref{lem:prop-ps}, and the denominator is identifiable from the observed data under Assumptions~\ref{assumption:RCT}.  Let $x_j=(c,a_j)$ for $j=1,\ldots,2\cdot  p+2$, we thus have
\begin{equation}\label{eq:residual-form}
  \begin{aligned}
        \E (Y  &\mid Z=S=1,x_1) \\&=   \E (Y   \mid G=11,Z=S=1,x_1) \omega_1(x_1)  + \E (Y   \mid G=01,Z=S=1,x_1)  \{1- \omega_1(x_1) \}\\&=      {\mathcal{L}}_{{{1,\PP}}}(x_1)  \omega_1(x_1) +  {\mathcal{L}}_{{1,\PN}}(x_1) \{1- \omega_1(x_1) \}
        \\&=     \mu_{1,\PP,a}^\T q(a_1)\omega_1(x_1)+ {\mathcal{L}}_{1,\PP}^\ast (c)\omega_1(x_1)  +  \{\mu_{1,\PN,a}^\T q(a_1) + {\mathcal{L}}_{1,\PN}^\ast (c) \} \{1- \omega_1(x_1) \}    ,  
\\
    \vdots   \\
        \E (Y & \mid Z=S=1,x_{2p+2})\\&=   \E (Y   \mid G=11,Z=S=1,x_{2p+2}) \omega_1(x_{2p+2})  + \E (Y   \mid G=01,Z=S=1,x_{2p+2})  \{1- \omega_1(x_{2p+2}) \}\\ &=      {\mathcal{L}}_{{{1,\PP}}}(x_{2p+2})  \omega_1(x_{2p+2}) +  {\mathcal{L}}_{{1,\PN}}(x_{2p+2}) \{1- \omega_1(x_{2p+2}) \}
        \\&=     \mu_{1,\PP,a}^\T q(a_{2p+2})\omega_1(x_{2p+2})+ {\mathcal{L}}_{1,\PP}^\ast (c)\omega_1(x_{2p+2})  +  \{\mu_{1,\PN,a}^\T q(a_{2p+2}) + {\mathcal{L}}_{1,\PN}^\ast (c) \} \{1- \omega_1(x_{2p+2}) \}   .  
    \end{aligned}  
    \end{equation}
For any given $C=c$,  the above system can be rewritten as
\[
\begin{aligned}
    &\begin{pmatrix}
q^\T(a_1)\omega_1(x_1) & \omega_1(x_1)   &q^\T(a_1)\{1-\omega_1(x_1) \} &1-\omega_1(x_1) \\
q^\T(a_2)\omega_1(x_2) & \omega_1(x_2)   &q^\T(a_2)\{1-\omega_1(x_2) \} &1-\omega_2(x_2) \\
\vdots & \ldots   & \ldots& \vdots \\
q^\T(a_{2p+2})\omega_1(x_{2p+2}) & \omega_1(x_{2p+2})   &q^\T(a_{2p+2})\{1-\omega_1(x_{2p+2}) \} &1-\omega_{1}(x_{2p+2}) \\
\end{pmatrix}
\begin{Bmatrix}
 \mu_{1,\PP,a}\\
{\mathcal{L}}_{1,\PP}^\ast (c)\\
\mu_{1,\PN,a}\\
{\mathcal{L}}_{1,\PN}^\ast (c) 
\end{Bmatrix}\\&
=
\begin{Bmatrix}
\E(Y \mid Z= S=1, X=x_1) \\
\E(Y \mid Z= S=1, X=x_2) \\
\vdots\\
\E(Y \mid Z= S=1, X=x_{2p+2})
\end{Bmatrix}.
\end{aligned}
\]

As long as condition~\eqref{eq:iden-exp} is satisfied, the coefficient matrix has full rank, and the system of equations has a unique solution. Consequently, the parameters $\mu_{1,\PP,a}$, ${\mathcal{L}}_{1,\PP}(c)$, $\mu_{1,\PN,a}$, and ${\mathcal{L}}_{1,\PN}(c)$ are identifiable. According to a similar logic, we can also identify $\mu_{0,\PP,a}$, ${\mathcal{L}}_{0,\PP}(c)$, $\mu_{0,\NP,a}$, and ${\mathcal{L}}_{0,\NP}(c)$. It then follows that ${\mathcal{L}}_{z,s_0s_1}(X)$ is identifiable: $
{\mathcal{L}}_{z,s_0s_1}(X) = \mu_{z,s_0s_1,a}^\T q(A) + {\mathcal{L}}_{z,s_0s_1}^\ast(C).$ 
\end{proof}       
\subsection{The identification of Bayesian decision rule}
\begin{lemma}\label{lem:prop-pce}
   Under Assumptions \ref{assumption:RCT}, {\red \ref{assumption:NoInteraction}}, \ref{assump:odds}, and condition \eqref{eq:iden-exp},  the optimal Bayesian decision rule is identifiable.\end{lemma}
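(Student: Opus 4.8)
The plan is to reduce the claim to the identifiability of the two families of objects that enter the closed form of $G^\ast$ in Theorem~\ref{thm:maxim-reward}, and then to invoke the elementary fact that the $\arg\max$ of a collection of identifiable functions is itself identifiable. Recall that $G^\ast = \arg\max_{\tilde{s}_0\tilde{s}_1\in\{0,1\}} h_{\tilde{s}_0\tilde{s}_1}(X)$ with $h_{\tilde{s}_0\tilde{s}_1}(X) = \sum_{s_0,s_1\in\{0,1\}} \mathcal{R}(\tilde{s}_0\tilde{s}_1\mid s_0s_1)\,\pi_{s_0s_1}(X)$, so it suffices to identify $\pi_{s_0s_1}(X)$ and $\mathcal{R}(\tilde{s}_0\tilde{s}_1\mid s_0s_1)$ for all $(s_0,s_1),(\tilde{s}_0,\tilde{s}_1)\in\{0,1\}^2$.

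First I would collect the two preceding results. Under Assumptions~\ref{assumption:RCT} and~\ref{assump:odds}, Lemma~\ref{lem:prop-ps} identifies $\pi_{s_0s_1}(X)$. Under Assumptions~\ref{assumption:RCT}, \ref{assumption:NoInteraction}, \ref{assump:odds} together with condition~\eqref{eq:iden-exp}, the preceding lemma identifies $\mathcal{L}_{z,s_0s_1}(X)$ for $(z,s_0s_1)\in\{(1,\PP),(1,\PN),(0,\PP),(0,\NP)\}$, while $\mathcal{L}_{0,\PN}(X)$, $\mathcal{L}_{1,\NP}(X)$, $\mathcal{L}_{0,\NN}(X)$ and $\mathcal{L}_{1,\NN}(X)$ are identically zero by the structural no-revenue-without-purchase constraint. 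Since the costs $c_0(X)$ and $c_1(X)$ are known, the induced policy $\rho(\tilde{s}_0\tilde{s}_1,X) = \mathbb{I}\{\mathcal{L}_{1,\tilde{s}_0\tilde{s}_1}(X) - c_1(X) \geq \mathcal{L}_{0,\tilde{s}_0\tilde{s}_1}(X) - c_0(X)\}$ is then an identifiable function of $X$ for every $\tilde{s}_0\tilde{s}_1$.

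The only step that needs genuine care is expressing $\mathcal{R}(\tilde{s}_0\tilde{s}_1\mid s_0s_1)$, which by~\eqref{eq:misclassification-reward} is an expectation conditional on the \emph{latent} event $\{G=s_0s_1\}$, in terms of observed-data functionals. The key is the reweighting identity $f(x\mid G=s_0s_1) = \pi_{s_0s_1}(x)\,f(x)/\Pr(G=s_0s_1)$ with $\Pr(G=s_0s_1) = \E\{\pi_{s_0s_1}(X)\}$, which yields $\E\{g(X)\mid G=s_0s_1\} = \E\{g(X)\,\pi_{s_0s_1}(X)\}\big/\E\{\pi_{s_0s_1}(X)\}$ for any integrable $g$. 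Taking $g(X) = \{\mathcal{L}_{1,s_0s_1}(X) - c_1(X)\}\rho(\tilde{s}_0\tilde{s}_1,X) + \{\mathcal{L}_{0,s_0s_1}(X) - c_0(X)\}\{1 - \rho(\tilde{s}_0\tilde{s}_1,X)\}$ writes $\mathcal{R}(\tilde{s}_0\tilde{s}_1\mid s_0s_1)$ as a ratio of expectations of identifiable functions of the observed $X$ --- precisely the population analogue of the plug-in estimator in \emph{Step 4} of Section~\ref{sec:est-proc} --- so it is identifiable. Combining the three steps, $h_{\tilde{s}_0\tilde{s}_1}(X)$ is identifiable for each $\tilde{s}_0\tilde{s}_1$, hence so are $G^\ast$ and the associated decision regions $D^\ast_{s_0s_1}$. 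The main obstacle is precisely this latent-conditioning reweighting step; everything else is bookkeeping that follows from the two preceding lemmas and the knownness of the treatment costs.
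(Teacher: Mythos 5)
Your proposal is correct and takes essentially the same route as the paper: identify $\pi_{s_0s_1}(X)$ and $\mathcal{L}_{z,s_0s_1}(X)$ via the two preceding lemmas, rewrite $\mathcal{R}(\tilde{s}_0\tilde{s}_1\mid s_0s_1)$ as the ratio $\E\{g(X)\pi_{s_0s_1}(X)\}/\E\{\pi_{s_0s_1}(X)\}$, and invoke Theorem~\ref{thm:maxim-reward}. The latent-conditioning reweighting step you flag as the main obstacle is exactly the identity the paper states (without derivation) in its proof, so you have simply made the paper's argument more explicit.
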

\begin{proof}
    After identifying $\pi_{s_0s_1}(X)$ and ${\mathcal{L}}_{z,s_0s_1}(X)$, the misclassification reward can be further identified via the following equation:
    \begin{align*} 
\begin{aligned} {\mathcal{R}}&(\tilde{s}_0\tilde{s}_1\mid s_0s_1)
 = \mathbb{E} \left[ \begin{matrix}
          \{{{\mathcal{L}}}_{1,s_0s_1}(X)- c_{1}(X)\}\pi_{s_0s_1}(X)\rho( {\tilde{s}_0\tilde{s}_1},X) \\+\{{{\mathcal{L}}}_{0,s_0s_1}(X)- c_{0}(X)\}\pi_{s_0s_1}(X)\{1-\rho( {\tilde{s}_0\tilde{s}_1},X)\}  
     \end{matrix} \right] \Bigg/{\mathbb{E}\left\{\pi_{s_0s_1}(X)\right\}},
\end{aligned}
\end{align*}   Consequently, Theorem~\ref{thm:maxim-reward} implies that the optimal Bayesian decision rule is identifiable.
\end{proof}
\section{Estimation performance}
\subsection{Preliminaries}In this section, we present several preliminary lemmas.
\begin{lemma}
\label{lem:ineq} 
Let $\hat{f}$ and $f$ be any real numbers. Then
\[
\left| \mathbb{I}(\hat{f} > 0) - \mathbb{I}(f > 0) \right| \leq \mathbb{I}(|f| \leq |\hat{f} - f|).
\] 
\end{lemma}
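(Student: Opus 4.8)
The plan is to argue by cases on the value of the left-hand side. Since $\mathbb{I}(\hat f > 0) - \mathbb{I}(f > 0)$ is a difference of two indicators, its absolute value lies in $\{0,1\}$. If $\mathbb{I}(\hat f > 0) = \mathbb{I}(f > 0)$, the left side equals $0$, and the inequality holds trivially because the right side is an indicator, hence nonnegative. So the only work is in the case where the two indicators disagree, i.e.\ exactly one of $\hat f > 0$ and $f > 0$ holds.

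When they disagree I would split into two symmetric subcases. In the subcase $\hat f > 0$ and $f \le 0$: write $|f| = -f$, so that $\hat f - f = \hat f + |f| \ge |f|$ because $\hat f > 0$; hence $|\hat f - f| \ge |f|$, which makes $\mathbb{I}(|f| \le |\hat f - f|) = 1$. In the subcase $\hat f \le 0$ and $f > 0$: write $|f| = f$, so that $f - \hat f = f + (-\hat f) \ge f = |f|$ because $-\hat f \ge 0$; again $|\hat f - f| \ge |f|$ and the right side equals $1$. In both subcases the right-hand side is $1$, which is at least the left-hand side, completing the argument.

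There is essentially no real obstacle here; the only point to watch is the treatment of the boundary values $f = 0$ or $\hat f = 0$, where a strict-versus-weak inequality distinction could in principle matter, but since the right-hand indicator involves the non-strict inequality $|f| \le |\hat f - f|$, the weak bounds derived above are exactly what is needed. Alternatively, one could phrase the whole proof in one geometric line: the two indicators differ only when $0$ lies weakly between $f$ and $\hat f$, and in that event the distance from $f$ to $0$, namely $|f|$, cannot exceed the distance from $f$ to $\hat f$, namely $|\hat f - f|$.
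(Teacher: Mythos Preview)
Your proof is correct and follows essentially the same route as the paper: both argue that the left-hand side is nonzero only when the signs of $\hat f$ and $f$ disagree, and in that event establish $|f|\le |\hat f - f|$. Your treatment is in fact slightly more careful at the boundary, since the paper's identity $\bigl|\mathbb{I}(\hat f>0)-\mathbb{I}(f>0)\bigr|=\mathbb{I}(\hat f\cdot f<0)$ omits the case $f=0$ (or $\hat f=0$), which your subcase analysis handles explicitly.
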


\begin{proof}
We first note that
\[
\left| \mathbb{I}(\hat{f} > 0) - \mathbb{I}(f > 0) \right| = \mathbb{I}(\hat{f} \cdot f < 0),
\]
since the indicator values differ if and only if \( \hat{f} \) and \( f \) have opposite signs.

Now suppose \( \hat{f} \cdot f < 0 \). Then
\[
|\hat{f}| + |f| = |\hat{f} - f|,
\]
which implies that \( |f| \leq |\hat{f} - f| \).

Therefore, whenever \( \left| \mathbb{I}(\hat{f} > 0) - \mathbb{I}(f > 0) \right| = 1 \), we must have \( \mathbb{I}(|f| \leq |\hat{f} - f|) = 1 \), and the inequality holds. This completes the proof.
\end{proof} 
\begin{lemma}
\label{lem:abs-small-tail}
Suppose the random variable \( h(X) \) has a density function \( p_h(x) \) that is uniformly bounded on \( \mathbb{R} \); that is, there exists a constant \( C_0 > 0 \) such that
\[
p_h(x) \le C_0 \quad \text{for all } x \in \mathbb{R}.
\]
Then there exists a constant \( C > 0 \) such that, for all \( t > 0 \),
\[
\mathbb{P}\left( \left| h(X) \right| \le t \right) \le C t.
\]
\end{lemma}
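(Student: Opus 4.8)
The plan is to bound the probability directly by integrating the density of $h(X)$ over the symmetric interval $[-t,t]$. First I would observe that, since $h(X)$ admits the density $p_h$ on all of $\mathbb{R}$, the event $\{|h(X)|\le t\}$ coincides with $\{h(X)\in[-t,t]\}$, so that
\[
\mathbb{P}\!\left(|h(X)|\le t\right)=\int_{-t}^{t} p_h(x)\,dx .
\]
Next I would apply the uniform bound $p_h(x)\le C_0$ pointwise inside the integral, giving
\[
\int_{-t}^{t} p_h(x)\,dx \;\le\; \int_{-t}^{t} C_0\,dx \;=\; 2C_0\,t .
\]
Setting $C=2C_0$ then yields the claim, with a constant that does not depend on $t$; any larger constant works as well, so the existence assertion holds uniformly over $t>0$.

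There is essentially no technical obstacle here. The only point worth stating carefully is that the identity $\{|h(X)|\le t\}=\{h(X)\in[-t,t]\}$ is exact (immediate from the definition of the absolute value) and that the density representation of the law of $h(X)$ is valid on the whole real line, which is exactly what the hypothesis supplies. I would close by noting that this tail bound is precisely what is needed to control, via Lemma~\ref{lem:ineq}, the probability that the estimated sign $\mathbb{I}(\hat{f}>0)$ disagrees with the true sign $\mathbb{I}(f>0)$ when $f$ has a bounded density (as assumed for ${\mathcal{L}}_{1,s_0s_1}(X)-{\mathcal{L}}_{0,s_0s_1}(X)$ in Assumption~\ref{rate-cond}(iii)), thereby linking the uniform convergence rates of the nuisance estimators to the behavior of the estimated decision rule.
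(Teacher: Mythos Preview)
Your proof is correct and follows exactly the same approach as the paper: write the probability as $\int_{-t}^{t} p_h(x)\,dx$, bound the integrand pointwise by $C_0$, and set $C=2C_0$. The additional commentary you include on the role of this lemma in controlling sign disagreement via Lemma~\ref{lem:ineq} is accurate and matches how the paper applies it.
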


\begin{proof}
We have
\[
\mathbb{P}\left( \left| h(X) \right| \le t \right) = \int_{-t}^{t} p_h(x) \, dx \le \int_{-t}^{t} C_0 \, dx = 2C_0 t.
\]
Setting \( C = 2C_0 \) yields the result.
\end{proof}

{\begin{lemma}
\label{lem:pi-rate}
Under Assumptions \ref{assump:odds} and \ref{rate-cond}(i), for any $s_0 s_1 \in \{ 00,01,10,11\}$, the plug-in estimator $\hat{\pi}_{s_0s_1}(X)$ from the first method in Step 1 of Section \ref{sec:est-proc} satisfies
\begin{equation*}
\|\hat{\pi}_{s_0s_1} - \pi_{s_0s_1}\|_{\infty} = O_p(n^{-\gamma_*}),
\end{equation*}
where $\gamma_* = \min\{\gamma_0, \gamma_1\}$.
\end{lemma}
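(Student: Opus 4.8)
The plan is to exploit the explicit representation of $\pi_{11}(X)$ as a smooth function of the observed-data nuisances. By Lemma \ref{lem:prop-ps} (equation \eqref{eq:iden-pi11}), with $\theta(X)$ known under Assumption \ref{assump:odds}, we may write $\pi_{11}(X) = \Phi\big(e_0(X), e_1(X); \theta(X)\big)$ for an explicit map $\Phi$, and the plug-in estimator of Step 1 is exactly $\hat\pi_{11}(X) = \Phi\big(\hat e_0(X), \hat e_1(X); \theta(X)\big)$. The idea is then to show $\Phi$ is Lipschitz in its first two arguments with a constant that is uniform over $X \in \mathcal X$ (and over the estimated inputs, which lie in a fixed bounded set by Assumption \ref{rate-cond}(i)), so that $\|\hat\pi_{11} - \pi_{11}\|_\infty \le L\big(\|\hat e_0 - e_0\|_\infty + \|\hat e_1 - e_1\|_\infty\big)$, and the right-hand side is $O_p(n^{-\gamma_*})$ with $\gamma_* = \min\{\gamma_0,\gamma_1\}$ by the uniform convergence rates of $\hat e_0, \hat e_1$ in Assumption \ref{rate-cond}. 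The remaining three strata follow immediately from the identities $\pi_{01} = e_1 - \pi_{11}$, $\pi_{10} = e_0 - \pi_{11}$, $\pi_{00} = 1 - e_0 - e_1 + \pi_{11}$ (and their plug-in analogues), combined with the triangle inequality.

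Carrying this out, the key step is the uniform Lipschitz bound on $\Phi$. On $\{X : \theta(X) = 1\}$ we have $\Phi = e_0 e_1$, which is Lipschitz on the bounded domain with constant $\le M$. On $\{X : \theta(X) \ne 1\}$, $\Phi$ is built from rational and square-root operations; the factor $1/(1-\theta(X))$ in \eqref{eq:iden-pi11} is only an apparent singularity, since $\Phi$ extends to a continuously differentiable function of $(e_0, e_1, \theta)$ across $\theta = 1$ (the numerator vanishes there as well), and one checks that $\partial \Phi/\partial e_0$ and $\partial \Phi/\partial e_1$ are bounded uniformly in $X$ provided the argument of the square root, $\delta(X)$, is bounded away from $0$. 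Under the overlap/positivity conditions maintained in the paper, which force the strata proportions---and hence the separation $\sqrt{\delta(X)}$ of the two admissible roots of the defining quadratic---to stay away from degeneracy, such a lower bound $\delta(X) \ge c > 0$ holds, yielding the uniform constant $L$. Since $\hat e_z(X) \to e_z(X)$ uniformly, with probability tending to one the pair $(\hat e_0(X), \hat e_1(X))$ remains in the region where this bound applies, so $\|\hat\pi_{11}-\pi_{11}\|_\infty \le L(\|\hat e_0 - e_0\|_\infty + \|\hat e_1 - e_1\|_\infty)$ holds with probability tending to one, which suffices for the $O_p$ conclusion.

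The main obstacle is precisely the square-root term $\sqrt{\delta(X)}$: a crude bound $|\sqrt a - \sqrt b| \le \sqrt{|a-b|}$ would only yield the suboptimal rate $O_p(n^{-\gamma_*/2})$, so attaining the stated rate $O_p(n^{-\gamma_*})$ hinges on a uniform lower bound on $\delta(X)$ (equivalently, on the separation of the two roots of the defining quadratic, without $|1-\theta(X)|$ and that separation jointly vanishing in the limit $\theta \to 1$). An equivalent route that avoids the explicit formula is to characterize $\pi_{11}(X)$ and $\hat\pi_{11}(X)$ as the appropriate roots of the quadratics $(\theta-1)t^2 + \{(1-\theta)(e_0+e_1) - 1\}t + \theta e_0 e_1 = 0$ with coefficients perturbed by $O(|\hat e_0 - e_0| + |\hat e_1 - e_1|)$, and to invoke the standard stability of well-separated polynomial roots; this isolates the same non-degeneracy requirement but packages it as root separation rather than positivity of a discriminant.
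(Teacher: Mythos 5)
Your proposal follows essentially the same route as the paper's proof: the paper likewise splits into the cases $\theta(X)=1$ and $\theta(X)\neq 1$, bounds $|\hat e_0\hat e_1-e_0e_1|$ directly in the first case, invokes Lipschitz continuity of the map $(e_0,e_1)\mapsto q(e_0,e_1)$ from \eqref{eq:iden-pi11} in the second, and then treats the remaining strata as smooth (linear) functions of $e_0$, $e_1$, and $\pi_{11}$, exactly as in your use of $\pi_{01}=e_1-\pi_{11}$, $\pi_{10}=e_0-\pi_{11}$, $\pi_{00}=1-e_0-e_1+\pi_{11}$. The only substantive difference is that you make explicit the uniform non-degeneracy of the discriminant $\delta(X)$ (and the removable singularity at $\theta=1$) needed for the Lipschitz constant to be uniform in $X$, a point the paper passes over with the assertion that $q$ is ``continuously differentiable on bounded domains, hence Lipschitz''; your extra care is sound, but note that the lower bound $\delta(X)\ge c>0$ you invoke is implicit rather than an explicitly stated assumption in the paper.
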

\begin{proof}
By Lemma~\ref{lem:prop-ps}, the function \( \pi_{11}(X) \) can be expressed as
\[
\pi_{11}(X) = 
\begin{cases}
\displaystyle \frac{\theta(X) + \{1 - \theta(X)\}\{e_0(X) + e_1(X)\} - \theta(X) \sqrt{\delta(X)} }{1 - \theta(X)}, & \theta(X) \ne 1, \\
e_0(X)e_1(X), & \theta(X) = 1,
\end{cases}
\]
where
\[
\delta(X) = \left[1 + \left\{\theta^{-1}(X) - 1\right\} \left\{e_0(X) + e_1(X)\right\}\right]^2 
- 4 \theta(X) \left\{\theta^{-1}(X) - 1\right\} e_0(X)e_1(X).
\]

We consider two cases:

\begin{itemize}
    \item[{(i)}]  {If \( \theta(X) = 1 \):} Then
    \[
    \pi_{11}(X) = e_0(X)e_1(X), \quad \hat{\pi}_{11}(X) = \hat{e}_0(X)\hat{e}_1(X),
    \]
    and hence
    \begin{align*}
    |\hat{\pi}_{11}(X) - \pi_{11}(X)| 
    &= |\hat{e}_0(X)\hat{e}_1(X) - e_0(X)e_1(X)| \\
    &\le |\hat{e}_0(X)| \cdot |\hat{e}_1(X) - e_1(X)| + |e_1(X)| \cdot |\hat{e}_0(X) - e_0(X)| \\
    &= O_p(n^{-\gamma_1}) + O_p(n^{-\gamma_0}) = O_p(n^{-\gamma_*}),
    \end{align*}
    where \( \gamma_* = \min\{\gamma_0, \gamma_1\} \).

    \item[{(ii)}]  {If \( \theta(X) \ne 1 \):} Define
    \[
    q(e_0, e_1) = \frac{ \theta + (1 - \theta)(e_0 + e_1) - \theta \sqrt{\delta} }{1 - \theta },
    \]
    where
    \[
    \delta = \left\{1 + (\theta^{-1} - 1)(e_0 + e_1)\right\}^2 
    - 4 \theta(\theta^{-1} - 1)e_0e_1.
    \]
    Then \( \pi_{11}(X) = q(e_0(X), e_1(X)) \) and \( \hat{\pi}_{11}(X) = q(\hat{e}_0(X), \hat{e}_1(X)) \). Since \( q(\cdot,\cdot) \) is continuously differentiable on bounded domains, it is Lipschitz continuous, so
    \[
    |\hat{\pi}_{11}(X) - \pi_{11}(X)| \le C \left( |\hat{e}_0(X) - e_0(X)| + |\hat{e}_1(X) - e_1(X)| \right) = O_p(n^{-\gamma_*}),
    \]
    for some constant \( C > 0 \).
\end{itemize}

Combining the two cases gives the desired result for the proportion \( \pi_{11}(X) \). The same argument applies to the other components \( \pi_{ij}(X) \) for \( (i, j) \in \{0,1\}^2 \), since their definitions are also continuously differentiable functions of \( e_0(X) \), \( e_1(X) \) and $\pi_{11}(X)$.
\end{proof}

}
\subsection{The  rate of $\mathcal{R}(\tilde{s}_0\tilde{s}_1\mid s_0s_1)$  in Theorem \ref{thm:rate}}
To prove Theorem \ref{thm:rate}, we first present a more general result under a set of assumptions that parallel those in Theorem \ref{thm:rate}.
\begin{lemma}[Convergence Rate]{Assume the following conditions hold:
\label{lem:rate-lem}
\begin{itemize}
    \item[] (i) The functions $f_1(x), f_2(x), f_3(x)$ and their estimators $\hat{f}_1(x), \hat{f}_2(x), \hat{f}_3(x)$ are uniformly bounded by some constant $M > 0$, i.e.,
  $$
  \sup_x |f_\ell(x)| \le M,\quad \sup_x |\hat{f}_\ell(x)| \le M,\quad \text{for all } \ell = 1,2,3.
  $$
\item[] (ii)  For all $\ell=1,2,3$, we have $ 
\|\hat{f}_\ell - f_\ell\|_\infty = O_p(n^{-\alpha_\ell})$ for some $\alpha_\ell>0.$
\item[] (iii)  The function $ f_1(x)$ has a density that is uniformly bounded over $x \in \mathbb{R}^m$. 
 \end{itemize}
  Then define
$$
\mathcal{C}_n = \frac{\mathbb{P}_n \left[ \hat f_2(X) \cdot  \hat f_3(X) \cdot\max\{ \hat f_1(X), 0 \} \right]}{\mathbb{P}_n \left\{ \hat f_3(X) \right\}},\quad
\mathcal{C}  = \frac{\mathbb{E}  \left[   f_2(X) \cdot   f_3(X) \cdot\max\{ f_1(X),0 \} \right]}{\mathbb{E}  \left\{  f_3(X) \right\}},
$$
we have
$$
|\mathcal{C}_n - \mathcal{C}| \leq O_p\left( \|\hat{f}_1 - f_1\|_\infty \vee \|\hat{f}_2 - f_2\|_\infty \vee \|\hat{f}_3 - f_3\|_\infty \vee n^{-1/2} \right).
$$
}
% Let
% \[
% \mathcal{C}_n= \frac{\mathbb{P}_n \left[ \hat f_3(X) \cdot \hat f_2(X) \cdot \max\{ \hat f_1(X),0 \} \right]}{\mathbb{P}_n \left\{ \hat f_3(X) \right\}},\quad
% \mathcal{C}  = \frac{\mathbb{E}  \left[   f_3(X) \cdot  f_2(X) \cdot \max\{ f_1(X),0 \} \right]}{\mathbb{E}  \left\{  f_3(X) \right\}},
% \]
%  if \({\mathbb{P}_n \left\{ \hat f_3(X) \right\}} \xrightarrow{p} {\mathbb{E}  \left\{  f_3(X) \right\}}> 0 \), we have that
% \[
% |\mathcal{C}_n - \mathcal{C}| \leq O_p\left(   \|\hat{f}_1(x) - f_1(x)\|_{\infty} \vee   \|\hat{f}_2(x) - f_2(x)\|_{\infty} \vee  \|\hat{f}_3(x) - f_3(x)\|_{\infty} \vee n^{-1/2}\right).
% \]
\end{lemma}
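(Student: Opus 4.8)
\textbf{Proof proposal for Lemma~\ref{lem:rate-lem}.} The plan is to treat $\mathcal{C}_n$ and $\mathcal{C}$ as ratios and reduce the problem to controlling their numerators and denominators separately. Write $N_n = \mathbb{P}_n[\hat f_2 \hat f_3 \max\{\hat f_1, 0\}]$, $D_n = \mathbb{P}_n[\hat f_3]$, $N = \mathbb{E}[f_2 f_3 \max\{f_1, 0\}]$, and $D = \mathbb{E}[f_3]$, so that $\mathcal{C}_n = N_n/D_n$ and $\mathcal{C} = N/D$, where $D \ne 0$ as holds in our applications ($f_3$ playing the role of a stratum probability). Starting from the identity
\[
\mathcal{C}_n - \mathcal{C} = \frac{(N_n - N) - \mathcal{C}\,(D_n - D)}{D_n},
\]
and using $|\mathcal{C}| \le M^3/|D|$ (from condition (i)) together with $D_n \to D$ in probability (law of large numbers, since $|\hat f_3| \le M$), it follows that on the event $\{|D_n|\ge |D|/2\}$, which has probability tending to one, $|\mathcal{C}_n - \mathcal{C}| \le (2/|D|)\{|N_n - N| + (M^3/|D|)|D_n - D|\}$. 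Hence it suffices to bound $|N_n - N|$ and $|D_n - D|$.

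For both quantities I would split the ``plug-in error'' from the ``sampling error'': $D_n - D = (\mathbb{P}_n - \mathbb{E})[f_3] + \mathbb{P}_n[\hat f_3 - f_3]$, and similarly $N_n - N = (\mathbb{P}_n - \mathbb{E})[f_2 f_3 \max\{f_1,0\}] + \mathbb{P}_n[\hat f_2\hat f_3\max\{\hat f_1,0\}-f_2 f_3\max\{f_1,0\}]$. The sampling-error pieces are $O_p(n^{-1/2})$ by the central limit theorem, since the summands are bounded (by $M$ and $M^3$ respectively) and hence have finite variance; no Donsker condition is needed. The plug-in pieces are handled by the crude but sufficient bound $|\mathbb{P}_n[g]| \le \|g\|_\infty$: thus $|\mathbb{P}_n[\hat f_3 - f_3]| \le \|\hat f_3 - f_3\|_\infty$, and a three-term telescoping expansion together with $|\hat f_\ell|,|f_\ell|\le M$ and the $1$-Lipschitz property of $t\mapsto\max\{t,0\}$ gives, pointwise,
\[
\bigl|\hat f_2\hat f_3\max\{\hat f_1,0\}-f_2 f_3\max\{f_1,0\}\bigr|\le M^2\bigl(|\hat f_1-f_1|+|\hat f_2-f_2|+|\hat f_3-f_3|\bigr),
\]
so $|\mathbb{P}_n[\hat f_2\hat f_3\max\{\hat f_1,0\}-f_2 f_3\max\{f_1,0\}]| \le M^2\bigvee_\ell\|\hat f_\ell-f_\ell\|_\infty$. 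Combining, $|N_n-N|\vee|D_n-D| = O_p(\bigvee_\ell\|\hat f_\ell-f_\ell\|_\infty \vee n^{-1/2})$, and substituting into the displayed identity yields the claim; since $\|\hat f_\ell-f_\ell\|_\infty=O_p(n^{-\alpha_\ell})$ by (ii), the bound is also $O_p(n^{-\min_\ell\alpha_\ell}\vee n^{-1/2})$.

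The only subtle point concerns the role of the bounded-density condition (iii). For the $\max\{\cdot,0\}$ transformation written above it is actually not needed, because that map is globally Lipschitz; it becomes essential only when $\max\{\hat f_1,0\}$ is replaced by an \emph{indicator} $\mathbb{I}(\hat f_1\ge 0)$, which is how this lemma feeds into the proof of Theorem~\ref{thm:rate} (the estimated policy $\hat\rho(\tilde s_0\tilde s_1,X)$ is precisely such an indicator and does not collapse into a Lipschitz transformation). In that case one bounds $|\mathbb{I}(\hat f_1\ge0)-\mathbb{I}(f_1\ge0)|\le\mathbb{I}(|f_1|\le|\hat f_1-f_1|)$ via Lemma~\ref{lem:ineq}, replaces the random threshold $\|\hat f_1-f_1\|_\infty$ by a deterministic rate sequence $t_n$ (valid on an event of probability tending to one by (ii)), and invokes Lemma~\ref{lem:abs-small-tail} to obtain $\mathbb{E}[\mathbb{I}(|f_1|\le t_n)]\le C_0 t_n$, with empirical fluctuation $(\mathbb{P}_n-\mathbb{E})[\mathbb{I}(|f_1|\le t_n)] = O_p(\sqrt{t_n/n})=O_p(t_n\vee n^{-1/2})$. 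I expect this random-threshold step, rather than the ratio algebra, to be the part requiring genuine care.
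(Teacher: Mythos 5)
Your proposal is correct, and it is worth comparing to the paper's own proof because the two diverge at exactly the point you flagged. The paper also works through the ratio by separating numerator and denominator errors, bounding plug-in pieces by sup-norm distances and sampling pieces by the CLT; but in its decomposition of the numerator it silently replaces $\max\{\hat f_1,0\}$ by the indicator $\mathbb{I}(\hat f_1>0)$ (the form that actually appears in $\hat\rho$ and hence in the proof of Theorem~\ref{thm:rate}), and it is for that indicator step that it invokes Lemma~\ref{lem:ineq} and Lemma~\ref{lem:abs-small-tail} together with condition (iii). Your observation that the lemma as literally stated needs no bounded-density condition — because $t\mapsto\max\{t,0\}$ is $1$-Lipschitz and a three-term telescope with the $M$-bounds suffices — is accurate, and your proof of that version is complete. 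Your sketch of the indicator case is also sound and in fact tightens two points the paper treats loosely: (a) the paper converts the empirical average of $\mathbb{I}\{|f_1(X_i)|\le|\hat f_1(X_i)-f_1(X_i)|\}$ into a population probability ``$+\,o_p(1)$'' even though the threshold is random; your route of passing to $\|\hat f_1-f_1\|_\infty$, then to a deterministic sequence $t_n$ on a high-probability event, then applying Lemma~\ref{lem:abs-small-tail} and a variance bound for the empirical fluctuation is the careful way to justify that line; (b) the paper handles the ratio via a Slutsky expansion with an unquantified additive $o_p(1)$, whereas your exact identity $\mathcal{C}_n-\mathcal{C}=\{(N_n-N)-\mathcal{C}(D_n-D)\}/D_n$ restricted to the event $\{|D_n|\ge|D|/2\}$ gives the stated rate directly. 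One small caveat on both sides: the nonvanishing of the denominator $\mathbb{E}\{f_3(X)\}$ is assumed implicitly (you tie it to the applied context, the paper writes $\mathcal{B}>0$ without listing it as a condition), and it would be cleaner to state it explicitly in the lemma.
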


\begin{proof}
Define
\[\begin{gathered}
    \mathcal{A}_n=  \mathbb{P}_n \left[ \hat f_2(X) \cdot \hat f_3(X) \cdot \max\{ \hat f_1(X),0 \} \right] ,\quad
    \mathcal{B}_n=  \mathbb{P}_n \left\{ \hat f_3(X) \right\}, \\
    \mathcal{A}  =  \mathbb{E}  \left[   f_2(X) \cdot  f_3(X) \cdot \max\{ f_1(X),0 \} \right] ,\quad
    \mathcal{B}  =  \mathbb{E} \left[ f_3(X)  \right].
\end{gathered}\]
By Slutsky's theorem, if \( \mathcal{B}_n \xrightarrow{p} \mathcal{B} > 0 \), then
\[
\mathcal{C}_n - \mathcal{C} = \frac{\mathcal{A}_n}{\mathcal{B}_n} - \frac{\mathcal{A}}{\mathcal{B}} = \frac{\mathcal{A}_n \mathcal{B} - \mathcal{A} \mathcal{B}_n}{\mathcal{B}^2} + o_p(1).
\]
We decompose
\[\mathcal{A}_n \mathcal{B} - \mathcal{A} \mathcal{B}_n = (\mathcal{A}_n - \mathcal{A})\mathcal{B} - \mathcal{A}(\mathcal{B}_n - \mathcal{B}).\]
Let
\begin{align*}
\mathcal{A}_{1n} &= \mathbb{P}_n\left\{\hat{f}_2(X) \cdot \hat{f}_3(X)  \cdot \mathbb{I}(\hat f_1(X)> 0)\right\} - \mathbb{P}_n\left\{f_2(X) \cdot f_3(X)\cdot \mathbb{I}(\hat f_1(X)> 0)\right\}, \\
\mathcal{A}_{2n} &= \mathbb{P}_n\left\{f_2(X) \cdot f_3(X)  \cdot \mathbb{I}(\hat f_1(X)> 0)\right\} - \mathbb{P}_n\left\{f_2(X) \cdot f_3(X)\cdot \mathbb{I}(  f_1(X)> 0)\right\}, \\
\mathcal{A}_{3n} &= \mathbb{P}_n\left\{f_2(X) \cdot f_3(X)\cdot \mathbb{I}(  f_1(X)> 0)\right\} - \mathbb{E}\left\{f_2(X) \cdot f_3(X)\cdot \mathbb{I}(  f_1(X)> 0)\right\}.
\end{align*}
For $\mathcal{A}_{1n}$, we have that
\[
\begin{aligned}
     |\mathcal{A}_{1n}| &\leq \frac{1}{n} \sum_{i=1}^n \left| \hat{f}_2(X_i) \hat{f}_3(X_i) - f_2(X_i) {f}_3(X_i)\right|\\& \leq \frac{1}{n} \sum_{i=1}^n \left(\hat{f}_2(X_i)\left|  \hat{f}_3(X_i) -  {f}_3(X_i)\right|+ {f}_3(X_i)\left|  \hat{f}_2(X_i) -  {f}_2(X_i)\right|\right)\\&=O_p\left( \sup_x |\hat{f}_3(x) - f_3(x)| +\sup_x |\hat{f}_2(x) - f_2(x)|  \right)
\end{aligned}
\]
For $\mathcal{A}_{2n}$, we have that 
% \begin{align*} |\mathcal{A}_{2n}| 
% &\leq \frac{1}{n} \sum_{i=1}^n f_3(X_i) f_2(X_i) \cdot \left|  \mathbb{I}( \hat  f_1(X_i)> 0) -  \mathbb{I}(  f_1(X_i)> 0) \right| \\&\leq \sup_x   f_3(x)\sup_x  f_2(x) \frac{1}{n} \sum_{i=1}^n \left|  \mathbb{I}( \hat  f_1(X_i)> 0) -  \mathbb{I}(  f_1(X_i)> 0) \right| \\
% &=\frac{1}{n} \sum_{i=1}^n \left| \left( \hat{f}_1(X_i) - f_1(X_i) \right) \mathbb{I}\left( \hat f_1(X_i) > 0, f_1(X_i) >0 \right) \right| \cdot M^2  \\
% &\quad + \frac{1}{n} \sum_{i=1}^n \left| \left( 0 - 0 \right) \mathbb{I}\left( \hat{f}_1(X_i) \le 0, f_1(X_i) \le0 \right) \right| \cdot M^2 \\
% &\quad + \frac{1}{n} \sum_{i=1}^n \left| \left( 0 - {f}_1(X_i) \right) \mathbb{I}\left(\hat f_1(X_i) \le 0, f_1(X_i) >0 \right) \right| \cdot M^2 \\
% &\quad + \frac{1}{n} \sum_{i=1}^n \left| \left( \hat{f}_1(X_i) -0 \right) \mathbb{I}\left( \hat{f}_1(X_i) > 0, f_1(X_i) \le0 \right) \right| \cdot M^2 \\
% &\le \sup_x |\hat{f}_1(x) - f_1(x)| + \sup_x |\hat{f}_3(x) - f_3(x)| \\
% &\quad +\left\{ \sup_x |f_1(x) - \hat{f}_1(x)|+ \sup_x |f_3(x) - \hat{f}_3(x)|\right\} \cdot \frac{1}{n} \sum_{i=1}^n \mathbb{I}\left\{  \hat f_1(X_i) > 0, f_1(X_i) >0 \right\} \\
% &\quad  +\left\{ \sup_x |f_1(x) - \hat{f}_1(x)|+ \sup_x |f_3(x) - \hat{f}_3(x)|\right\}  \cdot \frac{1}{n} \sum_{i=1}^n \mathbb{I}\left\{  \hat f_1(X_i) \le 0, f_1(X_i) >0 \right\}
% \end{align*}
\begin{align*} |\mathcal{A}_{2n}| 
&\leq \frac{1}{n} \sum_{i=1}^n f_2(X_i) \cdot f_3(X_i) \cdot \left|  \mathbb{I}( \hat  f_1(X_i)> 0) -  \mathbb{I}(  f_1(X_i)> 0) \right| \\&\leq \sup_x   f_2(x)\cdot\sup_x  f_3(x) \cdot\frac{1}{n} \sum_{i=1}^n \left|  \mathbb{I}( \hat  f_1(X_i)> 0) -  \mathbb{I}(  f_1(X_i)> 0) \right| \\&\leq \sup_x   f_2(x)\cdot\sup_x  f_3(x) \cdot\frac{1}{n} \sum_{i=1}^n \left|  \mathbb{I}(f_1(X_i)\leq \vert\hat  f_1(X_i)-  f_1(X_i)\vert)   \right| \\
&=  M^2\cdot   \pr\left(  f_1(X_i)\leq \vert\hat  f_1(X_i)-  f_1(X_i)\vert\right) +o_p(1) \\
&\lesssim O_p\left( \sup_x |\hat{f}_1(x) - f_1(x)|   \right),
\end{align*}    
where the third inequality follows from Lemma~\ref{lem:ineq}, and the last equality holds because Lemma \ref{lem:abs-small-tail}.
By the central limit theorem, we have 
 $$\mathcal{A}_{3n} =  \frac{1}{n} \sum_{i=1}^n f_2(X_i) \cdot{f_3}(X_i)  \cdot \mathbb{I}(  f_1(X_i)> 0)-  {\mathbb{E} \left[ f_2(X) f_3(X) \cdot \mathbb{I}(  f_1(X)> 0) \right]}=O_p(n^{-1/2})$$
 We then have \[
\begin{aligned}
    |(\mathcal{A}_n - \mathcal{A})\mathcal{B}|& = |\mathcal{A}_{1n} + \mathcal{A}_{2n} + \mathcal{A}_{3n} ||\mathcal{B}|\\& =O_p\left(   ||\hat{f}_1(x) - f_1(x)||_{\infty} +   ||\hat{f}_3(x) - f_3(x)||_{\infty} +  ||\hat{f}_2(x) - f_2(x)||_{\infty} +n^{-1/2}\right) 
\end{aligned}
\]
Furthermore, write $$
\mathcal{B}_n - \mathcal{B} = \mathbb{P}_n \{\hat{f}_3(X) - f_3(X)\} + \big[\mathbb{P}_n \{f_3(X)\} - \mathbb{E}\{f_3(X)\}\big].$$ 
The first term is bounded by
\[
|\mathbb{P}_n \{\hat{f}_3(X) - f_3(X)\}| \leq \|\hat{f}_3 (x)- f_3(x)\|_\infty,
\]
and the second term is \(O_p(n^{-1/2})\) by the central limit theorem. 
Multiplying by a bounded \(\mathcal{A}\), we get
\[
|(\mathcal{B}_n - \mathcal{B}) \mathcal{A}| = O_p\big(\|\hat{f}_3 (x)- f_3(x)\|_\infty + n^{-1/2}\big).
\]
 
Therefore,
\[
|(\mathcal{A}_n - \mathcal{A})\mathcal{B}| = O_p\left(\|\hat{f}_1(x) - f_1(x)\|_{\infty} + \|\hat{f}_2(x) - f_2(x)\|_{\infty} + \|\hat{f}_3(x) - f_3(x)\|_{\infty} + n^{-1/2}\right).
\]
Combining the above,
\[
\begin{aligned}
    |\mathcal{C}_n - \mathcal{C}| &= \Bigg| \frac{\mathcal{A}_n \mathcal{B} - \mathcal{A} \mathcal{B}_n}{\mathcal{B}^2}\Bigg| + o_p(1)\\&=\Bigg| \frac{(\mathcal{A}_n - \mathcal{A})\mathcal{B}  }{\mathcal{B}^2}\Bigg| +\Bigg| \frac{  \mathcal{A}(\mathcal{B}_n - \mathcal{B})}{\mathcal{B}^2}\Bigg| + o_p(1)\\& \le O_p\left(\|\hat{f}_1(x) - f_1(x)\|_{\infty} \vee \|\hat{f}_2(x) - f_2(x)\|_{\infty} \vee \|\hat{f}_3(x) - f_3(x)\|_{\infty} \vee n^{-1/2}\right).
\end{aligned}
\]
\end{proof}
\subsection{The proof of Theorem \ref{thm:rate}}
\begin{proof}
    
  Let $  \hat\rho( {\tilde{s}_0\tilde{s}_1},X) = \mathbb{I}\left\{ \hat{{\mathcal{L}}}_{1,\tilde{s}_0\tilde{s}_1}(X) -  c_1(X) \geq \hat{{\mathcal{L}}}_{0,\tilde{s}_0\tilde{s}_1}(X) -  c_0(X) \right\}.$ We have,  \begin{align*}
\begin{aligned} \hat{\mathcal{R}}(\tilde{s}_0\tilde{s}_1\mid s_0s_1)&
 = \dfrac{\mathbb{P}_n \left[ \begin{matrix}
          \{\hat{{\mathcal{L}}}_{1,s_0s_1}(X) -  c_1(X) \}\hat\pi_{s_0s_1}(X)\hat\rho( {\tilde{s}_0\tilde{s}_1},X) \\+\{\hat{{\mathcal{L}}}_{0,s_0s_1}(X) -  c_0(X) \}\hat\pi_{s_0s_1}(X)\{1-\hat\rho( {\tilde{s}_0\tilde{s}_1},X)\}  
     \end{matrix} \right]} {\mathbb{P}_n\left\{\hat\pi_{s_0s_1}(X)\right\}}\\&
 =  \dfrac{\mathbb{P}_n \left[ \begin{matrix}
          \{\hat{{\mathcal{L}}}_{1,s_0s_1}(X)-  c_1(X)\}\hat\pi_{s_0s_1}(X)\mathbb{I}\left( \hat{{\mathcal{L}}}_{1,\tilde{s}_0\tilde{s}_1}(X) -  c_1(X) - \hat{{\mathcal{L}}}_{0,\tilde{s}_0\tilde{s}_1}(X) +  c_ 0(X) >0\right)  
     \end{matrix} \right]} {\mathbb{P}_n\left\{\hat\pi_{s_0s_1}(X)\right\}}\\&~~~~~+  \dfrac{\mathbb{P}_n \left[ \begin{matrix}
          \{\hat{{\mathcal{L}}}_{0,s_0s_1}(X)- c_0(X)\}\hat\pi_{s_0s_1}(X)\mathbb{I}\left( \hat{{\mathcal{L}}}_{1,\tilde{s}_0\tilde{s}_1}(X) -  c_1(X) - \hat{{\mathcal{L}}}_{0,\tilde{s}_0\tilde{s}_1}(X) +  c_0(X)<0\right)  
     \end{matrix} \right]} {\mathbb{P}_n\left\{\hat\pi_{s_0s_1}(X)\right\}}.
\end{aligned} 
\end{align*} 

Consequently, Lemma~\ref{lem:pi-rate} provides the convergence rate for $\hat{\pi}_{s_0s_1}(X)$ when we employ the first estimation approach in Step 1 of Section~\ref{sec:est-proc} to estimate $\pi_{s_0s_1}(X)$. Under correct parametric specification, the second estimation approach in Step 1  achieves a root-$n$ convergence rate. 

The conclusion then follows directly from Lemma~\ref{lem:rate-lem} by setting:
$$
\begin{gathered}
f_1(X) = {\mathcal{L}}_{1,\tilde{s}_0\tilde{s}_1}(X) - {\mathcal{L}}_{0,\tilde{s}_0\tilde{s}_1}(X) - c_1(X) + c_0(X), \\
f_2(X) = \pi_{s_0s_1}(X), \quad f_3 (X) = {\mathcal{L}}_{1,s_0s_1}(X) - c_1(X) .
\end{gathered}
$$
and 
$$
\begin{gathered}
f_1(X) = {\mathcal{L}}_{1,\tilde{s}_0\tilde{s}_1}(X) - {\mathcal{L}}_{0,\tilde{s}_0\tilde{s}_1}(X) - c_1(X) + c_0(X), \\
f_2(X) = \pi_{s_0s_1}(X),  \quad f_3 (X) = {\mathcal{L}}_{0,s_0s_1}(X) - c_0(X).
\end{gathered}
$$
\end{proof}
% $$\begin{aligned}
%      \hat{\tilde{G}} &=11\cdot\mathbb{I}({\mathcal{Q}}_{\PP}>\hat{\mathcal{Q}}_{10},\hat{\mathcal{Q}}_{\PP}>\hat{\mathcal{Q}}_{01},\hat{\mathcal{Q}}_{\PP}>\hat{\mathcal{Q}}_{00})+
%       10\cdot\mathbb{I}(\hat{\mathcal{Q}}_{10}>\hat{\mathcal{Q}}_{\PP},\hat{\mathcal{Q}}_{10}>\hat{\mathcal{Q}}_{01},\hat{\mathcal{Q}}_{10}>\hat{\mathcal{Q}}_{00})\\ &~~+01\cdot\mathbb{I}(\hat{\mathcal{Q}}_{01}>\hat{\mathcal{Q}}_{\PP},\hat{\mathcal{Q}}_{01}>\hat{\mathcal{Q}}_{10},\hat{\mathcal{Q}}_{01}>\hat{\mathcal{Q}}_{00}) +00\cdot\mathbb{I}(\hat{\mathcal{Q}}_{00}>\hat{\mathcal{Q}}_{10},\hat{\mathcal{Q}}_{00}>\hat{\mathcal{Q}}_{01},\hat{\mathcal{Q}}_{00}>\hat{\mathcal{Q}}_{11
%     })
% \end{aligned}$$
% $$\begin{aligned}
%    {\tilde{G}} &=11\cdot\mathbb{I}({\mathcal{Q}}_{\PP}>{\mathcal{Q}}_{10},{\mathcal{Q}}_{\PP}>{\mathcal{Q}}_{01},{\mathcal{Q}}_{\PP}>{\mathcal{Q}}_{00})+
%       10\cdot\mathbb{I}({\mathcal{Q}}_{10}>{\mathcal{Q}}_{\PP},{\mathcal{Q}}_{10}>{\mathcal{Q}}_{01},{\mathcal{Q}}_{10}>{\mathcal{Q}}_{00})\\ &~~+01\cdot\mathbb{I}({\mathcal{Q}}_{01}>{\mathcal{Q}}_{\PP},{\mathcal{Q}}_{01}>{\mathcal{Q}}_{10},{\mathcal{Q}}_{01}>{\mathcal{Q}}_{00}) +00\cdot\mathbb{I}({\mathcal{Q}}_{00}>{\mathcal{Q}}_{10},{\mathcal{Q}}_{00}>{\mathcal{Q}}_{01},{\mathcal{Q}}_{00}>{\mathcal{Q}}_{11
%     })
% \end{aligned}$$
{  
\section{The proof of Proposition  \ref{prop:selection}}
\subsection{The selection consistency  in Proposition  \ref{prop:selection}}
{
Define $$\begin{gathered}
    {\hat{\mathcal{Q}}}_{s_0s_1}(x)=\sum_{\tilde{s}_0 ,\tilde{s}_1\in\{0,1\} } {\hat{\mathcal{R}}}(\tilde{s}_0\tilde{s}_1\mid s_0s_1) \cdot \hat\pi_{\tilde{s}_0 \tilde{s}_1}(x) ,~~{{\mathcal{Q}}}_{s_0s_1}(x)=\sum_{\tilde{s}_0 ,\tilde{s}_1\in\{0,1\} } {{\mathcal{R}}}(\tilde{s}_0\tilde{s}_1\mid s_0s_1) \cdot \pi_{\tilde{s}_0 \tilde{s}_1}(x)  .
\end{gathered}$$
\begin{lemma}
    \label{lem:Q-consistency}
Under Assumption~\ref{rate-cond}, for any fixed $X=x $ and $  s_0 ,s_1\in\{0,1\}$, the estimated value function $\hat{Q}_{s_0 s_1}(x)$ is a consistent estimator of $Q_{s_0 s_1}(x)$, i.e.,
$ 
\hat{Q}_{s_0 s_1}(x) \xrightarrow{p} Q_{s_0 s_1}(x). $
\end{lemma}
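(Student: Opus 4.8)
The plan is to reduce the statement to the two rate results already established, namely Theorem~\ref{thm:rate} for $\hat{\mathcal{R}}(\tilde{s}_0\tilde{s}_1\mid s_0s_1)$ and Lemma~\ref{lem:pi-rate} for $\hat\pi_{\tilde{s}_0\tilde{s}_1}(x)$. First I would write the error as a finite sum over the four pseudo-strata,
\[
\hat{Q}_{s_0s_1}(x) - Q_{s_0s_1}(x) = \sum_{\tilde{s}_0,\tilde{s}_1\in\{0,1\}} \left\{ \hat{\mathcal{R}}(\tilde{s}_0\tilde{s}_1\mid s_0s_1)\,\hat\pi_{\tilde{s}_0\tilde{s}_1}(x) - \mathcal{R}(\tilde{s}_0\tilde{s}_1\mid s_0s_1)\,\pi_{\tilde{s}_0\tilde{s}_1}(x)\right\},
\]
and apply to each summand the elementary identity $\hat a\hat b - ab = (\hat a - a)\hat b + a(\hat b - b)$ with $a = \mathcal{R}(\tilde{s}_0\tilde{s}_1\mid s_0s_1)$ and $b = \pi_{\tilde{s}_0\tilde{s}_1}(x)$, so that each summand splits into $(\hat{\mathcal{R}} - \mathcal{R})\,\hat\pi + \mathcal{R}\,(\hat\pi - \pi)$.

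Next I would bound the two pieces separately. For the first piece, $(\hat{\mathcal{R}} - \mathcal{R})\,\hat\pi$, note that $\hat\pi_{\tilde{s}_0\tilde{s}_1}(x)$ is $O_p(1)$ since it converges to $\pi_{\tilde{s}_0\tilde{s}_1}(x)\in[0,1]$ at the rate given in Lemma~\ref{lem:pi-rate}; hence this piece is $O_p(1)\cdot\bigl|\hat{\mathcal{R}}(\tilde{s}_0\tilde{s}_1\mid s_0s_1) - \mathcal{R}(\tilde{s}_0\tilde{s}_1\mid s_0s_1)\bigr| = O_p(n^{-\gamma})$ by Theorem~\ref{thm:rate}. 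For the second piece, $\mathcal{R}\,(\hat\pi - \pi)$, I would first argue $\bigl|\mathcal{R}(\tilde{s}_0\tilde{s}_1\mid s_0s_1)\bigr| \le M'$ for some constant $M'>0$: under Assumption~\ref{rate-cond}(i) the functions $\mathcal{L}_{z,s_0s_1}(X)$ are uniformly bounded, the costs $c_z(X)$ are known bounded functions, and the $\pi$'s are probabilities, so the numerator of $\mathcal{R}(\tilde{s}_0\tilde{s}_1\mid s_0s_1)$ in \eqref{eq:misclassification-reward} is bounded, while its denominator $\mathbb{E}\{\pi_{s_0s_1}(X)\}$ is a fixed positive constant (positivity being implicit in $\mathcal{R}$ being well-defined). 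Hence the second piece is at most $M'\,\bigl|\hat\pi_{\tilde{s}_0\tilde{s}_1}(x) - \pi_{\tilde{s}_0\tilde{s}_1}(x)\bigr| = O_p(n^{-\gamma_*})$ by Lemma~\ref{lem:pi-rate}.

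Finally, I would combine: each of the four summands is $O_p\bigl(n^{-\min\{\gamma,\gamma_*\}}\bigr) = o_p(1)$, and a finite sum of $o_p(1)$ terms is $o_p(1)$, giving $\hat{Q}_{s_0s_1}(x) - Q_{s_0s_1}(x) \xrightarrow{p} 0$, which is the claim. The argument is short because the heavy lifting is done by Theorem~\ref{thm:rate} and Lemma~\ref{lem:pi-rate}; the only step requiring care — and the main (mild) obstacle — is the uniform boundedness of $\mathcal{R}(\tilde{s}_0\tilde{s}_1\mid s_0s_1)$ and of $\hat\pi_{\tilde{s}_0\tilde{s}_1}(x)$, both of which follow directly from Assumption~\ref{rate-cond}(i), from the $\pi$'s being probabilities, and from $\mathbb{E}\{\pi_{s_0s_1}(X)\}$ being bounded away from zero, so no genuine difficulty arises beyond bookkeeping.
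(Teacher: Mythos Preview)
Your proposal is correct and follows essentially the same route as the paper: the paper also writes $\hat Q - Q$ as a finite sum, splits each summand via $|\hat{\mathcal{R}}\hat\pi - \mathcal{R}\pi|\le |\hat{\mathcal{R}}-\mathcal{R}|\,|\hat\pi| + |\mathcal{R}|\,|\hat\pi-\pi|$, invokes Theorem~\ref{thm:rate} for the first factor and the consistency of $\hat\pi$ for the second, and uses boundedness of $\hat\pi$ and $\mathcal{R}$ to conclude. Your write-up is in fact slightly more explicit than the paper's about why $|\mathcal{R}(\tilde{s}_0\tilde{s}_1\mid s_0s_1)|$ is bounded and about the rates, but the argument is the same.
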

\begin{proof}
According to Theorem~\ref{thm:rate}, for any \(s_0,s_1, \tilde{s}_0, \tilde{s}_1 \in \{0,1\} \), we have
\[
\hat{\mathcal{R}}(\tilde{s}_0 \tilde{s}_1 \mid s_0 s_1) \xrightarrow{p} \mathcal{R}(\tilde{s}_0 \tilde{s}_1 \mid s_0 s_1). 
\] 

By definition,
\[
\hat{Q}_{s_0 s_1}(x) = \sum_{\tilde{s}_0, \tilde{s}_1} \hat{\mathcal{R}}(\tilde{s}_0 \tilde{s}_1 \mid s_0 s_1) \cdot \hat{\pi}_{\tilde{s}_0 \tilde{s}_1}(x),
\quad
Q_{s_0 s_1}(x) = \sum_{\tilde{s}_0, \tilde{s}_1} \mathcal{R}(\tilde{s}_0 \tilde{s}_1 \mid s_0 s_1) \cdot \pi_{\tilde{s}_0 \tilde{s}_1}(x).
\]

Then by the triangle inequality,
\[
\left| \hat{Q}_{s_0 s_1}(x) - Q_{s_0 s_1}(x) \right|
\leq \sum_{\tilde{s}_0, \tilde{s}_1}
\left| \hat{\mathcal{R}}(\tilde{s}_0 \tilde{s}_1 \mid s_0 s_1) \hat{\pi}_{\tilde{s}_0 \tilde{s}_1}(x)
- \mathcal{R}(\tilde{s}_0 \tilde{s}_1 \mid s_0 s_1) \pi_{\tilde{s}_0 \tilde{s}_1}(x) \right|.
\]

Each term can be bounded as
\[
\begin{aligned}
 \left| \hat{\mathcal{R}}(\tilde{s}_0 \tilde{s}_1 \mid s_0 s_1) \hat{\pi}_{\tilde{s}_0 \tilde{s}_1}(x)
- \mathcal{R}(\tilde{s}_0 \tilde{s}_1 \mid s_0 s_1) \pi_{\tilde{s}_0 \tilde{s}_1}(x) \right| 
&\leq
\left| \hat{\mathcal{R}}(\tilde{s}_0 \tilde{s}_1 \mid s_0 s_1) - \mathcal{R}(\tilde{s}_0 \tilde{s}_1 \mid s_0 s_1) \right| \cdot \left| \hat{\pi}_{\tilde{s}_0 \tilde{s}_1}(x) \right| \\
&\quad +
\left| \mathcal{R}(\tilde{s}_0 \tilde{s}_1 \mid s_0 s_1) \right| \cdot \left| \hat{\pi}_{\tilde{s}_0 \tilde{s}_1}(x) - \pi_{\tilde{s}_0 \tilde{s}_1}(x) \right|.
\end{aligned}
\]

Under the boundedness assumption \( |\hat{\pi}_{\tilde{s}_0 \tilde{s}_1}(x)| \leq M \), it follows that
\[
\left| \hat{Q}_{s_0 s_1}(x) - Q_{s_0 s_1}(x) \right| \leq C \sum_{\tilde{s}_0, \tilde{s}_1}
\left(
\left| \hat{\mathcal{R}} (\tilde{s}_0 \tilde{s}_1 \mid s_0 s_1)- \mathcal{R} (\tilde{s}_0 \tilde{s}_1 \mid s_0 s_1)\right| + \left| \hat{\pi} (x) - \pi(x)  \right|
\right),
\]
for some constant \( C > 0 \), where all terms vanish in probability as \( n \to \infty \).

Therefore,
\[
\hat{Q}_{s_0 s_1}(x) \xrightarrow{p} Q_{s_0 s_1}(x),
\]
which establishes the consistency of \(\hat{Q}_{s_0 s_1}(x)\) at any fixed \(x\).
\end{proof}

}

% Let ${\hat{\mathcal{Q}}}_{s_0s_1}(x)=\sum_{\tilde{s}_0 ,\tilde{s}_1\in\{0,1\} } {\hat{\mathcal{R}}}(\tilde{s}_0\tilde{s}_1\mid s_0s_1) \cdot \hat\pi_{\tilde{s}_0 \tilde{s}_1}(x) $ and ${{\mathcal{Q}}}_{s_0s_1}(x)=\sum_{\tilde{s}_0 ,\tilde{s}_1\in\{0,1\} } {{\mathcal{R}}}(\tilde{s}_0\tilde{s}_1\mid s_0s_1) \cdot \pi_{\tilde{s}_0 \tilde{s}_1}(x) $. 
\subsection{The proof of Proposition \ref{prop:selection}}
\begin{proof}
Fix any covariate value $X = x$. Recall that the true optimal decision set is defined as

$$\mathcal{S}(x) = \arg\max_{s_0s_1 }  \sum_{\tilde{s}_0 ,\tilde{s}_1\in\{0,1\} }{ {\mathcal{R}}}(\tilde{s}_0\tilde{s}_1\mid s_0s_1) \cdot \pi_{\tilde{s}_0 \tilde{s}_1}(x)$$

We aim to show that the estimated optimal decision $\hat {G}^\ast$, which maximizes the estimated reward $\hat{\mathcal{Q}}_{s_0s_1}(x)$, lies in $\mathcal{S}(x)$ with high probability as $n \to \infty$; that is,

$$
\pr\left( \hat {G}^\ast \notin \mathcal{S}(x) \mid X=x\right) \to 0.
$$

According to Lemma \ref{lem:Q-consistency}, we know that   $\hat{\mathcal{Q}}_{s_0s_1}(x)$ is a consistent estimator of $\mathcal{Q}_{s_0s_1}(x)$ for every $s_0s_1 \in \{00,01,10,11\}$. Therefore,

$$
\max_{s_0s_1 \in \{00,01,10,11\}} \left| \hat{\mathcal{Q}}_{s_0s_1}(x) - \mathcal{Q}_{s_0s_1}(x) \right| \xrightarrow{p} 0.
$$

Let $\delta > 0$ denote the smallest positive gap in reward between any suboptimal decision and the optimal set. More precisely, for any $s_0s_1 \notin \mathcal{S}(x)$, there exists some $\bar{s}_0\bar{s}_1 \in \mathcal{S}(x)$ such that

$$
\mathcal{Q}_{\bar{s}_0\bar{s}_1}(x) - \mathcal{Q}_{s_0s_1}(x) \ge \delta.
$$

Choose $\xi = \delta/2 > 0$. By consistency, there exists an integer $N > 0$ such that for all $n > N$,

$$
\max_{s_0s_1 \in \{00,01,10,11\}} \left| \hat{\mathcal{Q}}_{s_0s_1}(x) - \mathcal{Q}_{s_0s_1}(x) \right| < \xi \quad \text{with probability at least } 1 - \varepsilon.
$$

Now consider any suboptimal strategy $s_0s_1 \notin \mathcal{S}(x)$ and any optimal strategy $\bar{s}_0\bar{s}_1 \in \mathcal{S}(x)$. We have:

$$
\hat{\mathcal{Q}}_{s_0s_1}(x) < \mathcal{Q}_{s_0s_1}(x) + \xi 
< \mathcal{Q}_{\bar{s}_0\bar{s}_1}(x) - \delta + \xi 
= \mathcal{Q}_{\bar{s}_0\bar{s}_1}(x) - \xi 
< \hat{\mathcal{Q}}_{\bar{s}_0\bar{s}_1}(x).
$$

This implies that every suboptimal strategy $s_0s_1$ is assigned strictly lower estimated reward than at least one optimal strategy $\bar{s}_0\bar{s}_1$, so $\hat {G}^\ast\not= s_0s_1$, i.e., it must be that $\hat {G}^\ast \in \mathcal{S}(x)$. Therefore,

$$
\pr\left( \hat {G}^\ast \notin \mathcal{S}(x) \mid X=x\right) 
\le \pr\left( \max_{s_0s_1 \in \{00,01,10,11\}} \left| \hat{\mathcal{Q}}_{s_0s_1}(x) - \mathcal{Q}_{s_0s_1}(x) \right| \geq \xi~ \Big|~X=x \right)
\to 0 ~ \text{as } n \to \infty.
$$

This completes the proof.
\end{proof}

\section{Estimation method for Logistic models}
\label{sec:logistic-models}
\subsection{Verification of $\theta(X)$ in Section \ref{sec:ER-sim}}\label{ssec:verification-thetaX} 
According to \eqref{sensitivity:logistic} in the main text, we have
\begin{equation*}
\Pr(G=s_0 s_1 \mid X) =\frac{\exp\{s_1(\iota_{01} + \iota_{01,X}^\T X)\}\exp\{s_0(\iota_{10} + \iota_{10,X}^\T X)\}\exp(\eta s_0s_1)}{\sum_{\tilde{s}_0,\tilde{s}_1\in\{0,1\}}\exp\{\tilde{s}_1(\iota_{01} + \iota_{01,X}^\T X)\}\exp\{\tilde{s}_0(\iota_{10} + \iota_{10,X}^\T X)\}\exp(\eta \tilde{s}_0\tilde{s}_1)},
\end{equation*}
specifically, 
\begin{align*}
\Pr(G=00 \mid X) &= \frac{1}{1 + \exp(\iota_{01} + \iota_{01,X}^\T X) + \exp(\iota_{10} + \iota_{10,X}^\T X) + \exp(\iota_{01} + \iota_{01,X}^\T X + \iota_{10} + \iota_{10,X}^\T X + \eta)}, \\
\Pr(G=01\mid X) &= \frac{\exp(\iota_{01} + \iota_{01,X}^\T X)}{1 + \exp(\iota_{01} + \iota_{01,X}^\T X) + \exp(\iota_{10} + \iota_{10,X}^\T X) + \exp(\iota_{01} + \iota_{01,X}^\T X + \iota_{10} + \iota_{10,X}^\T X + \eta)}, \\
\Pr(G=10 \mid X) &= \frac{\exp(\iota_{10} + \iota_{10,X}^\T X)}{1 + \exp(\iota_{01} + \iota_{01,X}^\T X) + \exp(\iota_{10} + \iota_{10,X}^\T X) + \exp(\iota_{01} + \iota_{01,X}^\T X + \iota_{10} + \iota_{10,X}^\T X + \eta)}, \\
\Pr(G=11 \mid X) &= \frac{\exp(\iota_{01} + \iota_{01,X}^\T X + \iota_{10} + \iota_{10,X}^\T X + \eta)}{1 + \exp(\iota_{01} + \iota_{01,X}^\T X) + \exp(\iota_{10} + \iota_{10,X}^\T X) + \exp(\iota_{01} + \iota_{01,X}^\T X + \iota_{10} + \iota_{10,X}^\T X + \eta)}.
\end{align*}

By verifying the definition of the odds function, we know that 
$$\theta(X) = \frac{\pi_{00}(X)\pi_{11}(X)}{\pi_{10}(X)\pi_{01}(X)} = \exp(\eta).$$

\subsection{EM algorithm}
\label{EMalgorithm}In this section, we provide an EM algorithm for estimating the strata probabilities $\pi_{s_0s_1}(X)$ under the logistic model. 
Although we cannot fully observe \( G \), we can use the EM algorithm to find the MLEs by treating \( G \) as missing data.

% {According to \eqref{eq:odds-ratio}, the posterior probabilities for each stratum $G = s_0s_1$ given covariates $X$ are computed as:
% \begin{align*}
% \hat{\pi}_{00}(X) &= \hat{\Pr}(G = 00 \mid X) = 
% \frac{
% e_0(0, X; \hat{\iota}_0) \cdot e_1(0, X; \hat{\iota}_1)
% }{
% \sum_{\tilde{s}_0, \tilde{s}_1 \in \{0,1\}} 
% e_0(\tilde{s}_1, X; \hat{\iota}_0) \cdot e_1(\tilde{s}_0, X; \hat{\iota}_1) \cdot 
% \theta(X; \hat{\alpha}_{\mathrm{or}})^{\tilde{s}_0 \tilde{s}_1}
% }, \\
% \hat{\pi}_{01}(X) &= \hat{\Pr}(G = 01 \mid X) = 
% \frac{
% e_0(1, X; \hat{\iota}_0) \cdot e_1(0, X; \hat{\iota}_1)
% }{
% \sum_{\tilde{s}_0, \tilde{s}_1 \in \{0,1\}} 
% e_0(\tilde{s}_1, X; \hat{\iota}_0) \cdot e_1(\tilde{s}_0, X; \hat{\iota}_1) \cdot 
% \theta(X; \hat{\alpha}_{\mathrm{or}})^{\tilde{s}_0 \tilde{s}_1}
% }, \\
% \hat{\pi}_{10}(X) &= \hat{\Pr}(G = 10 \mid X) = 
% \frac{
% e_0(0, X; \hat{\iota}_0) \cdot e_1(1, X; \hat{\iota}_1)
% }{
% \sum_{\tilde{s}_0, \tilde{s}_1 \in \{0,1\}} 
% e_0(\tilde{s}_1, X; \hat{\iota}_0) \cdot e_1(\tilde{s}_0, X; \hat{\iota}_1) \cdot 
% \theta(X; \hat{\alpha}_{\mathrm{or}})^{\tilde{s}_0 \tilde{s}_1}
% }, \\
% \hat{\pi}_{11}(X) &= \hat{\Pr}(G = 11 \mid X) = 
% \frac{
% e_0(1, X; \hat{\iota}_0) \cdot e_1(1, X; \hat{\iota}_1) \cdot \theta(X; \hat{\alpha}_{\mathrm{or}})
% }{
% \sum_{\tilde{s}_0, \tilde{s}_1 \in \{0,1\}} 
% e_0(\tilde{s}_1, X; \hat{\iota}_0) \cdot e_1(\tilde{s}_0, X; \hat{\iota}_1) \cdot 
% \theta(X; \hat{\alpha}_{\mathrm{or}})^{\tilde{s}_0 \tilde{s}_1}
% }.
% \end{align*}

In the E-step, we compute the conditional probabilities $\Pr(G_i = s_0s_1 \mid -)$ for each observation $(Z_i, S_i, X_i)$ given current parameter estimates $(\hat{\iota}_0^k, \hat{\iota}_1^k)$ as follows:

If $(Z_i = 1, S_i = 1)$:
\[
\Pr(G_i = 01 \mid -) = \frac{{\exp({\hat \iota}^{k}_{01} + {\hat \iota}^{k}_{01,X}  X)}}{{\exp({\hat \iota}^{k}_{01} + {\hat \iota}^{k}_{01,X}  X)}+\exp({\hat \iota}^{k}_{01} + {\hat \iota}^{k}_{01,X}  X + {\hat \iota}^{k}_{10} + {\hat \iota}^{k\T}_{10,X} X + \eta)} ,\]\[ 
    \Pr(G_i = 11 \mid -) = \frac{{\exp({\hat \iota}^{k}_{01} + {\hat \iota}^{k}_{01,X}  X + {\hat \iota}^{k}_{10} + {\hat \iota}^{k\T}_{10,X} X + \eta)}}{{\exp({\hat \iota}^{k}_{01} + {\hat \iota}^{k}_{01,X}  X)}+\exp({\hat \iota}^{k}_{01} + {\hat \iota}^{k}_{01,X}  X + {\hat \iota}^{k}_{10} + {\hat \iota}^{k\T}_{10,X} X + \eta)}.
\]

If $(Z_i = 0, S_i = 1)$:
\[
\Pr(G_i = 1 0\mid -) = \frac{{\exp({\hat \iota}^{k}_{10} + {\hat \iota}^{k\T}_{10,X} X)}}{{\exp({\hat \iota}^{k}_{10} + {\hat \iota}^{k\T}_{10,X} X)}+\exp({\hat \iota}^{k}_{01} + {\hat \iota}^{k}_{01,X}  X + {\hat \iota}^{k}_{10} + {\hat \iota}^{k\T}_{10,X} X + \eta)}, \]\[ 
    \Pr(G_i = 11 \mid -) = \frac{{\exp({\hat \iota}^{k}_{01} + {\hat \iota}^{k}_{01,X}  X + {\hat \iota}^{k}_{10} + {\hat \iota}^{k\T}_{10,X} X + \eta)}}{{\exp({\hat \iota}^{k}_{10} + {\hat \iota}^{k\T}_{10,X} X)}+\exp({\hat \iota}^{k}_{01} + {\hat \iota}^{k}_{01,X}  X + {\hat \iota}^{k}_{10} + {\hat \iota}^{k\T}_{10,X} X + \eta)} .
\]

If $(Z_i = 1, S_i = 0)$:
\[
\Pr(G_i = 1 0\mid -) = \frac{{\exp({\hat \iota}^{k}_{10} + {\hat \iota}^{k\T}_{10,X} X)}}{{1+\exp({\hat \iota}^{k}_{10} + {\hat \iota}^{k\T}_{10,X} X)} } ,~~
    \Pr(G_i = 11 \mid -) = \frac{1}{{1+\exp({\hat \iota}^{k}_{10} + {\hat \iota}^{k\T}_{10,X} X)} } .
\]

If $(Z_i = 0, S_i = 0)$:
\[
\Pr(G_i = 1 0\mid -) = \frac{{\exp({\hat \iota}^{k}_{01} + {\hat \iota}^{k}_{01,X}  X)}}{{1+\exp({\hat \iota}^{k}_{01} + {\hat \iota}^{k}_{01,X}  X)} } ,~~
    \Pr(G_i = 11 \mid -) = \frac{1}{{1+\exp({\hat \iota}^{k}_{01} + {\hat \iota}^{k}_{01,X}  X)} } .
\]
Using these probabilities, we construct a weighted pseudo-dataset. For each individual $i$, two pseudo-observations $(G_i = s_0s_1, X_i)$ are generated with the conditional posterior probabilities.

In the M-step, we update the parameters $(\hat{\iota}_0^{k+1}, \hat{\iota}_1^{k+1})$ by fitting two logistic regression models using the weighted samples obtained in the E-step:
$$
\Pr(G = 0s_1 \mid Z = 0, S = 0, X) \propto \exp\left\{ s_1(\iota_{01} + \iota_{01,X}^\T X) \right\},
$$
$$
\Pr(G = s_00 \mid Z = 1, S = 0, X) \propto \exp\left\{ s_0(\iota_{10} + \iota_{10,X}^\T X) \right\}.
$$

\section{Additional discussion about convergence rate}
\label{discussion-convengenc}
In this section, we provide further discussion on the convergence rate, particularly focusing on Assumption \ref{rate-cond}. We expand on the implications of this assumption for the consistency of the estimators and the overall estimation process. 
In Step 1 of the estimation procedure, $\hat{\pi}_{s_0s_1}(X)$ under the first estimation approach typically attains the rate $O_p(n^{-1/2})$ when $\hat{e}_{s_0s_1}(X)$ is correctly specified parametrically, and can also achieve this rate under nonparametric estimation given sufficient regularity \citep{Hirano2003,Zhang2016np}. Under the second estimation approach of Step 1, correct parametric specification of $\pi_{s_0s_1}(X)$ is required to achieve the $O_p(n^{-1/2})$ rate. As outlined in Step 2, the convergence rate of the outcome model $\hat{{\mathcal{L}}}_{z,s_0 s_1}(X)$ depends on that of  $\hat{\pi}_{s_0s_1}(X)$. To ensure that the overall estimator achieves the desired convergence rate of $O_p(n^{-1/2})$, it is necessary for $\hat{{\mathcal{L}}}_{z,s_0 s_1}(X)$ to converge at least at the same rate. This can be guaranteed under correct parametric specification. Under stronger smoothness assumptions and regularity conditions (e.g., those in \citet{AiChen2003,chen2012estimation,ChenPouzo2015}), nonparametric approaches may also yield comparable rates. %\bcol{[This paragraph is too long; add a remark or relegated it into the appendix.]}
\section{Additional results for simulation studies}
\label{section:linear-model-simulation}
{In this section, we provide additional simulation studies to supplement Section~\ref{sec:experiments}. All settings remain the same as in the main text, except that the outcome model \eqref{eq:exp-models} is replaced with a linear specification.

The baseline covariates $X=\{A,C\}$ are generated from a multivariate normal distribution: $A \sim N(-0.25, 1)$ and $C \sim N(0.25, 1)$. The treatment variable $Z$ is generated according to a logistic model:
\[
\Pr(Z=1 \mid A, C) = \mathrm{expit}(0.15 - \delta A - \delta C),
\]
where $\delta \in \{0, 1\}$ controls the treatment assignment mechanism. When $\delta = 1$, treatment assignment depends on covariates $A$ and $C$; when $\delta = 0$, treatment is randomized.  We specify a multinomial logistic model for the true strata $G \in \{\NN, \NP, \PN, \PP\}$, with $G=\NN$ as the baseline category. The corresponding probabilities are:
\[
\begin{gathered}
\Pr(G = \NN \mid X) \propto 1, \quad \Pr(G = \NP \mid X) \propto \exp(-0.3 - A + 0.5C), \\
\Pr(G = \PN \mid X) \propto \exp(0.4 + A - C), \quad
\Pr(G = \PP \mid X) \propto \exp(0.1 - 0.5C + \eta),
\end{gathered}
\]
where $\eta$ is a parameter varied across scenarios. The observed intermediate variable $S$ is then determined by the consistency assumption.  The potential outcomes $Y_0$ and $Y_1$ are generated from normal distributions with stratum-specific mean functions. Let $\varepsilon_0, \varepsilon_1 \stackrel{iid}{\sim} N(0, 1)$. The potential outcomes are specified as:
\begin{itemize}
\item For $G = \NN$: $Y_0 = Y_1 = 0$;

\item For $G = \NP$: $Y_0 = 5.3 - 1.1C + 1.5A - 1.2C^2 + \varepsilon_0$, $Y_1 = 0$;

\item For $G = \PN$: $Y_0 = 0$, $Y_1 = 7.0 + 1.15C - 1.25A + 1.15C^2 + \varepsilon_1$;

\item For $G = \PP$: $Y_0 = 6 + 1.2C + 1.4A + 1.4C^2 + \varepsilon_0$, 
$Y_1 = 6.5 + 1.2C + 1.4A - 1.25C^2 + \varepsilon_1$.
\end{itemize}
The observed outcome $Y$ is then determined by the consistency assumption: $Y = ZY_1 + (1-Z)Y_0$.
 \subsection{Simulation Results: Principal Causal Effects}

\begin{table}[t]
\caption{Additional simulation results for causal estimands ${\mathcal{L}}_{1,\AT}$, ${\mathcal{L}}_{1,\PN}$, ${\mathcal{L}}_{0,\AT}$, and ${\mathcal{L}}_{0,\NP}$, with bias (\(\times 100\)) and standard error (\(\times 100\)) reported.}
\label{tab: result-simulation-linear-2} 
\centering
\resizebox{0.9965878\columnwidth}{!}{%   
\begin{tabular}{ccccccccccc}
\toprule
                        &  & \multicolumn{4}{c}{$\delta=0,\eta=0$}                     &  & \multicolumn{4}{c}{$\delta=0,\eta=0.25$}                  \\ \cline{3-6} \cline{8-11} 
Sample Size             &  & $n=$ 500     & $n=$ 2000    & $n=$ 5000    & $n=$ 20000   &  & $n=$ 500     & $n=$ 2000    & $n=$ 5000    & $n=$ 20000   \\
${\mathcal{L}}_{1,\AT}$ &  & -0.03 (5.84) & 0.13 (2.58)  & 0.11 (1.53)  & 0.04 (0.75)  &  & 0.3 (4.6)    & 0.02 (2.04)  & 0.02 (1.22)  & 0.03 (0.6)   \\
${\mathcal{L}}_{1,\PN}$ &  & 0.04 (3.88)  & -0.08 (1.7)  & -0.06 (1)    & -0.02 (0.49) &  & -0.24 (3.89) & 0.00 (1.73)  & -0.02 (1.03) & -0.02 (0.51) \\
${\mathcal{L}}_{0,\AT}$ &  & 0.4 (4.05)   & 0.07 (1.94)  & 0.06 (1.16)  & -0.08 (0.59) &  & 0.15 (3.64)  & -0.09 (1.76) & -0.1 (0.96)  & -0.05 (0.54) \\
${\mathcal{L}}_{0,\NP}$ &  & -0.22 (2.48) & -0.05 (1.23) & -0.03 (0.73) & 0.04 (0.37)  &  & -0.09 (2.81) & 0.07 (1.38)  & 0.08 (0.77)  & 0.03 (0.42)  \\ \toprule
                        &  & \multicolumn{4}{c}{$\delta=0.25,\eta=0$}                  &  & \multicolumn{4}{c}{$\delta=0.25,\eta=0.25$}               \\ \cline{3-6} \cline{8-11} 
Sample Size             &  & $n=$ 500     & $n=$ 2000    & $n=$ 5000    & $n=$ 20000   &  & $n=$ 500     & $n=$ 2000    & $n=$ 5000    & $n=$ 20000   \\
${\mathcal{L}}_{1,\AT}$ &  & -0.16 (5.83) & -0.17 (2.54) & 0.07 (1.63)  & -0.06 (0.75) &  & 0.04 (4.83)  & -0.15 (2.19) & 0.04 (1.31)  & -0.05 (0.6)  \\
${\mathcal{L}}_{1,\PN}$ &  & 0.13 (3.87)  & 0.12 (1.66)  & -0.04 (1.09) & 0.04 (0.49)  &  & -0.02 (4.07) & 0.13 (1.83)  & -0.03 (1.11) & 0.05 (0.51)  \\
${\mathcal{L}}_{0,\AT}$ &  & 0.46 (4.5)   & 0.22 (2.16)  & -0.14 (1.25) & -0.05 (0.65) &  & 0.49 (4.08)  & 0.05 (1.9)   & -0.07 (1.09) & -0.04 (0.57) \\
${\mathcal{L}}_{0,\NP}$ &  & -0.28 (2.77) & -0.13 (1.33) & 0.07 (0.77)  & 0.02 (0.39)  &  & -0.39 (3.17) & -0.03 (1.47) & 0.05 (0.86)  & 0.02 (0.44)  \\ \toprule
\end{tabular}
}
\end{table}

Table~\ref{tab: result-simulation-linear-2} presents additional simulation results for the marginal expected outcomes ${\mathcal{L}}_{z,s_0s_1}$ under a linear outcome model specification, where $(z, s_0s_1) \in \{(1,\AT), (1,\PN), (0,\AT), (0,\NP)\}$.  As the sample size increases from $n = 500$ to $n = 20000$, the bias consistently converges toward zero and the standard error decreases substantially for all four causal estimands.  
\subsection{{Simulation Results:} Strata Classification Accuracy}
\label{sec:case1-classification}

Since the true strata $G$ are observable in simulation studies, we compare the strata classification accuracy between the proposed method and the posterior mode rule.  Figure~\ref{fig:case1-prop-plots} compares the strata  classification accuracy between the proposed method (\texttt{prop}, shown in blue) and the posterior mode approach (\texttt{post}, shown in red) under various combinations of parameters $(\eta, \delta)$, with a fixed sample size of $n = 20000$.  

\begin{figure}[t]
    \centering
    \includegraphics[width=0.9499\linewidth]{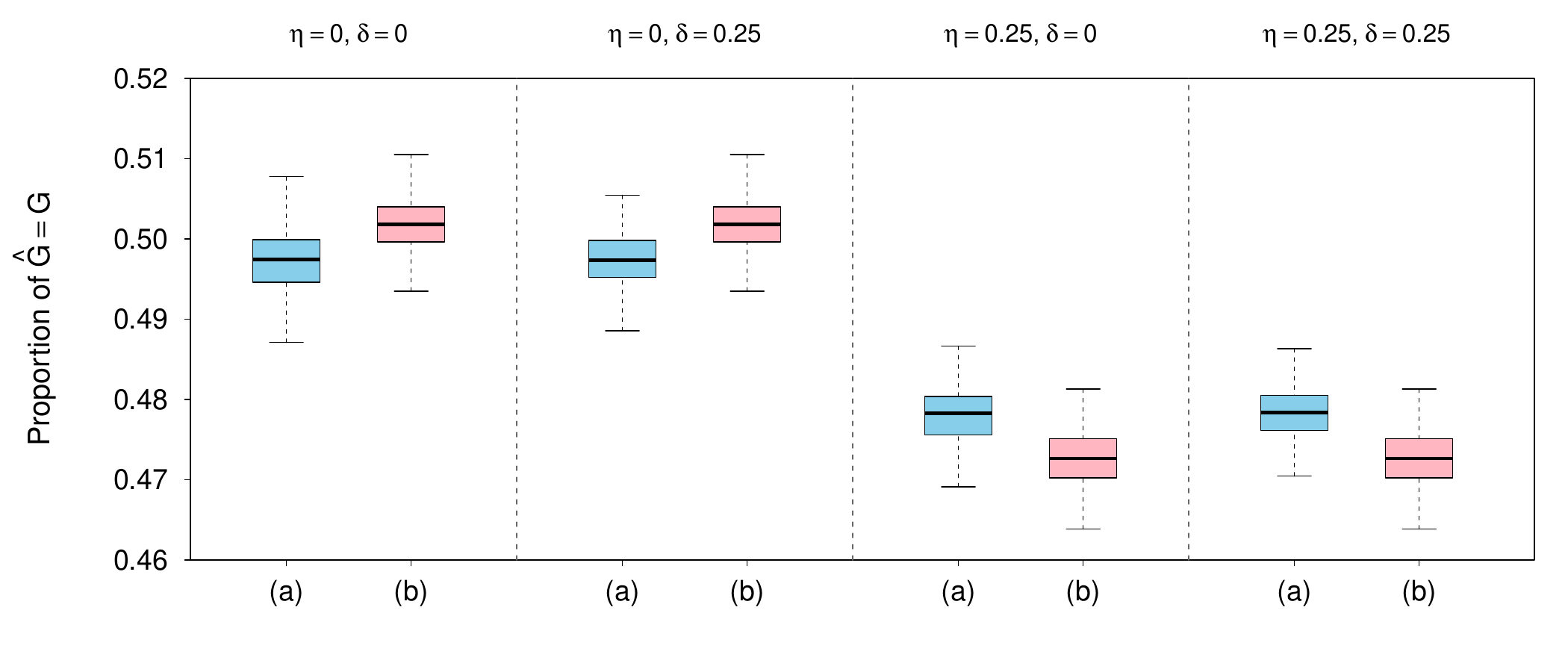}
\caption{Comparison of causal strata classification accuracy under different parameters $(\eta, \delta)$. Blue boxes represent the proposed method, and red boxes represent the posterior mode method.}
    \label{fig:case1-prop-plots}
\end{figure}

\subsection{{Simulation Results:} Policy Revenue}
\label{sec:case1-revenue}

\begin{figure}[t]
    \centering
    \includegraphics[width=0.9499\linewidth]{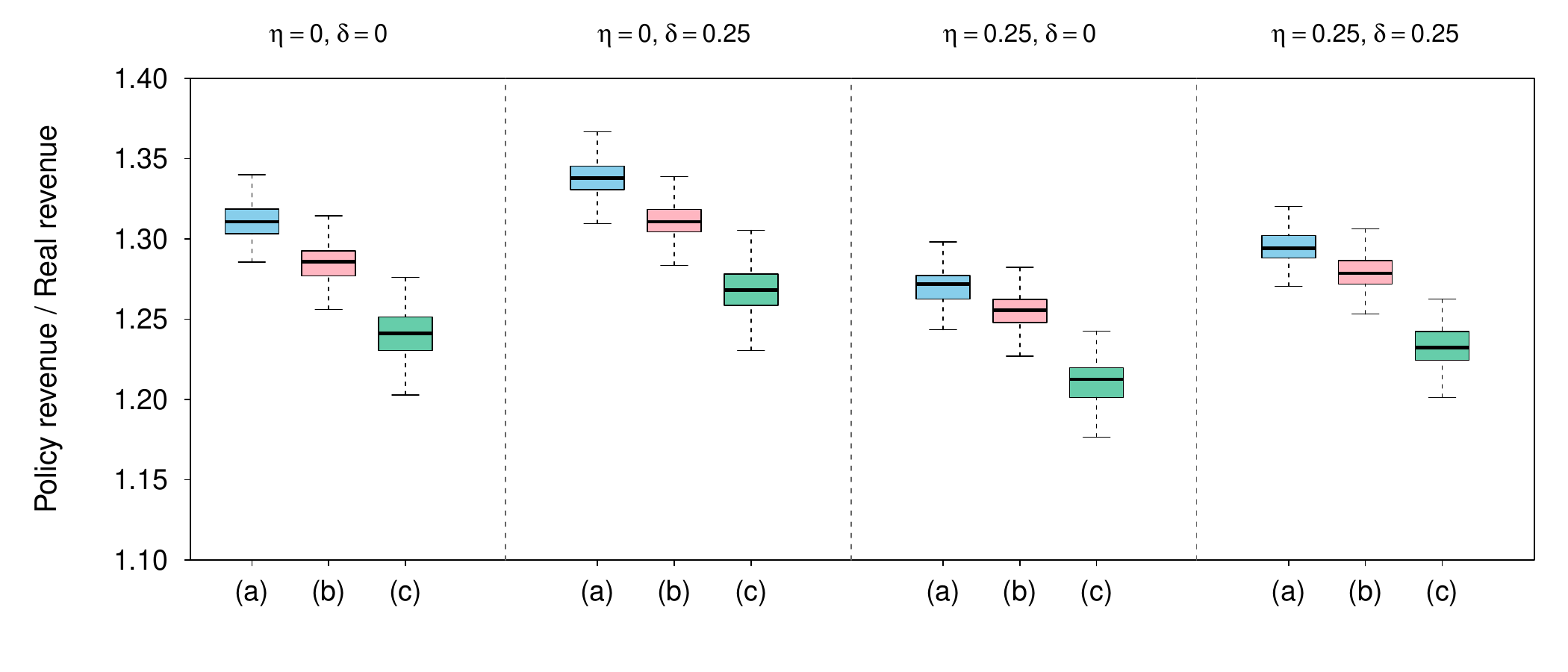}
\caption{Comparison of policy revenue under different parameter configurations ($\eta$, $\delta$). Red boxes represent the proposed method, blue boxes represent the posterior mode method, and green boxes represent the direct optimization method.}
    \label{fig:case1-ER-plots}
\end{figure}

Building upon Section~\ref{sec:case1-classification}, we now evaluate policy learning performance by comparing three approaches: the proposed pseudo-strata learning method (\texttt{prop}), the posterior mode rule (\texttt{post}), and the direct optimization method (\texttt{direct}). The first two follow a two-stage procedure of learning strata then assigning treatments, while the third directly optimizes the policy value function using estimated $\hat{\pi}_{s_0s_1}(X)$ and $\hat{\mathcal{L}}_{z,s_0s_1}(X)$. Throughout all procedures, we set $c_0(X) = c_1(X) \equiv 0$. 
Figure~\ref{fig:case1-ER-plots} compares policy revenue across these three methods under various parameter configurations $(\eta, \delta)$ with sample size $n = 20000$. 

\section{Additional results for real data analysis}
% \subsection{Estimation results: Heterogeneous causal effects within different types }
  \label{sec:addition-results-application}
Tables \ref{tab: result-simulation-real-SM-1} and \ref{tab: result-simulation-real-SM-2} provide more detailed data analysis results in Section \ref{app:Point estimation} of the main text.
  \begin{table}[t]
\caption{Simulation studies with bias ($\times 100$), standard error ($\times 100$), and 95\%  confidence interval for the causal estimands.  }
 \label{tab: result-simulation-real-SM-1} 
\centering
\resizebox{0.88049965878\columnwidth}{!}{%   
\begin{tabular}{ccccccccc}
\toprule\addlinespace[1mm] 
       &  & \multicolumn{3}{c}{${\mathcal{L}}_{1, \AT}$} &  & \multicolumn{3}{c}{${\mathcal{L}}_{1, \PN}$} \\ \addlinespace[1mm]  \cline{3-5} \cline{7-9} \addlinespace[1mm] 
$\eta$ &  & Point estimate   & SE     & 95\% Confidence Interval              &  & Point estimate  & SE     & 95\% Confidence Interval               \\ \cline{1-1} \cline{3-5} \cline{7-9} \addlinespace[1mm] 
0.00   &  & 143.44           & 58.03  & (29.70, 257.18)   &  & 84.94           & 12.38  & (60.68, 109.20)    \\
0.25   &  & 137.00           & 59.86  & (19.68, 254.32)  &  & 86.42           & 14.19  & (58.61, 114.23)   \\
0.50   &  & 125.96           & 50.72  & (26.54, 225.38)  &  & 89.35           & 14.82  & (60.29, 118.40)    \\
0.75   &  & 131.09           & 50.02  & (33.04, 229.13)  &  & 92.99           & 23.15  & (47.62, 138.36)   \\
1.00   &  & 122.96           & 48.64  & (27.63, 218.30)   &  & 91.76           & 28.17  & (36.55, 146.98)   \\
1.25   &  & 114.70           & 38.17  & (39.89, 189.52)  &  & 86.96           & 29.96  & (28.23, 145.69)   \\
1.50   &  & 119.71           & 30.11  & (60.69, 178.73)  &  & 82.16           & 44.91  & (-5.86, 170.19)   \\
1.75   &  & 115.45           & 23.88  & (68.65, 162.26)  &  & 77.96           & 42.80  & (-5.93, 161.86)   \\
2.00   &  & 105.86           & 18.91  & (68.80, 142.93)   &  & 85.09           & 51.88  & (-16.59, 186.77)  \\
2.25   &  & 98.60            & 15.83  & (67.58, 129.62)  &  & 90.81           & 62.16  & (-31.02, 212.64)  \\
2.50   &  & 91.91            & 13.96  & (64.55, 119.27)  &  & 99.91           & 59.49  & (-16.69, 216.52)  \\
2.75   &  & 87.11            & 12.11  & (63.38, 110.85)  &  & 108.42          & 63.29  & (-15.62, 232.46)  \\
3.00   &  & 82.05            & 11.26  & (59.98, 104.11)  &  & 109.19          & 62.29  & (-12.89, 231.27)  \\
3.25   &  & 78.06            & 11.15  & (56.20, 99.92)    &  & 126.50          & 61.60  & (5.77, 247.23)    \\
3.50   &  & 75.38            & 10.86  & (54.09, 96.67)   &  & 141.58          & 64.40  & (15.36, 267.79)   \\
3.75   &  & 73.54            & 10.80  & (52.37, 94.71)   &  & 139.32          & 67.81  & (6.42, 272.22)    \\
4.00   &  & 73.86            & 8.47   & (57.26, 90.47)   &  & 138.99          & 59.03  & (23.29, 254.69)  \\\toprule
\end{tabular}
}
\end{table}  

  \begin{table}[t]
\caption{Simulation studies with bias ($\times 100$), standard error ($\times 100$), and 95\%  confidence interval for the causal estimands.  }
 \label{tab: result-simulation-real-SM-2} 
\centering
\resizebox{0.88049965878\columnwidth}{!}{%   
\begin{tabular}{ccccccccc}
\toprule\addlinespace[1mm] 
       &  & \multicolumn{3}{c}{${\mathcal{L}}_{0, \AT}$} &  & \multicolumn{3}{c}{${\mathcal{L}}_{0, \NP}$} \\ \addlinespace[1mm]  \cline{3-5} \cline{7-9}\addlinespace[1mm] 
$\eta$ &  & Point estimate   & SE     & 95\% Confidence Interval              &  & Point estimate  & SE     & 95\% Confidence Interval               \\  \cline{1-1} \cline{3-5} \cline{7-9} \addlinespace[1mm] 
0.00                 &  & 155.17   & 54.92       & (47.53, 262.80)  &  & 105.26   & 23.96       & (58.29, 152.22) \\
0.25                 &  & 148.44   & 61.98       & (26.94, 269.93) &  & 107.23   & 27.81       & (52.72, 161.75) \\
0.50                 &  & 139.22   & 58.58       & (24.40, 254.04)  &  & 119.43   & 33.59       & (53.59, 185.27) \\
0.75                 &  & 135.16   & 60.76       & (16.08, 254.24) &  & 131.54   & 39.35       & (54.42, 208.67) \\
1.00                 &  & 130.80   & 62.42       & (8.45, 253.15)  &  & 143.71   & 40.62       & (64.10, 223.32)  \\
1.25                 &  & 128.77   & 56.62       & (17.79, 239.76) &  & 144.17   & 46.43       & (53.17, 235.17) \\
1.50                 &  & 123.73   & 46.91       & (31.78, 215.68) &  & 149.24   & 56.22       & (39.06, 259.42) \\
1.75                 &  & 115.10   & 40.10       & (36.52, 193.69) &  & 153.02   & 55.99       & (43.28, 262.75) \\
2.00                 &  & 119.70   & 42.58       & (36.24, 203.16) &  & 149.70   & 61.96       & (28.26, 271.14) \\
2.25                 &  & 114.68   & 37.10       & (41.95, 187.40)  &  & 154.34   & 62.71       & (31.42, 277.26) \\
2.50                 &  & 104.01   & 37.09       & (31.32, 176.70)  &  & 162.19   & 54.55       & (55.27, 269.12) \\
2.75                 &  & 107.66   & 35.07       & (38.92, 176.40)  &  & 165.04   & 57.00       & (53.33, 276.75) \\
3.00                 &  & 104.59   & 30.57       & (44.66, 164.51) &  & 168.11   & 53.69       & (62.88, 273.34) \\
3.25                 &  & 99.45    & 24.95       & (50.55, 148.35) &  & 175.86   & 57.61       & (62.94, 288.78) \\
3.50                 &  & 103.48   & 23.86       & (56.72, 150.25) &  & 165.62   & 55.82       & (56.22, 275.02) \\
3.75                 &  & 105.85   & 25.27       & (56.33, 155.38) &  & 156.37   & 65.76       & (27.49, 285.26) \\
4.00                 &  & 108.06   & 23.23       & (62.53, 153.60)  &  & 172.74   & 59.92       & (55.30, 290.19)  \\\toprule
\end{tabular}
}
\end{table}  
% \subsection{Estimation results: Type proportions}
% \label{sec:type-prevalence-sensitivity}
%  \begin{figure}
%     \centering
%     \includegraphics[width=0.9985\linewidth]{figures/prob_plot.pdf}
%     \caption{Estimated type proportions $\pr(G={s_0s_1})$ for $s_0s_1\in\{00,01,10,11\}$ under varying sensitivity parameter $\eta$.}
%     \label{fig:type-prevalence}
% \end{figure}
% To complement the outcome analysis, Figure~\ref{fig:type-prevalence} shows how the estimated type proportions $\pi_{s_0s_1}$ vary with $\eta$. As positive dependence strengthens ($\eta$ increases from 0 to 4), the proportion of always buyers $\pi_{11}$ rises from 5\% to over 60\%, while never buyers $\pi_{00}$ decline from 70\% to 30\%. The persuadable $\pi_{01}$ and discouraged $\pi_{10}$ subgroups remain small (5--15\%) but show opposite trends: persuadables slightly increase while discouraged decrease. These patterns align with theoretical expectations: positive association between $S_0$ and $S_1$ pushes individuals toward symmetric types (always respond or never respond). The narrow confidence intervals across all $\eta$ values indicate that the estimation remains stable despite varying dependence assumptions.

\end{document}